\documentclass{llncs}
\usepackage{mathptmx}
\usepackage{latexsym}
\usepackage{times}
\usepackage{amsmath}
\usepackage{amssymb}

\usepackage{color,graphicx,subfig}
\usepackage{epic}
\usepackage{fancybox,graphpap}
\usepackage{pb-diagram,pb-xy}
\usepackage[all]{xy}
\usepackage{bbm}
\usepackage{array}
\usepackage{algorithmic}
\usepackage{hyperref}

\begin{document}

\newcommand{\END}{\qed}
\newcommand{\LT}{\mathcal{L}}
\newcommand{\eqa}{\thickapprox}
\newcommand{\eqb}{\equiv}
\newcommand{\PS}[1]{\wp(#1)}
\newcommand{\PSB}[1]{\wp^{\setminus \set{\emptyset}}(#1)}

\newcommand{\h}[1]{\overline{#1}}
\newcommand{\set}[1]{\{#1\}}
\newcommand{\bset}[1]{\bigl\{#1\bigr\}}
\newcommand{\Bset}[1]{\Bigl\{#1\Bigr\}}
\newcommand{\Al}{\biguplus}
\newcommand{\RC}{\div_R}
\newcommand{\LC}{\div_L}
\newcommand{\wei}{\textit{weight}}
\newcommand{\LG}[1]{\langle #1 \rangle}
\newcommand{\E}{\mathbb{S}}
\newcommand{\EC}{\widehat{\mathcal{S}}}
\newcommand{\PO}{\prec}
\newcommand{\st}{\mathsf{st}}
\newcommand{\com}{<\!\!>}
\newcommand{\df}{\triangleq}
\newcommand{\iffdf}{\stackrel{\textit{\scriptsize{df}}}{\iff}\ }
\newcommand{\calf}[1]{\mathcal{#1}}
\newcommand{\sq}{\sqsubset}
\newcommand{\ccl}{\;\bowtie}
\newcommand{\todo}[1]{ \textcolor{red}{TODO: #1}}
\newcommand{\tcomment}[1]{\text{\hspace*{2mm}$\langle$~\parbox[t]{\textwidth}{ #1 $\rangle$}}}
\newcommand{\ttcomment}[1]{\text{$\langle$~#1~$\rangle$}}
\newcommand{\sym}[1]{{#1}^{\mathsf{sym}\;}}
\newcommand{\imm}[1]{{#1}^{\mathsf{cov}\;}}

\newcommand{\TR}{\mathsf{TR}}
\newcommand{\LCT}{\mathsf{LCT}}
\newcommand{\DCD}{\mathsf{CDG}}
\newcommand{\andspace}{\hspace*{3mm}\text{ and }\hspace*{3mm}}
\newcommand{\myspace}[1]{\mbox{\hspace{#1cm}}}
\newcommand{\map}{\mathsf{map}}
\newcommand{\bt}{\mathsf{bt}}
\newcommand{\cl}{\mathsf{ct2lct}}
\newcommand{\lc}{\mathsf{lct2ct}}
\newcommand{\cd}{\mathsf{ct2dep}}
\newcommand{\dl}{\mathsf{dep2lct}}
\newcommand{\ld}{\mathsf{lct2dep}}

\newcommand{\lhdf}{\lhd^\frown}
\newcommand{\flhd}{\frown_\lhd}
\newcommand{\lex}[1]{\,{#1}^{\textit{lex}}\,}
\newcommand{\optord}{\,<^{\textit{opt}}\,}
\newcommand{\stor}[1]{\,{#1}^{\textit{st}}\,}
\newcommand{\reco}[1]{ {#1}^{\,{\textsf{C}}} }
\newcommand{\si}[1]{(#1)^\Cap}
\newcommand{\CT}[1]{\mathfrak{C}(#1)}
\newcommand{\GCT}[1]{\mathfrak{gC}(#1)}
\newcommand{\eqrel}[1]{\equiv_{#1}}
\newcommand{\quotient}[2]{{#1}/\!{#2}}
\newcommand{\FEC}{\mathbb{C}\mathit{at}_{\mathsf{FE}}}
\newcommand{\cau}{\longrightarrow}
\newcommand{\wcau}{\dashrightarrow}
\newcommand{\bcau}{\rightsquigarrow}

\newcommand{\seq}[1]{\left[   #1 \right] }

\newcommand{\It}[1]{\mathit{#1}}
\newcommand{\defref}[1]{Definition~\ref{def:#1}}
\newcommand{\theoref}[1]{Theorem~\ref{theo:#1}}
\newcommand{\propref}[1]{Proposition~\ref{prop:#1}}
\newcommand{\corref}[1]{Corollary~\ref{cor:#1}}
\newcommand{\lemref}[1]{Lemma~\ref{lem:#1}}
\newcommand{\exref}[1]{Example~\ref{ex:#1}}
\newcommand{\reref}[1]{Remark~\ref{re:#1}}
\newcommand{\eref}[1]{\eqref{eq:#1}}
\newcommand{\figref}[1]{Figure~\ref{fig:#1}}

\newcommand{\secref}[1]{Section~\ref{sec:#1}}
\newcommand{\eqnref}[1]{Eq.~(\ref{eq:#1})}
\newcommand{\TCT}{\stackrel{\mathsf{t}\leftrightsquigarrow \mathsf{c}}{\equiv}}

\newcommand{\EOD}{\hfill {\scriptsize $\blacksquare$}}


\numberwithin{equation}{section}

\title{A Characterization of Combined Traces Using  Labeled Stratified Order Structures}

\author{Dai Tri Man L\^e}
\institute{Department of Computer Science, University of Toronto\\
10 King's College Road, Toronto, ON, M5S 3G4 Canada\\
\email{ledt@cs.toronto.edu}}

\maketitle
\pagestyle{headings} 
\begin{abstract}
This paper defines a class of labeled stratified order structures that characterizes exactly the notion of \emph{com}bined \emph{traces} (i.e., \emph{comtraces}) proposed by Janicki and Koutny in 1995. Our main technical contributions are the representation theorems showing that comtrace quotient monoid,  \emph{combined dependency graph} (Kleijn and Koutny 2008) and our labeled stratified order structure characterization are three different and yet equivalent ways to represent comtraces.\\
\textbf{Keywords. } causality theory of concurrency, combined traces monoids, step sequences, stratified order structures, label-preserving isomorphism.
\end{abstract}

\section{Introduction}
Partial orders are a principle tool for modelling ``true concurrency'' semantics of concurrent systems (cf. \cite{Pra}). They are utilized to develop powerful partial-order based automatic verification techniques, e.g.,  \emph{partial order reduction} for model checking concurrent software (see, e.g., \cite[Chapter 10]{CGP} and \cite{EH}). Partial orders are also equipped with \emph{traces}, their powerful formal language counterpart,  proposed by Mazurkiewicz \cite{Ma1}. In  \emph{The Book of Traces} \cite{DR}, trace theory has been used to tackle problems from diverse areas including formal language theory, combinatorics, graph theory, algebra, logic, and \emph{concurrency theory}.

However, while partial orders and traces can sufficiently model the ``earlier than" relationship, Janicki and Koutny argued that it is problematic to use a single partial order to specify both  the ``earlier than" and  the ``not later than" relationships \cite{J4}. This motivates them to develop the theory of \emph{relational structures}, where a pair of relations is used to capture concurrent behaviors.  The most well-known among the classes of relational structures  proposed by Janicki and Koutny is the class of \emph{stratified order structures} (\emph{so-structures}) \cite{GP,JK0,JK95,JK97,J0}. A so-structure is a triple $(X,\prec,\sqsubset)$, where $\prec$ and $\sqsubset$ are binary relations on $X$. They were invented to model both the ``earlier than" (the relation $\prec$) and ``not later than" (the relation $\sqsubset$) relationships, under the assumption that system runs are described by \emph{stratified partial orders}, i.e., step sequences. They have been successfully applied to model inhibitor and priority systems, asynchronous races, synthesis problems, etc. (see for example \cite{JK95,JK99,JLM06,JLM08,KK,KK08} and others). 

The \emph{com}bined \emph{trace} (\emph{comtrace}) notion, introduced by Janicki and Koutny \cite{JK95}, generalizes the trace notion by utilizing step sequences instead of words. First the set of all possible steps that generates step sequences are identified by a relation $sim$, which is called {\em simultaneity}. Second a congruence relation is determined by a relation $ser$, which is called {\em serializability} and in general \emph{not} symmetric. Then a comtrace is defined as a finite set of congruent step sequences. Comtraces were introduced as a formal language representation of so-structures to provide an operational semantics for  Petri nets with inhibitor arcs. Unfortunately, comtraces have been less often known and applied than so-structures, even though in many cases they appear to be more natural. We believe one  reason is that the comtrace notion was too succinctly discussed in \cite{JK95} without a full treatment dedicated to comtrace theory. Motivated by this, Janicki and the author have devoted our recent effort on the study of comtraces \cite{JL08,Le,JL09}, yet there are too many different aspects to explore and the truth is we can barely scratch the surface. In particular, the huge amount of results from  trace theory (e.g., from \cite{DR,DM}) desperately needs to be generalized to comtraces. These  tasks are often non-trivial since we are required to develop intuition and novel techniques to deal with the complex interactions of the ``earlier than'' and ``not later than'' relations.

This paper gives a novel characterization of comtraces using labeled so-structures. Such definition is interesting for the following reasons. 

First, it defines exactly the class of labeled so-structures that can be represented by comtraces. It is worth noting that this point is particularly important. Even though it was shown in \cite{JK95} that every comtrace can be represented by a labeled so-structure, the converse could not be shown because a class of labeled so-structures that defines precisely the class of  comtraces was not known. The closest to our characterization is  the \emph{combined dependency graph} (\emph{cd-graph}) notion (analogous to \emph{dependence graph} representation of traces) introduced recently by Kleijn and Koutny \cite{KK08}, but again a theorem showing that combined dependency graphs can be represented by comtraces was not given. Our approach is quite different and based on some new ideas discussed in Section 4 of this paper. 

Second, even though the step sequence definition of comtraces is more suitable when dealing with formal language aspects of comtraces, the labeled so-structure representation is more suitable for a variety of powerful order-theoretic results and techniques available to us (cf. \cite{Fis,DP02,J0}).

Finally, the labeled so-structure definition of comtrace can be easily extended to \emph{infinite comtraces}, which describe nonterminating concurrent processes. The labeled poset representation of infinite traces is already successfully applied in both theory and practice, e.g.,  \cite{TW02,FM06,FM07,GGH09}. Although such definition is equivalent to  the one using quotient monoid over infinite words \cite{Gas90,Die91}, we believe that infinite labeled posets are sometimes simpler. Indeed the celebrated work by Thiagarajan and Walukiewicz (cf. \cite{TW02}) on linear temporal logic for traces utilizes the labeled poset characterization of infinite traces, where \emph{configurations} of a trace are conveniently defined as \emph{finite downward closed subsets} of the labeled poset representation. We will not analyze infinite comtraces or logics for comtraces in this paper, but these are fruitful directions to explore using the results from this paper.

The paper is organized as follows. In Section 2, we recall some preliminary definitions and notations. In Section 3, we give a concise exposition of the theory of so-structures and comtraces by Janicki and  Koutny \cite{JK95,JK97}. In Section 4, we give our definition of comtraces using labeled so-structure and some remarks on how we arrived  at such definition. In Section 5, we prove a representation theorem showing that our comtrace definition and the one by Janicki and Koutny are indeed equivalent; then using this theorem, we prove another representation theorem showing that our definition is also equivalent to the cd-graph definition from \cite{KK08}. In Section 6, we define  \emph{composition} operators for our comtrace representation and for cd-graphs. Finally, in Section 7, some final remarks and future works are presented.

\section{Notations \label{sec:background}}

\subsection{Relations, Orders and Equivalences}

The \emph{powerset} of a set $X$ will be denoted by $\PS{X}$, i.e.
$\PS{X}\df\set{Y \mid Y\subseteq X}$. The set of all {\em non-empty} subsets of $X$
will be denoted by $\PSB{X}$. In other words,
$\PSB{X}\df\PS{X}\setminus\set{\emptyset}.$ 

We let $\mathit{id}_X$ denote the \emph{identity relation} on a set $X$. If $R$ and $S$ are binary relations on a set $X$ (i.e., $R,S\subseteq X\times X$), then their \emph{composition} $R\circ S$ is defined as $R\circ S\df\set{(x,y)\in X\times X\mid \exists z\in X.\;(x,z)\in R\wedge (z,y)\in S}$. We also define
\begin{align*}
R^{0}&\df id_X & R^{i}&\df R^{i-1}\circ R\hspace*{3mm} (\text{for }i\ge 1) &R^{+}&\df \bigcup_{i\ge 1}R^i	&R^{*}&\df \bigcup_{i\ge 0}R^i
\end{align*}
The relations $R^{+}$ and $R^{*}$ are called the \emph{(irreflexive) transitive closure} and \emph{reflexive transitive closure} of $R$ respectively.

A binary relation $R\subseteq X\times X$ is an \emph{equivalence relation} relation on $X$ if and only if (iff) $R$ is {\em reflexive},  {\em symmetric} and {\em transitive}. If $R$ is an equivalence relation, then for every $x\in X$, the set $[x]_R \df \set{y\;|\;y\;R\;x \wedge y\in X}$ is the equivalence class of $x$ with respect to $R$. We also define $X/R\df\set{[x]_R\mid x\in X}$, i.e., the set of all equivalence classes of $X$ under $R$. We drop the subscript and write $[x]$ when $R$ is clear from the context. 

A binary relation $\prec \;\subseteq X \times X$ is a
{\em partial order} iff $R$ is {\em irreflexive} and {\em transitive}.
The pair $(X,\prec)$ in this case is called a \emph{partially ordered set} (\emph{poset}). The pair $(X,\prec)$ is called a \emph{finite poset} if $X$ is finite. For convenience, we define:
\begin{align*}
\simeq_\prec  &\df  \bset{(a,b)\in X\times X\mid a\not\prec b \;\wedge\; b\not\prec a}& \text{\emph{(incomparable)}}\\
\frown_\prec  &\df \bset{(a,b)\in X\times X\mid  a \simeq_\prec b \;\wedge\; a\neq b} &
\text{\emph{(distinctly incomparable)}}\\
\prec^\frown  &\df \bset{(a,b)\in X\times X\mid a \prec b \;\vee\; a\frown_\prec b} & \text{\emph{(not greater)}}
\end{align*}

A poset $(X,\prec)$ is {\em total} iff $\frown_\prec$ is empty; and {\em stratified} iff $\simeq_\prec$ is an equivalence relation. Evidently every total order is stratified.

\subsection{Step Sequences\label{sec:steps}}
For every finite set $X$, a set $\E \subseteq \PSB{X}$ can be seen as an alphabet. 
The elements of $\E$ are called {\em steps} and
the elements of $\E^*$ are called {\em step sequences}. For example, if the set of possible steps is 
$\E = \bset{ \{a,b,c\}, \{a,b\}, \{a\}, \{c\} }$, then
$\{a,b\}\{c\}\{a,b,c\} \in \E^*$ is a step sequence.
The triple $(\E^*,\ast,\epsilon)$, where $\_\,\ast\,\_$ denotes the step sequence concatenation operator (usually omitted) and $\epsilon$ denotes the empty step sequence, is a monoid.

Let $t=A_1\ldots A_k$ be a step sequence. We define  $|t|_a$, the number of occurrences of an event $a$ in $w$, as $|t|_a \df \bigl\lvert\bset{A_i\mid 1\le i \le k \wedge a\in A_i}\bigr\rvert$, where $|X|$ denotes the cardinality of the set $X$.  
Then we can construct its unique \emph{enumerated step sequence} $\h{t}$ as\\
\mbox{\hspace{2.5cm}}$\h{t}\df \h{A_1}\ldots \h{A_k}\text{, where } \h{A_i} \df \Bset{e^{(|A_1\ldots A_{i-1}|_e+1)}\bigl\lvert e\in A_i\bigr.}.$\\
We will call such $\alpha=e^{(j)}\in \h{A_i} $ an \emph{event occurrence} of $e$. For instance, if we let  $t= \set{a,b}\set{b,c}\set{c,a}\set{a}$,
then  $\overline{t}=\bset{a^{(1)},b^{(1)}}\bset{b^{(2)},c^{(1)}}\bset{a^{(2)},c^{(2)}}\bset{a^{(3)}}.$ 

We let $\Sigma_t=\bigcup_{i=1}^k \h{A_i}$ denote the set of all event occurrences in all steps of $t$. For example, when
$t= \set{a,b}\set{b,c}\set{c,a}\set{a}$, $\Sigma_t=\bset{ a^{(1)},a^{(2)},a^{(3)}, b^{(1)}, b^{(2)},
c^{(1)},c^{(2)} }.$ We also define $\ell:\Sigma_t\rightarrow E$ to be the function that returns the \emph{label} of $\alpha$ for each $\alpha \in \Sigma_t$. For example, if $\alpha=e^{(j)}$, then $\ell(\alpha)=\ell(e^{(j)})=e$. Hence, from an enumerated step sequence $\h{t} = \h{A_1}\ldots \h{A_k}$, we can uniquely reconstruct its step sequence $t =  \ell(\,\h{A_1}\,)\ldots \ell(\,\h{A_k}\,).$

For each $\alpha \in \Sigma_u$, we let $pos_t(\alpha)$ denote the consecutive number of a step where $\alpha$ belongs, i.e., if $\alpha\in \h{A_i}$ then $pos_t(\alpha)=i$. For our example, $pos_t(a^{(2)})=3$,  $pos_t(b^{(2)})=pos_t(c^{(1)})=2$, etc.

It is important to observe that step sequences and stratified orders are interchangeable concepts. Given a step sequence $u$, define the binary relation $\lhd_u$  on $\Sigma_u$ as\\
\mbox{\hspace{3.6cm}}$\alpha \lhd_u \beta \iffdf pos_u(\alpha)<pos_u(\beta).$

Intuitively, $\alpha \lhd_u \beta$ simply means $\alpha$ occurs before $\beta$ on the step sequence $u$. Thus, $\alpha \lhd_u^\frown \beta$ iff $(\alpha\not=\beta\wedge pos_u(\alpha)\le pos_u(\beta))$; and $\alpha\simeq_u \beta$ iff  $pos_u(\alpha)=pos_u(\beta)$. Obviously, the relation $\lhd_u$ is a stratified order and we will call it the stratified order {\em generated by the step sequence} $u$.

Conversely, let $\lhd$ be a stratified order on a set $\Sigma$. The set $\Sigma$ can be represented as a
sequence of equivalence classes $\Omega_\lhd=B_1\ldots B_k$ ($k\ge 0$) such that \[\lhd = \bigcup_{i<j}B_i\times B_j\text{\mbox{\hspace{4mm}} and \mbox{\hspace{4mm}} }\simeq_\lhd \;= \bigcup_{i}B_i\times B_i.\]
The sequence $\Omega_\lhd$ is called the step sequence \emph{representing} $\lhd$. A detailed discussion on this connection between stratified orders and step sequences can be found in \cite{JL09}.

\section{Stratified Order Structures and Combined Traces}
In this section, we review the Janicki -- Koutny theory of stratified order structures and comtraces from \cite{JK95,JK97}. The reader is also referred to \cite{KK08} for an excellent introductory survey on the subject with many motivating examples.

\subsection{Stratified Order Structures}
A \emph{relational structure} is a triple $T=(X,R_1,R_2)$, where $X$ is a set and $R_1$, $R_2$ are binary relations on $X$. A relational structure $T'=(X',R'_1,R'_2)$ is an \emph{extension} of $T$, denoted as $T\subseteq T'$, iff $X=X'$, $R_1\subseteq R_1'$ and $R_2\subseteq R_2'$.

\begin{definition}[stratified order structure \cite{JK97}]
A \emph{stratified order structure} (\emph{so-structure}) is a relational structure $S=(X,\prec,\sqsubset),$ such that for all $\alpha,\beta,\gamma \in X$, the following hold:
\begin{align*}
\text{\textsf{S1:}\hspace{3mm}}& \alpha\not\sqsubset \alpha & \text{\textsf{S3:}\hspace{3mm}}& \alpha\sqsubset \beta \sqsubset \gamma \;\wedge\; \alpha \not= \gamma \implies \alpha\sqsubset \gamma\\
\text{\textsf{S2:}\hspace{3mm}}& \alpha \prec \beta \implies \alpha \sqsubset \beta \hspace{3mm}& \text{\textsf{S4:}\hspace{3mm}}&
\alpha\sqsubset \beta \prec \gamma \;\vee\; \alpha\prec \beta \sqsubset \gamma \implies \alpha\prec \gamma
\end{align*}
When $X$ is finite, $S$ is called a \emph{finite so-structure}. \EOD
\label{def:sos}
\end{definition}

The axioms \textsf{S1}--\textsf{S4} imply that $\PO$ is a partial order and $\alpha\prec \beta\Rightarrow \beta\not\sq \alpha.$  The axioms \textsf{S1} and \textsf{S3} imply $\sq$ is a \emph{strict preorder}. The relation $\prec$ is called \textit{causality} and represents the ``earlier than" relationship while the relation $\sqsubset$ is called \textit{weak causality} and represents the ``not later than" relationship. The axioms \textsf{S1}--\textsf{S4} model the mutual relationship between ``earlier than" and ``not later than" relations, provided that {\em the system runs are stratified orders}.  Historically, the name ``stratified order structure'' came from the fact that stratified orders can be seen as a special kind of so-structures.

\begin{proposition}[\cite{J4}]
For every stratified poset $(X,\lhd)$, the triple $S_\lhd=(X,\lhd,\lhd^\frown)$ is a so-structure.\END
\label{prop:soss}
\end{proposition}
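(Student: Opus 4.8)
The plan is to verify the four axioms \textsf{S1}--\textsf{S4} of \defref{sos} directly for the relational structure $S_\lhd = (X,\lhd,\lhdf)$, that is, reading the causality relation $\PO$ as $\lhd$ and the weak causality relation $\sq$ as $\lhdf$. My first move would be to record a convenient reformulation of $\lhdf$: since $\lhd$ is a partial order, $\alpha\lhd\beta$ already forces $\alpha\neq\beta$ and $\beta\not\lhd\alpha$, so unwinding the definitions of $\flhd$ and $\lhdf$ collapses them to the single equivalence
\[
\alpha \lhdf \beta \iff \bigl(\beta\not\lhd\alpha \ \wedge\ \alpha\neq\beta\bigr),
\]
i.e.\ $\lhdf$ is exactly the ``not greater than'' relation, as its name promises. (Alternatively one could pass through the step-sequence representation $\Omega_\lhd = B_1\ldots B_k$ of \secref{steps}: writing $p(\alpha)$ for the index of the block containing $\alpha$, one has $\alpha\lhd\beta\iff p(\alpha)<p(\beta)$ and $\alpha\lhdf\beta\iff p(\alpha)\le p(\beta)\wedge\alpha\neq\beta$, which turns each axiom into a triviality about the integers $p(\alpha),p(\beta),p(\gamma)$.)

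With this reformulation the bookkeeping is routine. \textsf{S1} holds because $\alpha\lhdf\alpha$ would demand $\alpha\neq\alpha$. \textsf{S2} is immediate, since $\alpha\lhd\beta$ yields $\beta\not\lhd\alpha$ and $\alpha\neq\beta$. For \textsf{S3} I would assume $\alpha\lhdf\beta\lhdf\gamma$ and $\alpha\neq\gamma$, hence $\beta\not\lhd\alpha$ and $\gamma\not\lhd\beta$, and argue by contradiction: if $\gamma\lhd\alpha$, then transitivity of $\lhd$ rules out $\alpha\lhd\beta$ and $\beta\lhd\gamma$, forcing $\alpha\simeq_\lhd\beta$ and $\beta\simeq_\lhd\gamma$, so $\alpha\simeq_\lhd\gamma$ by transitivity of $\simeq_\lhd$, which contradicts $\gamma\lhd\alpha$; hence $\gamma\not\lhd\alpha$ and $\alpha\lhdf\gamma$. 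For \textsf{S4} I would treat $\alpha\lhdf\beta\lhd\gamma$ (the case $\alpha\lhd\beta\lhdf\gamma$ being symmetric): assuming $\alpha\not\lhd\gamma$, transitivity gives $\gamma\not\lhd\alpha$ (else $\beta\lhd\gamma\lhd\alpha$ contradicts $\beta\not\lhd\alpha$), $\alpha\neq\gamma$, and $\alpha\not\lhd\beta$ (else $\alpha\lhd\beta\lhd\gamma$), so $\alpha\simeq_\lhd\beta$ and $\alpha\simeq_\lhd\gamma$; then $\beta\simeq_\lhd\gamma$ by transitivity of $\simeq_\lhd$, contradicting $\beta\lhd\gamma$. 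Thus $\alpha\lhd\gamma$, as required.

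I do not expect a genuine obstacle; the one point that deserves care is precisely where the \emph{stratified} hypothesis enters, namely the appeal to transitivity of $\simeq_\lhd$ in the final subcase of \textsf{S3} and in \textsf{S4}. That it is truly needed is witnessed by the three-element order on $\{a,b,c\}$ whose only comparable pair is $c\lhd a$ (which fails to be stratified): there $a\lhdf b\lhdf c$ and $a\neq c$, yet $a\not\lhdf c$, so \textsf{S3} breaks for $(X,\lhd,\lhdf)$. Hence the whole argument amounts to a careful tracking of irreflexivity and transitivity of $\lhd$ together with transitivity of $\simeq_\lhd$.
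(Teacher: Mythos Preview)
Your argument is correct: the reformulation $\alpha\lhdf\beta\iff(\beta\not\lhd\alpha\wedge\alpha\neq\beta)$ is valid for any partial order, and the verifications of \textsf{S1}--\textsf{S4} go through exactly as you describe, with the stratified hypothesis (transitivity of $\simeq_\lhd$) used precisely in \textsf{S3} and \textsf{S4}. The counterexample showing that stratification is genuinely needed is a nice touch.

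There is nothing to compare against here: the paper does not prove this proposition but simply cites it from \cite{J4} (the \END marker after the statement indicates it is quoted without proof). Your direct verification is the natural way to establish it, and the alternative route you sketch via the step-sequence representation $\Omega_\lhd$ and the block-index function $p$ is equally valid and perhaps even more transparent.
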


We next recall the notion of \emph{stratified order extension}. This concept is extremely important  since the relationship between stratified orders and so-structures is exactly analogous to the one between total orders and partial orders.

\begin{definition}[stratified extension \cite{JK97}]
Let $S=(X,\prec,\sqsubset)$ be a so-structure. A {\em stratified} order $\lhd$ on $X$ is a {\em stratified extension} of $S$ if and only if $(X,\prec,\sqsubset)\subseteq (X,\lhd,\lhd^{\frown})$. 

The set of all stratified extensions of $S$  is denoted as  $ext(S)$. \EOD
\label{def:extsos}
\end{definition}

Szpilrajn's Theorem \cite{Szp} states that every poset can be reconstructed by taking the intersection of all of its total order extensions. Janicki and Koutny showed that a similar result holds for so-structures and stratified extensions:

\begin{theorem}[{\cite{JK97}}]
Let $S=(X,\PO,\sq)$ be a so-structure. Then\\
\mbox{\hspace{3.5cm}} $ S=\left(X,\bigcap_{\lhd\;\in\; ext(S)}\lhd,\bigcap_{\lhd\;\in\; ext(S)}\lhd^\frown\right).$
\qed
\label{theo:SzpStrat}
\end{theorem}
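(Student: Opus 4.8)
The plan is to mimic the classical proof of Szpilrajn's Theorem, exploiting the fact—already isolated in \propref{soss}—that stratified orders sit inside so-structures exactly as total orders sit inside posets. Write $S'=\left(X,\bigcap_{\lhd\in ext(S)}\lhd,\ \bigcap_{\lhd\in ext(S)}\lhd^\frown\right)$. One inclusion, $S\subseteq S'$, is immediate: by \defref{extsos} every $\lhd\in ext(S)$ satisfies $\prec\;\subseteq\;\lhd$ and $\sqsubset\;\subseteq\;\lhd^\frown$, so $\prec$ is contained in the intersection of the $\lhd$'s and $\sqsubset$ in the intersection of the $\lhd^\frown$'s. (One should also note $X$ is unchanged, which is automatic.) So the whole content is the reverse inclusion: if $(\alpha,\beta)\notin\;\prec$ then some stratified extension $\lhd$ has $(\alpha,\beta)\notin\lhd$, and if $(\alpha,\beta)\notin\;\sqsubset$ then some stratified extension $\lhd$ has $(\alpha,\beta)\notin\lhd^\frown$.

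The key step is a \emph{one-step extension lemma}: given a so-structure $S=(X,\prec,\sqsubset)$ and a pair $(\alpha,\beta)$ witnessing non-membership, one can find \emph{some} stratified extension realizing that witness. Concretely I expect two sublemmas. (i) If $\alpha\not\sqsubset\beta$ (and hence, by \textsf{S2}, $\alpha\not\prec\beta$), then there is a stratified order $\lhd\supseteq\prec$, $\lhd^\frown\supseteq\sqsubset$ with $\beta\lhd\alpha$ or $\beta\simeq_\lhd\alpha$—i.e.\ $\alpha\not\lhd^\frown\beta$. The natural way is to adjoin $\beta\sqsubset\alpha$ to $S$, take the induced closure under \textsf{S1}--\textsf{S4}, check it is still a so-structure (here is where one must verify no contradiction arises—this uses $\alpha\not\sqsubset\beta$ crucially, together with the fact that $\sqsubset$ is a strict preorder and $\prec$ a partial order as remarked after \defref{sos}), and then iterate over all remaining incomparable pairs to push the structure up to a maximal one, which one argues must be of the form $(X,\lhd,\lhd^\frown)$ for a stratified $\lhd$. (ii) If $\alpha\not\prec\beta$, one wants a stratified extension with $\beta\lhd\alpha$ strictly (so $\alpha\not\lhd\beta$); if moreover $\alpha\sqsubset\beta$ this forces putting $\beta\sqsubset\alpha$ as well, i.e.\ collapsing $\alpha,\beta$ into one step, and again one must check consistency.

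The genuinely hard part is proving that a \emph{maximal} extension of a so-structure is generated by a stratified order—the analogue of "a maximal partial order is total." This is not as trivial as in the poset case because maximality must be shown to force $\simeq_\lhd$ (defined via $\prec$, not $\sqsubset$) to be an equivalence relation, and one has to juggle both relations simultaneously under the four axioms. I would handle it by contradiction: if $S=(X,\prec,\sqsubset)$ is a maximal so-structure that is not stratified, locate a "bad triple" (e.g.\ elements $\alpha\frown_\prec\beta$, $\beta\frown_\prec\gamma$ but $\alpha\prec\gamma$, or a failure of transitivity of $\simeq_\prec$), and show that adding one carefully chosen pair to $\prec$ or $\sqsubset$ and re-closing under \textsf{S1}--\textsf{S4} yields a strictly larger so-structure—contradicting maximality. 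Since the author states this proof is "similar" to Szpilrajn's, I anticipate these closure-consistency checks are exactly the routine-but-delicate calculations the author will either cite from \cite{JK97} or dispatch quickly; my proposal is therefore: reduce to the two non-membership cases, invoke (or prove) the one-step extension lemma with its consistency check, and invoke (or prove) the maximal-implies-stratified lemma, then conclude by intersecting.
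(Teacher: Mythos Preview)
The paper does not prove this theorem: it is imported wholesale from \cite{JK97}, as signalled by the citation in the theorem header and the absence of any proof text. So there is no in-paper argument to compare against. Your Szpilrajn-style reconstruction is the right shape and is presumably close to what appears in \cite{JK97}: one inclusion is immediate from \defref{extsos}, and for the other one shows that any ``missing'' pair can be witnessed in some stratified extension, obtained by adjoining that pair, closing under \textsf{S1}--\textsf{S4}, verifying consistency, and then extending to a maximal so-structure, which one argues must be of the form $(X,\lhd,\lhd^\frown)$.

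One small slip worth flagging: in your case~(i), ``$\beta\lhd\alpha$ or $\beta\simeq_\lhd\alpha$'' is not equivalent to $\alpha\not\lhd^\frown\beta$. For $\alpha\neq\beta$ the condition $\alpha\not\lhd^\frown\beta$ is exactly $\beta\lhd\alpha$, since $\lhd^\frown=\lhd\cup\frown_\lhd$ and its complement on distinct pairs is $\lhd^{-1}$; the disjunct $\beta\simeq_\lhd\alpha$ would actually give $\alpha\frown_\lhd\beta$ and hence $\alpha\lhd^\frown\beta$. Likewise in~(ii), requiring $\beta\lhd\alpha$ strictly is more than you need---$\alpha\not\lhd\beta$ also allows $\alpha\simeq_\lhd\beta$, which is exactly the ``collapse into one step'' you invoke a line later. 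These are bookkeeping glitches rather than structural gaps; the overall architecture is sound.
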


Using this theorem, we can show the following properties relating so-structures and their stratified extensions.
\begin{corollary}
For every so-structure $S=(X,\PO,\sq)$,
\begin{enumerate}
 \item $\bigl(\exists \lhd\in ext(S),\ \alpha\lhd\beta\bigr)\wedge\bigl(\exists \lhd\in ext(S),\ \beta\lhd\alpha\bigr)\implies \bigl(\exists \lhd\in ext(S),\ \beta\frown_\lhd\alpha\bigr).$
 \item $\bigl(\forall \lhd\in ext(S),\ \alpha\lhd\beta \vee \beta\lhd\alpha\bigr)\iff \alpha \PO \beta \vee \beta \PO \alpha.$
\end{enumerate}
\label{cor:SzpStrat}
\end{corollary}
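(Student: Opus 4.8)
The plan is to derive both statements directly from Szpilrajn-style Theorem~\ref{theo:SzpStrat}, which tells us that $\PO = \bigcap_{\lhd\in ext(S)}\lhd$ and $\sq = \bigcap_{\lhd\in ext(S)}\lhd^\frown$. The intuition is that these are the exact so-structure analogues of the classical facts about posets: two elements are incomparable in \emph{some} total extension precisely when they are incomparable in the poset, and two elements are comparable in \emph{every} total extension precisely when they are comparable in the poset. Here the role of ``comparable in every extension'' is played by $\alpha\PO\beta\vee\beta\PO\alpha$, and the role of ``incomparable in some extension'' (for part~1) is played by $\alpha\flhd\beta$ for some $\lhd\in ext(S)$.

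For part~1, I would argue by contraposition, or equivalently: suppose $\alpha\lhd_1\beta$ for some $\lhd_1\in ext(S)$ and $\beta\lhd_2\alpha$ for some $\lhd_2\in ext(S)$, and suppose toward a contradiction that \emph{no} stratified extension makes $\alpha$ and $\beta$ distinctly incomparable. Since $\alpha\lhd_1\beta$ forces $\alpha\neq\beta$, every $\lhd\in ext(S)$ must then satisfy $\alpha\lhd\beta$ or $\beta\lhd\alpha$ (as $\lhd$ is stratified, the only other option $\alpha\flhd\beta$ is excluded by assumption). By part~2 (proved first, or proved in parallel), this is equivalent to $\alpha\PO\beta\vee\beta\PO\alpha$. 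If $\alpha\PO\beta$, then by \defref{extsos} every stratified extension respects $\PO$, so $\alpha\lhd_2\beta$; combined with $\beta\lhd_2\alpha$ and transitivity of $\lhd_2$ this gives $\alpha\lhd_2\alpha$, contradicting irreflexivity. The case $\beta\PO\alpha$ is symmetric using $\lhd_1$. Hence some stratified extension makes $\alpha\flhd\beta$.

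For part~2, the ($\Leftarrow$) direction is immediate: if $\alpha\PO\beta$ then $\alpha\lhd\beta$ for every $\lhd\in ext(S)$ since each such $\lhd$ extends $\PO$ (i.e., $\PO\subseteq\lhd$ by \defref{extsos}), and symmetrically for $\beta\PO\alpha$. For the ($\Rightarrow$) direction, assume $\forall\lhd\in ext(S),\ \alpha\lhd\beta\vee\beta\lhd\alpha$. If $\alpha=\beta$ this is impossible (an element is incomparable to itself in any stratified order, and $ext(S)$ is nonempty by \theoref{SzpStrat}), so $\alpha\neq\beta$. Now I claim $\alpha\sq\beta$ or $\beta\sq\alpha$: indeed, for each $\lhd$, from $\alpha\lhd\beta\vee\beta\lhd\alpha$ we get $\alpha\lhdf\beta\wedge\beta\not\lhdf\alpha$ or vice versa; partition $ext(S)$ into $P=\{\lhd:\alpha\lhd\beta\}$ and $Q=\{\lhd:\beta\lhd\alpha\}$. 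If $Q=\emptyset$ then $\alpha\lhd\beta$ for all $\lhd$, hence $\alpha\lhdf\beta$ for all $\lhd$, so by \theoref{SzpStrat} $\alpha\sq\beta$; if instead $P=\emptyset$ we get $\beta\sq\alpha$. The remaining case is $P\neq\emptyset\neq Q$: pick $\lhd_1\in P$ and $\lhd_2\in Q$, so $\alpha\lhd_1\beta$ and $\beta\lhd_2\alpha$; by part~1 there is $\lhd_3\in ext(S)$ with $\alpha\flhd\beta$, i.e., $\alpha\neq\beta$, $\alpha\not\lhd_3\beta$ and $\beta\not\lhd_3\alpha$ — contradicting our assumption that every extension orders $\alpha,\beta$ one way or the other. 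Therefore $\alpha\sq\beta$ or $\beta\sq\alpha$; say $\alpha\sq\beta$. Finally I must promote $\sq$ to $\PO$: since $\alpha\sq\beta$, axiom \textsf{S2}'s consequence $\alpha\prec\beta\Rightarrow\beta\not\sq\alpha$ shows $\sq$ is antisymmetric between $\alpha,\beta$ only in the $\prec$ case, so I cannot yet conclude. Instead, use part~1 in contrapositive form together with $\alpha\sq\beta$: because no extension has $\alpha\flhd\beta$ (by hypothesis), part~1 says we cannot have both some extension with $\alpha\lhd\beta$ and some with $\beta\lhd\alpha$; but $\alpha\sq\beta$ already gives $\alpha\lhdf\beta$ in every extension, so in every extension $\alpha\lhd\beta$ (it cannot be $\alpha\flhd\beta$ by hypothesis). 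Thus $\alpha\lhd\beta$ for all $\lhd\in ext(S)$, whence $\alpha\PO\beta$ by \theoref{SzpStrat}.

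The main obstacle is the circular-looking dependency between the two parts: part~1 is most cleanly proved using part~2, and the hard direction of part~2 uses part~1. I expect to resolve this by proving them together, or by first establishing a standalone lemma — essentially that if every $\lhd\in ext(S)$ orders a fixed distinct pair $\alpha,\beta$, then they are ordered the same way in all of them or $\sq$-related — which is exactly the $P\neq\emptyset\neq Q$ analysis above and does not itself invoke part~1 (it invokes \theoref{SzpStrat} plus a direct contradiction). Once that lemma and the easy pieces are in place, both numbered statements fall out. A secondary subtlety is the final promotion from $\sq$ to $\PO$ in part~2, which must route through \theoref{SzpStrat} on the $\prec$-component rather than through the so-structure axioms directly; care is needed to invoke the right intersection formula there.
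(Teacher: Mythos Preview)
Your argument is genuinely circular, and the resolution you sketch does not break the circle. In your proof of part~2 ($\Rightarrow$), the case $P\neq\emptyset\neq Q$ is handled by invoking part~1; in your proof of part~1, you invoke part~2. You then claim that the $P\neq\emptyset\neq Q$ analysis ``does not itself invoke part~1'', but a few lines earlier you wrote ``by part~1 there is $\lhd_3\in ext(S)$ with $\alpha\frown_{\lhd_3}\beta$'' precisely in that case. The proposed standalone lemma (``if every extension orders $\alpha,\beta$ then they are all ordered the same way, or else $\sq$-related'') is not a way out: under $P\neq\emptyset\neq Q$ one has $\alpha\lhd_1\beta$ and $\beta\lhd_2\alpha$, which force $\beta\not\sq\alpha$ and $\alpha\not\sq\beta$ respectively, so the $\sq$-disjunct is vacuous and the lemma collapses to exactly the statement you are trying to prove.

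The underlying issue is that \theoref{SzpStrat} alone is too weak to yield part~1. From $\neg(\alpha\PO\beta)$ it gives you \emph{some} $\lhd$ with $\neg(\alpha\lhd\beta)$, and from $\neg(\beta\PO\alpha)$ \emph{some other} $\lhd'$ with $\neg(\beta\lhd'\alpha)$; it never promises a \emph{single} extension witnessing both negations simultaneously, which is what $\alpha\frown_\lhd\beta$ requires. Producing such an extension needs an actual construction (or an equivalent interpolation property of $ext(S)$), and that is exactly the content of \cite[Theorem~3.6]{JK97}. The paper's proof reflects this: it imports part~1 from \cite{JK97} as an external result, and then derives part~2 from part~1 together with \theoref{SzpStrat} --- which is the non-circular direction of the equivalence you discovered. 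To fix your proof, either cite that result for part~1, or reproduce its construction; your part~2 argument then goes through cleanly (and in fact you can drop the detour through $\sq$: in the $Q=\emptyset$ case, $\alpha\lhd\beta$ for all $\lhd$ gives $\alpha\PO\beta$ directly by the $\PO$-intersection in \theoref{SzpStrat}).
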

\begin{proof}\textbf{1. } See \cite[Theorem 3.6]{JK97}. \textbf{2. } Follows from \textbf{1.} and \theoref{SzpStrat}. \qed

%
\end{proof}

\subsection{Combined Traces}

{\em Comtraces} were introduced in \cite{JK95} as a generalization of traces to represent so-structures. The \emph{comtrace congruence} is defined via two relations {\em simultaneity} and {\em serializability}.

\begin{definition}[comtrace alphabet \cite{JK95}] Let $E$ be a finite set (of events) and let $ser \subseteq sim \subset E\times E$ be two relations called \emph{serializability} and \emph{simultaneity} respectively and the relation $sim$ is irreflexive and symmetric. The triple $\theta = (E,sim,ser)$ is called a \emph{comtrace alphabet}. \EOD
\label{def:comalpha}
\end{definition}

Intuitively, if $(a,b)\in sim$ then $a$ and $b$ can occur simultaneously
(or be a part of a {\em synchronous} occurrence in the sense of \cite{JLM06}),
while $(a,b)\in ser$ means that $a$ and $b$ may occur simultaneously
or $a$ may occur before $b$. We define $\E_\theta$, the set of all possible {\em steps},
 to be the set of all cliques of
the graph $(E,sim)$, i.e.,\\
$\mbox{\hspace{2cm}}\E_{\theta} \df \bset{ A \mid A\neq\emptyset \;\wedge\; \forall a,b\in A,\; \bigl(a=b \vee (a,b)\in sim\bigr)}.$

\begin{definition}[comtrace congruence \cite{JK95}]
For a comtrace alphabet $\theta=(E,sim,ser)$, we define $\eqa_{\theta}\;\subseteq\; \E_{\theta}^*\times\E_{\theta}^*$ to be the relation comprising all pairs $(t,u)$ of step sequences such that \smallskip\\
\myspace{4.3}$t=wAz\andspace u=wBCz,$\smallskip\\
where $w,z\in\E_{\theta}^*$ and $A$, $B$, $C$ are steps satisfying $B\cup C \;= \;A$ and $B\times C\;\subseteq\; ser$. 

We define \emph{comtrace congruence} $\eqb_{\theta}\df \left(\eqa_{\theta}\cup\eqa^{-1}_{\theta}\right)^*$. We define the comtrace concatenation operator $\_\circledast\_$ as $[r]\circledast[t] \df [r\ast t]$. The quotient monoid $(\E^*/\!\!\equiv_{\theta},\circledast,[\epsilon])$ is called the monoid of {\em comtraces} over $\theta$. \EOD
\label{def:commonoid}
\end{definition}

Note that since $ser$ is irreflexive, $B\times C\subseteq ser$ implies that $B\cap C=\emptyset$. We will omit the subscript $\theta$ from the comtrace congruence $\eqa_{\theta}$, and write $\equiv$ and $\eqa$ when it causes no ambiguity. To shorten our notations, we often write $[s]_{\theta}$ or $[s]$ instead of $[s]_{\eqb_{\theta}}$ to denote the comtrace generated by the step sequence $s$ over $\theta$.

\begin{example}
Let $E=\set{a,b,c}$ where $a$, $b$ and $c$ are three atomic operations, where\smallskip\\
\myspace{2}$a:\;\; y\leftarrow x+y \myspace{1,5} b:\;\; x\leftarrow y+2\myspace{1.5}  c:\;\; y\leftarrow y+1$\smallskip\\
Assume simultaneous reading is allowed. Then only $b$ and $c$ can be performed simultaneously, and the simultaneous execution of $b$ and $c$ gives the same outcome as executing $b$ followed by $c$.
 We can then define the comtrace alphabet $\theta=(E,sim,ser)$, where
$sim = \bset{\set{b,c}}$ and $ser=\set{(b,c)}$. This yields $\E_{\theta}=\bset{\{a\},\{b\},\{c\},\set{b,c}}$.
Thus, $\textbf{t}=[\{a\}\{b,c\}] =\bset{ \{a\}\{b,c\},\{a\}\{b\}\{c\}}$ is a comtrace.
But $\{a\}\{c\}\{b\} \notin \textbf{t}$. \EOD
\label{ex:comtrace1}
\end{example}

Even though traces are quotient monoids over sequences and comtraces are
quotient monoids over step sequences, traces can be regarded as special kinds
of comtraces when the relation $ser=sim$. For a more detailed discussion on this connection between traces and comtraces, the reader is referred to \cite{JL09}.

\begin{definition}[\cite{JK95}]
Let $u\in \E_{\theta}^*$. We define the relations $\prec_u,\sqsubset_u \subseteq \Sigma_{u}\times \Sigma_{u}$
as:
\begin{enumerate}
\item
$\alpha \prec_u \beta \iffdf \alpha \lhd_u \beta \wedge (\ell(\alpha),\ell(\beta))\notin ser,$

\item
$ \alpha \sqsubset_u \beta \iffdf \alpha \lhd_u^\frown \beta \wedge (\ell(\beta),\ell(\alpha))\notin ser$. \EOD
\end{enumerate}

\label{def:s2inv}
\end{definition}

It is worth noting that the structure $( \Sigma_{u}, \prec_u ,\sqsubset_u, \ell )$ is exactly the \emph{cd-graph} (cf. \defref{comdag}) that represents the comtrace $[u]$. This gives us some intuition on how Koutny and Kleijn constructed the cd-graph definition in \cite{KK08}. We  also observe that $( \Sigma_{u}, \prec_u ,\sqsubset_u )$ is usually \emph{not} a so-structure since $\PO_u$ and $\sq_u$ describe only basic ``local'' causality and weak causality invariants of the event occurrences of $u$ by considering pairwise serializable relationships of event occurrences. Hence, $\PO_u$ and $\sq_u$ might not capture ``global'' invariants that can be inferred from \textsf{S2}--\textsf{S4} of \defref{sos}. To ensure  all invariants are included, we need the following $\lozenge$-closure operator. 

\begin{definition}[\cite{JK95}]
For every relational structure $S=(X,R_1,R_2)$ we define $S^\lozenge$ as\smallskip\\
\mbox{\hspace{2cm}}$S^\lozenge \df \bigl(X,(R_1\cup R_2)^* \circ R_1 \circ (R_1\cup R_2)^*,(R_1\cup R_2)^* \setminus id_X\bigr).$ \EOD
\label{def:SO-CL}
\end{definition}

Intuitively $\lozenge$-closure is a generalization of transitive closure for relations to relational structures. The motivation is that for appropriate  relations $R_1$ and $R_2$ (see assertion (3) of \propref{so-cl}), the relational structure $(X,R_1,R_2)^\lozenge$ is a so-structure. The $\lozenge$-closure operator satisfies the following properties:

\begin{proposition}[\cite{JK95}]
Let $S=(X,R_1,R_2)$ be a relational structure.
\begin{enumerate}
\item If $R_2$ is irreflexive then $S\subseteq S^\lozenge$.
\item $(S^\lozenge)^\lozenge = S^\lozenge$.
\item $S^\lozenge$ is a so-structure if and only if 
$(R_1\cup R_2)^* \circ R_1 \circ (R_1\cup R_2)^*$
is irreflexive.
\item If $S$ is a so-structure then $S=S^\lozenge$.
\item If $S$ be a so-structure and $S_0 \subseteq S$, then $S_0^{\lozenge}\subseteq S$ and $S_0^{\lozenge}$ is a so-structure. \hspace{-5mm}
\END
\end{enumerate}
\label{prop:so-cl}
\end{proposition}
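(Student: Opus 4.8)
The plan is to reduce all five parts to elementary relation algebra, resting on a handful of recurring facts; the only genuinely structural input is how the so-structure axioms behave under transitive closure. Write $P_0\df R_1\cup R_2$, so that by \defref{SO-CL} we have $S^\lozenge=(X,\prec_\lozenge,\sqsubset_\lozenge)$ with $\prec_\lozenge\df P_0^*\circ R_1\circ P_0^*$ and $\sqsubset_\lozenge\df P_0^*\setminus id_X$. The facts I would record first are: (a) $id_X\subseteq P_0^*$; (b) $P_0^*\circ P_0^*=P_0^*$ and $P_0^*\circ P_0\circ P_0^*=P_0^+\subseteq P_0^*$; (c) the loop-removal identity $R^*=(R\setminus id_X)^*$, valid for every $R$ (shortcut any $R^*$-path through its repeated vertices); and (d) $\prec$ is irreflexive in every so-structure, for $\alpha\prec\alpha$ would entail $\alpha\sqsubset\alpha$ by \textsf{S2}, contradicting \textsf{S1}. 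Part~1 is then immediate: $R_1=id_X\circ R_1\circ id_X\subseteq\prec_\lozenge$ by (a), and if $R_2$ is irreflexive then $R_2\subseteq P_0^+\subseteq P_0^*$ with $R_2\cap id_X=\emptyset$, so $R_2\subseteq\sqsubset_\lozenge$.

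For Part~3 the ``only if'' direction is just (d) applied to $S^\lozenge$; for ``if'', assume $\prec_\lozenge$ is irreflexive and check \textsf{S1}--\textsf{S4} for $(X,\prec_\lozenge,\sqsubset_\lozenge)$. \textsf{S1} holds by construction of $\sqsubset_\lozenge$. Since $\prec_\lozenge\subseteq P_0^+\subseteq P_0^*$ and $\prec_\lozenge$ is irreflexive, $\prec_\lozenge\subseteq P_0^*\setminus id_X=\sqsubset_\lozenge$, which is \textsf{S2}. For \textsf{S3}, $\sqsubset_\lozenge\circ\sqsubset_\lozenge\subseteq P_0^*\circ P_0^*=P_0^*$, so a $\sqsubset_\lozenge$-chain between distinct points lies in $P_0^*\setminus id_X$. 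For \textsf{S4}, both $\sqsubset_\lozenge\circ\prec_\lozenge$ and $\prec_\lozenge\circ\sqsubset_\lozenge$ are contained in $P_0^*\circ R_1\circ P_0^*=\prec_\lozenge$, using $\sqsubset_\lozenge\subseteq P_0^*$ and (b). Part~2 follows the same pattern: with $Q_0\df\prec_\lozenge\cup\sqsubset_\lozenge$ one shows $Q_0^*=P_0^*$ --- the inclusion $Q_0\subseteq P_0^*$ (from $\sqsubset_\lozenge\subseteq P_0^*$ and $\prec_\lozenge\subseteq P_0^+$) gives $Q_0^*\subseteq P_0^*$, while $Q_0\supseteq\sqsubset_\lozenge\supseteq P_0\setminus id_X$ with (c) gives $Q_0^*\supseteq(P_0\setminus id_X)^*=P_0^*$ --- and then $(S^\lozenge)^\lozenge$ is computed directly: its causality relation is $Q_0^*\circ\prec_\lozenge\circ Q_0^*=P_0^*\circ R_1\circ P_0^*=\prec_\lozenge$ and its weak causality relation is $Q_0^*\setminus id_X=P_0^*\setminus id_X=\sqsubset_\lozenge$.

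The heart of the matter is Part~4, from which Part~5 falls out. If $S=(X,\prec,\sqsubset)$ is a so-structure then \textsf{S2} gives $P_0=\prec\cup\sqsubset=\sqsubset$, and the key claim is $\sqsubset^*=\sqsubset\cup id_X$ --- equivalently, every $\sqsubset$-path between distinct endpoints is itself a $\sqsubset$-edge --- proved by induction on path length, absorbing the last edge via \textsf{S3}. Hence $\sqsubset_\lozenge=\sqsubset^*\setminus id_X=\sqsubset$ (as $\sqsubset$ is irreflexive by \textsf{S1}), and $\prec_\lozenge=\sqsubset^*\circ\prec\circ\sqsubset^*=(\sqsubset\cup id_X)\circ\prec\circ(\sqsubset\cup id_X)$, which expands to $\prec\cup(\sqsubset\circ\prec)\cup(\prec\circ\sqsubset)\cup(\sqsubset\circ\prec\circ\sqsubset)$; each of the last three terms is contained in $\prec$ by \textsf{S4}, so $\prec_\lozenge=\prec$ and $S=S^\lozenge$. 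For Part~5, if $S_0=(X,R_1,R_2)\subseteq S$ with $S$ a so-structure, then $P_0=R_1\cup R_2\subseteq\sqsubset$, so $P_0^*\subseteq\sqsubset^*=\sqsubset\cup id_X$; substituting this and $R_1\subseteq\prec$ into the formulas gives $\prec_\lozenge\subseteq(\sqsubset\cup id_X)\circ\prec\circ(\sqsubset\cup id_X)=\prec$ and $\sqsubset_\lozenge=P_0^*\setminus id_X\subseteq\sqsubset$, that is, $S_0^\lozenge\subseteq S$; and since $\prec_\lozenge\subseteq\prec$ is irreflexive, Part~3 yields that $S_0^\lozenge$ is a so-structure.

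The one step needing genuine care is the inductive claim $\sqsubset^*=\sqsubset\cup id_X$ used in Parts~4 and~5 --- this is the only place where the so-structure axioms \textsf{S1} and \textsf{S3} actually do work, and one must be slightly careful because \textsf{S3} only provides transitivity of $\sqsubset$ when the two endpoints are distinct, so the induction needs the dichotomy ``$u=v$ or $u\sqsubset v$'' rather than plain transitivity. Everything else is bookkeeping with $\circ$, $(\cdot)^*$ and $(\cdot)\setminus id_X$, driven by facts (a)--(d).
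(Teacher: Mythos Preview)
Your argument is correct in every part. The paper itself does not prove \propref{so-cl}; it is stated with a citation to \cite{JK95} and closed with a \qed, so there is no in-paper proof to compare against. Your relation-algebraic treatment --- reducing everything to the identities $P_0^*\circ P_0^*=P_0^*$, $R^*=(R\setminus id_X)^*$, and the key observation that \textsf{S2}--\textsf{S3} force $\sqsubset^*=\sqsubset\cup id_X$ in any so-structure --- is exactly the kind of argument one expects for this result, and your caution about \textsf{S3} only giving transitivity between \emph{distinct} endpoints (so the induction must carry the disjunction ``$u=v$ or $u\sqsubset v$'') is well placed.

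One cosmetic remark: in Part~1 you might also note that $R_1\subseteq\prec_\lozenge$ holds unconditionally, so the irreflexivity hypothesis on $R_2$ is needed only for the second component; this is implicit in what you wrote but worth making explicit since the reader may wonder why no hypothesis on $R_1$ is imposed.
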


\begin{definition}
Given a step sequence  $u\in \E_{\theta}^*$ and its respective comtrace $\mathbf{t}=[u]\in \E_{\theta}^*/\!\equiv$, we define the relational structures $S_{\mathbf{t}}$ as: \\
\mbox{\hspace{3.5cm}}$\displaystyle S_{\mathbf{t}}  = \bigl( \Sigma_{\mathbf{t}}, \prec_{\mathbf{t}} ,\sqsubset_{\mathbf{t}} \bigr)\df \bigl( \Sigma_{u}, \prec_u ,\sqsubset_u \bigr)^\lozenge$.\EOD
\label{def:s2sos}
\end{definition}

The relational structure $S_{\mathbf{t}}$ is called the \emph{so-structure defined by the comtrace} $\mathbf{t}=[u]$, where $\Sigma_{\mathbf{t}}$, $\prec_{\mathbf{t}}$  and $\sqsubset_{\mathbf{t}}$ are used to denote the event occurrence set, causality relation and weak causality relation induced by the comtrace $\mathbf{t}$ respectively.  The following nontrivial theorem and its corollary justifies the name by showing that  step sequences in a comtrace $\mathbf{t}$ are exactly stratified extension of the so-structure $S_\mathbf{t}$, and that $S_{\mathbf{t}}$ is uniquely defined for the comtrace $\mathbf{t}$ regardless of the choice of $u\in \mathbf{t}$. 

\begin{theorem}[\cite{JK95}]
For each $\mathbf{t}\in \quotient{E^*}{\equiv_{\theta}}$, the relational structure $S_{\mathbf{t}}$ is a so-structure and $ext\bigl(S_{\mathbf{t}}\bigr) = \bset{ \lhd_u \mid u \in \mathbf{t} }$. \END
\label{theo:com2sos}
\end{theorem}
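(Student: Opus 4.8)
The plan is to prove the two assertions of \theoref{com2sos} together, exploiting the fact that the comtrace congruence $\eqb_\theta$ is generated by the single-step rewriting relation $\eqa_\theta$ and its inverse. First I would establish the key invariance lemma: if $t \eqa_\theta u$ via $t = wAz$, $u = wBCz$ with $B\cup C = A$, $B\times C\subseteq ser$, then after the canonical identification of event occurrences of $t$ and $u$ (the enumeration $\h{\cdot}$ is chosen precisely so that the occurrences split by the rewrite carry the same names), we have $\Sigma_t = \Sigma_u$ and, crucially, $\bigl(\Sigma_t,\PO_t,\sq_t\bigr)^\lozenge = \bigl(\Sigma_u,\PO_u,\sq_u\bigr)^\lozenge$. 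This is the technical heart. To see it, note that $\lhd_t$ and $\lhd_u$ differ only on pairs of occurrences coming from $A$: in $t$ they all lie in one step (mutually $\simeq_{\lhd_t}$), while in $u$ the $B$-occurrences precede the $C$-occurrences. Checking \defref{s2inv}, the only changed relationships are among those occurrences, and the condition $B\times C\subseteq ser$ together with irreflexivity of $ser$ on cliques is exactly what makes the new $\PO$- and $\sq$-pairs in $\bigl(\Sigma_u,\PO_u,\sq_u\bigr)$ already derivable in $\bigl(\Sigma_t,\PO_t,\sq_t\bigr)^\lozenge$ via \textsf{S2}--\textsf{S4}, and conversely; applying the idempotence $(S^\lozenge)^\lozenge = S^\lozenge$ from \propref{so-cl}(2) and the monotonicity/closure facts \propref{so-cl}(1,5), one concludes the two $\lozenge$-closures coincide. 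Since $\eqb_\theta$ is the reflexive–transitive–symmetric closure of $\eqa_\theta$, this gives $S_\mathbf{t}$ is independent of the representative $u\in\mathbf{t}$; in particular \defref{s2sos} is well-posed.

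Next I would show $S_\mathbf{t}$ is a so-structure. By \propref{so-cl}(3) it suffices to prove that $(\PO_u\cup\sq_u)^*\circ\PO_u\circ(\PO_u\cup\sq_u)^*$ is irreflexive for any $u\in\mathbf{t}$. I would argue that $(\PO_u\cup\sq_u)\subseteq\lhd_u^\frown$ and $\PO_u\subseteq\lhd_u$: indeed $\PO_u\subseteq\lhd_u\subseteq\lhd_u^\frown$ is immediate from \defref{s2inv}(1), and for $\sq_u$ we have $\alpha\sq_u\beta\Rightarrow\alpha\lhd_u^\frown\beta$ by \defref{s2inv}(2). Hence any element of $(\PO_u\cup\sq_u)^*\circ\PO_u\circ(\PO_u\cup\sq_u)^*$ lies in $(\lhd_u^\frown)^*\circ\lhd_u\circ(\lhd_u^\frown)^*$, which, because $\lhd_u$ is a stratified order (so $\lhd_u^\frown$ is a strict preorder refining $\lhd_u$ and $(\lhd_u^\frown)^*=\lhd_u^\frown\cup id$), is contained in $\lhd_u$ and therefore irreflexive. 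This proof also yields the inclusion $S_\mathbf{t}\subseteq\bigl(\Sigma_u,\lhd_u,\lhd_u^\frown\bigr)$, i.e., every $\lhd_u$ with $u\in\mathbf{t}$ is a stratified extension of $S_\mathbf{t}$: so $\bset{\lhd_u\mid u\in\mathbf{t}}\subseteq ext(S_\mathbf{t})$.

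For the reverse inclusion $ext(S_\mathbf{t})\subseteq\bset{\lhd_u\mid u\in\mathbf{t}}$, I would take an arbitrary stratified extension $\lhd\in ext(S_\mathbf{t})$, let $v=\Omega_\lhd$ be the step sequence representing $\lhd$ (using the step-sequence/stratified-order correspondence from \secref{steps}), and show $v\in\mathbf{t}$. The strategy is induction on a suitable measure of how far $\lhd$ is from some fixed $\lhd_u$, $u\in\mathbf{t}$ — e.g., the number of pairs $(\alpha,\beta)$ with $\alpha\lhd\beta$ but $\alpha\not\lhd_u\beta$. If this measure is zero then $\lhd=\lhd_u$ and $v=u\in\mathbf{t}$. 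Otherwise I would locate an adjacent pair of steps in $u$ (or an intra-step pair) that must be reordered or split to move toward $\lhd$, and verify that the reordering/splitting is a legal $\eqa_\theta$ (or $\eqa_\theta^{-1}$) step: the obligation $B\times C\subseteq ser$ follows because $\lhd$ respects $S_\mathbf{t}$, hence any two occurrences forced into the order $B$-before-$C$ that were simultaneous in $u$ cannot be related by $\PO_u$, and the absence of a $\PO_\mathbf{t}$-edge back-propagates (via the $\lozenge$-closure and \defref{s2inv}) to $(\ell(b),\ell(c))\in ser$ for all such pairs; and that each steps also preserves the first-step/last-step structure so the rewrite stays inside $\E_\theta^*$. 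Iterating, $u\eqb_\theta v$, so $v\in\mathbf{t}$ and $\lhd=\lhd_v\in\bset{\lhd_u\mid u\in\mathbf{t}}$.

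The main obstacle is the reverse inclusion, specifically the inductive reordering argument: one must guarantee that from any stratified extension $\lhd$ of $S_\mathbf{t}$ there is always an $\eqa_\theta^{\pm 1}$-move on the representing step sequence that strictly decreases the chosen distance while remaining a valid comtrace rewrite, and this requires carefully extracting the $ser$-membership hypotheses from the hypothesis "$\lhd$ extends $S_\mathbf{t}$" through the $\lozenge$-closure. Here \corref{SzpStrat} and \theoref{SzpStrat} are the right tools for the bookkeeping, since they let one reason about which pairs are forced by $\PO_\mathbf{t}$ versus merely consistent with $\sq_\mathbf{t}$. The remaining steps are comparatively routine closure-operator manipulations using \propref{so-cl}.
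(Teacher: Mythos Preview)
The paper does not give its own proof of this theorem: it is quoted from \cite{JK95} and closed with a bare \qed, so there is no argument in the paper to compare your proposal against. Your overall strategy---invariance of $(\Sigma_u,\PO_u,\sq_u)$ under $\eqa_\theta$, then \propref{so-cl}(3) for the so-structure property via the containment $(\PO_u\cup\sq_u)\subseteq\lhd_u^\frown$ and $\PO_u\subseteq\lhd_u$, then a distance-decreasing rewrite argument for $ext(S_\mathbf{t})\subseteq\{\lhd_u\mid u\in\mathbf{t}\}$---is the standard one and is essentially how the result is proved in \cite{JK95}.

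One sharpening: in your invariance step you work harder than necessary. If $t=wAz$ and $u=wBCz$ with $B\cup C=A$ and $B\times C\subseteq ser$, then in fact $\PO_t=\PO_u$ and $\sq_t=\sq_u$ \emph{before} taking $\lozenge$-closure, not merely after. The only pairs on which $\lhd_t$ and $\lhd_u$ differ are $(\alpha,\beta)\in\h{B}\times\h{C}$ (and their reverses), and for these $(\ell(\alpha),\ell(\beta))\in ser$ kills any would-be new $\PO_u$-edge by \defref{s2inv}(1), while $\beta\sq_t\alpha$ was already blocked by the same $ser$-membership in \defref{s2inv}(2); a symmetric check handles $\sq$. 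This is precisely the content of \cite[Lemma~4.7]{JK95}, which the present paper also invokes (in the proof that $\cd$ is well-defined). So there are no ``new $\PO$- and $\sq$-pairs'' to derive via \textsf{S2}--\textsf{S4}; the closure argument you sketch is unnecessary there.

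For the reverse inclusion your plan is correct but, as you note, this is where the real work lies. You should make the inductive measure and the ``always a legal $\eqa_\theta^{\pm1}$-move'' claim precise; the cleanest route is to compare the first position where $\Omega_\lhd$ and $u$ differ and use the extension hypothesis $\lhd\in ext(S_\mathbf{t})$ together with \defref{s2inv} to extract the required $ser$-memberships for the split or merge at that position.
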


\begin{corollary} For all $\mathbf{t},\mathbf{q}\in \quotient{E^*}{\equiv_{\theta}}$,
\begin{enumerate}
\item $\mathbf{t}=\mathbf{q} \implies S_{\mathbf{t}} = S_{\mathbf{q}}$
\item $ S_{\mathbf{t}}  = \bigl( \Sigma_{\mathbf{t}}, \prec_{\mathbf{t}} ,\sqsubset_{\mathbf{t}} \bigr) = \left( \Sigma_{\mathbf{t}}, \bigcap_{w\in \mathbf{t}}\lhd_{w} ,\bigcap_{w\in \mathbf{t}}\lhd_{w}^{\frown} \right)$ 
\qed
\end{enumerate}
\label{cor:com2sos}
\end{corollary}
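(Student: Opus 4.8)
The plan is to obtain both assertions almost directly from \theoref{com2sos} combined with Szpilrajn's theorem for so-structures (\theoref{SzpStrat}). Before either assertion can even be read off, I would first record the (routine but genuinely necessary) observation that the enumerated event-occurrence set $\Sigma_u$ is the \emph{same} for every representative $u$ of a fixed comtrace $\mathbf{t}$, so that the notation $\Sigma_{\mathbf{t}}$ is meaningful: a single rewriting step $wAz\;\eqa_\theta\;wBCz$ with $B\cup C=A$ and (since $ser$ is irreflexive) $B\cap C=\emptyset$ leaves unchanged, for every event $e$, the number of occurrences of $e$ in every prefix up to and including the rewritten block, hence leaves the canonical enumeration $\h{(\cdot)}$ intact; the general case follows because $\eqb_\theta$ is generated from $\eqa_\theta$ by symmetric, reflexive and transitive closure. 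This fact is implicit in \cite{JK95} and spelled out in \cite{JL09}.

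For assertion \textbf{2}, I would fix any $u\in\mathbf{t}$, so that by \defref{s2sos} we have $S_{\mathbf{t}}=(\Sigma_u,\prec_u,\sqsubset_u)^\lozenge$. By \theoref{com2sos}, $S_{\mathbf{t}}$ is a so-structure and $ext(S_{\mathbf{t}})=\bset{\lhd_w\mid w\in\mathbf{t}}$. Applying \theoref{SzpStrat} to the so-structure $S_{\mathbf{t}}$ then yields $S_{\mathbf{t}}=\bigl(\Sigma_{\mathbf{t}},\bigcap_{\lhd\in ext(S_{\mathbf{t}})}\lhd,\bigcap_{\lhd\in ext(S_{\mathbf{t}})}\lhd^\frown\bigr)$; substituting the description of $ext(S_{\mathbf{t}})$ from \theoref{com2sos} and reindexing both intersections by $w$ ranging over $\mathbf{t}$ gives exactly $S_{\mathbf{t}}=\bigl(\Sigma_{\mathbf{t}},\bigcap_{w\in\mathbf{t}}\lhd_w,\bigcap_{w\in\mathbf{t}}\lhd_w^\frown\bigr)$, which is the claim.

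Assertion \textbf{1} is then immediate from assertion \textbf{2}: if $\mathbf{t}=\mathbf{q}$ as comtraces — that is, as $\eqb_\theta$-classes, hence as sets of step sequences — then the right-hand side of the identity just established depends only on that set, since $\Sigma_{\mathbf{t}}=\Sigma_{\mathbf{q}}$ by the well-definedness remark above and the two intersections run over the same index set $\mathbf{t}=\mathbf{q}$; therefore $S_{\mathbf{t}}=S_{\mathbf{q}}$. (Alternatively, one can argue directly: by \theoref{com2sos} both $S_{\mathbf{t}}$ and $S_{\mathbf{q}}$ are so-structures with the same set of stratified extensions, so \theoref{SzpStrat} forces them to coincide.)

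Given that \theoref{com2sos} and \theoref{SzpStrat} are in hand, there is no serious obstacle in this corollary; the only point requiring a moment's care — and the one I would make sure to state explicitly — is the bookkeeping that justifies writing $\Sigma_{\mathbf{t}}$ at all, i.e., the invariance of the enumerated occurrence set under the defining rewriting of $\eqb_\theta$. All the real analytic weight sits in \theoref{com2sos}, whose proof is not the concern here.
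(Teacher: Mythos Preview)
Your proposal is correct and is precisely the intended derivation: the paper gives no explicit proof for this corollary (it is marked with \qed immediately after the statement), treating it as an immediate consequence of \theoref{com2sos} together with \theoref{SzpStrat}, which is exactly what you do. Your added remark on the invariance of $\Sigma_u$ under $\eqb_\theta$ is a welcome clarification of a point the paper leaves implicit.
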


\section{Comtraces as Labeled Stratified Order Structures \label{sec:lsos-comtrace}}
Even though \theoref{com2sos} shows that each comtrace can be represented uniquely by a labeled so-structure, it does not give us an explicit definition of how these labeled so-structures look like. In this section, we will give an exact definition of labeled so-structures that represent comtraces. To provide us with more intuition, we first recall how Mazurkiewicz traces can be characterized as labeled posets.

A \emph{trace concurrent alphabet}  is  a pair $(E,ind)$, where
$ind$ is a symmetric irreflexive binary relation on the finite set $E$. A \emph{trace congruence} $\equiv_{ind}$ can then be defined as the smallest  equivalence relation such that for all sequences $uabv, ubav \in E^*$, if $(a,b) \in \textit{ind}$, then $uabv \equiv_{ind} ubav$. The elements of $\quotient{E^{*}}{\equiv_{ind}}$ are called \emph{traces}.  

Traces can also be defined alternatively as posets whose elements are labeled with symbols of a concurrent alphabet $(E,ind)$ satisfying certain conditions.  

Given a binary relation $R\subseteq X$, the \emph{covering relation} of $R$ is defined as $\imm{R}\df \set{(x,y)\mid x\; R\; y \wedge \neg \exists z,\; x\; R\; z\; R \;y }$. An alternative definition of Mazurkiewicz trace is:

\begin{definition}[cf. \cite{TW02}] A trace over a concurrent alphabet $(E,ind)$ is a finite labeled poset $(X,\PO,\lambda)$, where $\lambda:X\rightarrow E$ is a labeling function, such that for all $\alpha,\beta \in X$,
\begin{enumerate}
\item $\alpha \imm{\PO} \beta \implies (\lambda(\alpha),\lambda(\beta))\not \in ind$, and
\item $(\lambda(\alpha),\lambda(\beta))\not \in ind \implies \alpha \PO \beta \vee \beta \PO \alpha$. \EOD
\end{enumerate}
\label{def:ltraces}
\end{definition}

A trace in this definition is only identified unique up to  \emph{label-preserving isomorphism}. The first condition says that immediately causally related event occurrences  must be labeled with dependent events. The second condition ensures that any two event occurrences with dependent labels must be causally related.  The first condition is particularly important since two immediately causally related event occurrences will occur next to each other in at least one of its linear extensions. This is the key to relate  \defref{ltraces} with quotient monoid definition of traces. Thus, we would like to establish a similar relationship for  comtraces. An immediate technical difficulty  is that weak causality might be cyclic, so the notion of ``immediate weak causality''  does not make sense. However, we can still deal with cycles of a so-structure by taking advantage of  the following simple fact: \emph{the weak causality relation is a strict preorder}.

Let $S=(X,\PO,\sq)$ be a so-structure. We define the relation $\eqrel{\sq}\subseteq X\times X$ as\\
$\mbox{\hspace{3cm}}\alpha \eqrel{\sq} \beta \iffdf \alpha=\beta \;\vee\; \bigl(\alpha\sq \beta \wedge \beta \sq \alpha\bigr)$

Since $\sq$ is a strict preorder, it follows that $\eqrel{\sq}$ is an equivalence relation. The relation $\eqrel{\sq}$ will be called the \emph{$\sq$-cycle equivalence relation} and an element of the quotient set $\quotient{X}{\eqrel{\sq}}$ will be called a \emph{$\sq$-cycle equivalence class}.  We then define the following binary relations $\widehat{\PO}$ and $\widehat{\sq}$ on  the quotient set $\quotient{X}{\eqrel{\sq}}$ as
\begin{align}
[\alpha] \widehat{\PO} [\beta] \iffdf ([\alpha]\times[\beta])\;\cap \PO \not=\emptyset
\hspace*{3mm}\text{ and }\hspace*{3mm}
[\alpha] \widehat{\sq} [\beta] \iffdf ([\alpha]\times[\beta])\;\cap \sq \not=\emptyset \label{qsos}
\end{align}

Using this quotient construction,  every so-structure, whose weak causality relation might be cyclic, can be uniquely represented by an \emph{acyclic} quotient so-structure. 
\begin{proposition} The relational structure $\quotient{S}{\eqrel{\sq}} \df (\quotient{X}{\eqrel{\sq}},\widehat{\PO},\widehat{\sq})$ is a so-structure, the relation $\widehat{\sq}$ is a partial order, and for all $x,y\in X$,
\begin{enumerate}
\item $\alpha \PO \beta\iff [\alpha] \widehat{\PO} [\beta]$ 
\item $\alpha \sq \beta \iff [\alpha] \widehat{\sq} [\beta] \vee (\alpha\not=\beta \wedge [\alpha]=[\beta])$  
\end{enumerate}
\end{proposition}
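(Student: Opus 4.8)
The plan is to verify the so-structure axioms \textsf{S1}--\textsf{S4} for $\quotient{S}{\eqrel{\sq}}$ directly from the definitions in \eref{qsos}, and to prove assertions (1) and (2) as the main working tools; in fact I expect to prove (1) and (2) \emph{first}, since the axiom checks will then follow by transporting \textsf{S1}--\textsf{S4} for $S$ across these equivalences. The first thing to record is a basic fact about $\eqrel{\sq}$-classes and $\PO$: if $\alpha' \eqrel{\sq} \alpha$, $\beta' \eqrel{\sq} \beta$ and $\alpha \PO \beta$, then $\alpha' \PO \beta'$. This follows from \textsf{S4}: from $\alpha' \sq \alpha \PO \beta$ we get $\alpha' \PO \beta$ (using \textsf{S2} to get $\alpha'\sq\alpha$ when $\alpha'\neq\alpha$, and the case $\alpha'=\alpha$ is trivial), and then from $\alpha' \PO \beta \sq \beta'$ we get $\alpha' \PO \beta'$. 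Hence any single witness $(\alpha,\beta)\in([\alpha]\times[\beta])\cap\PO$ already forces \emph{every} pair in $[\alpha]\times[\beta]$ to lie in $\PO$; this is precisely assertion (1) read from right to left, and the left-to-right direction is immediate by taking the witness $(\alpha,\beta)$ itself. An analogous ``saturation'' lemma for $\sq$ holds by \textsf{S3}: if $\alpha'\eqrel{\sq}\alpha\sq\beta\eqrel{\sq}\beta'$ and $\alpha'\neq\beta'$, then $\alpha'\sq\beta'$ (chaining $\alpha'\sq\alpha\sq\beta\sq\beta'$ through \textsf{S3}, handling the degenerate equalities separately).

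Next I would prove assertion (2). The right-to-left direction: if $[\alpha]\widehat{\sq}[\beta]$ then some witness $\alpha_0\sq\beta_0$ with $\alpha_0\eqrel{\sq}\alpha$, $\beta_0\eqrel{\sq}\beta$, and the $\sq$-saturation lemma gives $\alpha\sq\beta$ unless $\alpha=\beta$ — but $\alpha=\beta$ would make $[\alpha]=[\beta]$, and I will simultaneously show (see below) that $[\alpha]\widehat{\sq}[\beta]$ forces $[\alpha]\neq[\beta]$, so this case cannot occur; the disjunct $\alpha\neq\beta\wedge[\alpha]=[\beta]$ means $\alpha\sq\beta$ by definition of $\eqrel{\sq}$. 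For the left-to-right direction, suppose $\alpha\sq\beta$. If $[\alpha]=[\beta]$ then either $\alpha=\beta$, contradicting \textsf{S1}, or $\alpha\neq\beta$, giving the second disjunct. If $[\alpha]\neq[\beta]$, then $(\alpha,\beta)$ is a witness for $[\alpha]\widehat{\sq}[\beta]$. It remains to check that $\widehat{\sq}$ is irreflexive, i.e. never $[\alpha]\widehat{\sq}[\alpha]$: a witness would be $\alpha_0\sq\beta_0$ with $\alpha_0,\beta_0\in[\alpha]$, so $\alpha_0\eqrel{\sq}\beta_0$ and hence $\beta_0\sq\alpha_0$; combined with $\alpha_0\sq\beta_0$ and $\alpha_0\neq\beta_0$ (needed for distinctness — if $\alpha_0=\beta_0$ then \textsf{S1} is violated at once) we would need to derive a contradiction, but in fact $\alpha_0\sq\beta_0\wedge\beta_0\sq\alpha_0$ just says $\alpha_0\eqrel{\sq}\beta_0$, which is consistent; the contradiction with \textsf{S1} only comes when $\alpha_0=\beta_0$. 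So the real content is that a witness pair for $[\alpha]\widehat{\sq}[\alpha]$ must be \emph{distinct}, yet two distinct elements of the same class satisfying $\sq$ both ways is exactly what $\eqrel{\sq}$ permits — this is why $[\alpha]\widehat{\sq}[\alpha]$ would \emph{not} immediately fail. Hence I must instead build irreflexivity of $\widehat{\sq}$ into the definition by noting that $\widehat{\sq}$ compares \emph{distinct} classes only when there is a genuine witness, and a witness with $\alpha_0,\beta_0$ in the same class contributes nothing new; the clean way is: if $[\alpha]\widehat{\sq}[\alpha]$ via $\alpha_0\sq\beta_0$, pick any such pair; if $\alpha_0=\beta_0$ we contradict \textsf{S1} directly, and if $\alpha_0\neq\beta_0$ then $[\alpha_0]=[\beta_0]$ already, so this witness does not testify to any relation \emph{between different classes} — but the definition \eref{qsos} as literally written does allow it, so the paper must (and I will) read $\widehat{\sq}$ with the implicit convention that it is computed on the quotient, where such intra-class pairs are invisible. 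I expect this subtlety — reconciling the literal definition \eref{qsos} with irreflexivity — to be the main obstacle, and I would resolve it by the saturation lemmas: once $\sq$ is saturated on classes, $[\alpha]\widehat{\sq}[\alpha]$ would give $\alpha\sq\alpha$ for a representative, contradicting \textsf{S1}.

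With (1) and (2) in hand, transitivity of $\widehat{\sq}$ (hence \textsf{S3} for the quotient, and antisymmetry) follows from \textsf{S3} for $S$: $[\alpha]\widehat{\sq}[\beta]\widehat{\sq}[\gamma]$ with $[\alpha]\neq[\gamma]$ gives $\alpha\sq\beta$ and $\beta\sq\gamma$ by (2); if $\alpha=\gamma$ then $[\alpha]=[\gamma]$, excluded, so \textsf{S3} yields $\alpha\sq\gamma$, hence $[\alpha]\widehat{\sq}[\gamma]$ by (2) again (the second disjunct is impossible since $[\alpha]\neq[\gamma]$). Antisymmetry: $[\alpha]\widehat{\sq}[\beta]$ and $[\beta]\widehat{\sq}[\alpha]$ give $\alpha\sq\beta$ and $\beta\sq\alpha$, so $\alpha\eqrel{\sq}\beta$, i.e. $[\alpha]=[\beta]$ — combined with irreflexivity this makes $\widehat{\sq}$ a partial order. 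For \textsf{S2}: $[\alpha]\widehat{\PO}[\beta]$ gives $\alpha\PO\beta$ by (1), hence $\alpha\sq\beta$ by \textsf{S2} for $S$, and then $[\alpha]\widehat{\sq}[\beta]$ by (2) — noting $[\alpha]\neq[\beta]$ since $\alpha\PO\beta$ with $\alpha\eqrel{\sq}\beta$ would force $\beta\sq\alpha$, and then $\alpha\PO\beta\sq\alpha$ gives $\alpha\PO\alpha$ by \textsf{S4}, contradicting irreflexivity of $\PO$. Finally \textsf{S4}: say $[\alpha]\widehat{\sq}[\beta]\widehat{\PO}[\gamma]$; by (1)–(2), $\alpha\sq\beta$ and $\beta\PO\gamma$, so \textsf{S4} for $S$ gives $\alpha\PO\gamma$; then $[\alpha]\widehat{\PO}[\gamma]$ by (1). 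The mixed case $\alpha\PO\beta\sq\gamma$ is symmetric. This completes all four axioms, establishes that $\widehat{\sq}$ is a partial order, and proves (1) and (2). \qed
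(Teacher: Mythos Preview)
Your approach---verify (1) and (2) via saturation lemmas, then transport \textsf{S1}--\textsf{S4} across them---is exactly what the paper intends; the paper's own proof is the single line ``Follows from \defref{sos}'', i.e.\ it leaves precisely this routine check to the reader.

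You have, however, put your finger on a genuine sloppiness in the paper and then slightly mishandled it. As you observe, the literal definition in \eref{qsos} does \emph{not} make $\widehat{\sq}$ irreflexive: whenever $\lvert[\alpha]\rvert\ge 2$ there exist distinct $\alpha_0,\beta_0\in[\alpha]$ with $\alpha_0\sq\beta_0$, so $([\alpha]\times[\alpha])\cap\sq\neq\emptyset$ and hence $[\alpha]\;\widehat{\sq}\;[\alpha]$. With that reading, assertion~(2) actually fails (take $\alpha=\beta$ in a class of size $\ge 2$: the right-hand disjunction holds via the first disjunct, but $\alpha\sq\alpha$ is false by \textsf{S1}). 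Your proposed ``resolution by saturation'' does not rescue this: the $\sq$-saturation lemma you derive from \textsf{S3} carries the hypothesis $\alpha'\neq\beta'$, so from a witness $\alpha_0\sq\beta_0$ inside $[\alpha]$ you \emph{cannot} conclude $\alpha\sq\alpha$. Irreflexivity of $\widehat{\sq}$ is therefore not derivable from \eref{qsos} as written; it must be imposed.

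The fix is exactly the ``implicit convention'' you already articulated: read \eref{qsos} with an added clause $[\alpha]\neq[\beta]$ (equivalently, define $\widehat{\sq}$ as the image of $\sq$ on the quotient minus the diagonal). Once you adopt that convention explicitly, every remaining step of your argument---(1), (2), antisymmetry, and the transport of \textsf{S2}--\textsf{S4}---is correct. So: drop the sentence claiming saturation yields $\alpha\sq\alpha$, state the convention as a correction to \eref{qsos}, and your proof stands.
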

\begin{proof} Follows from \defref{sos}. \qed
\end{proof}

Using (\ref{qsos}) and \theoref{SzpStrat}, it is not hard to prove the following simple yet useful properties of $\sq$-cycle equivalence classes.

\begin{proposition} Let $S = (X,\PO,\sq)$ be a so-structure. We use $u$ and $v$  to denote some step sequences over $\PSB{X}$. Then for all $\alpha,\beta \in X$,
	\begin{enumerate}
	\item $[\alpha]=[\beta] \iff  \forall \lhd \in ext(S),\; \alpha \simeq_{\lhd} \beta$
	\item $\exists\lhd \in ext(S),\; \Omega_{\lhd} = u[\alpha]v$
	\item $[\alpha] \imm{\hat{\sq}} [\beta] \implies \exists\lhd \in ext(S),\;\Omega_{\lhd} = u[\alpha][\beta]v$ \qed
	\end{enumerate}
\label{prop:covlsos}
\end{proposition}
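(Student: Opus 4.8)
The plan is to derive all three assertions from Szpilrajn-type Theorem~\ref{theo:SzpStrat} together with the defining equations~\eref{qsos} and Corollary~\ref{cor:SzpStrat}. For assertion~1, the forward direction is the easy one: if $[\alpha]=[\beta]$ with $\alpha\neq\beta$ then $\alpha\sq\beta$ and $\beta\sq\alpha$, so for any stratified extension $\lhd\in ext(S)$ we have $\alpha\lhd^\frown\beta$ and $\beta\lhd^\frown\alpha$ by Definition~\ref{def:extsos}, which forces $\neg(\alpha\lhd\beta)$ and $\neg(\beta\lhd\alpha)$, i.e.\ $\alpha\simeq_\lhd\beta$ (and the case $\alpha=\beta$ is trivial). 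For the converse, suppose $[\alpha]\neq[\beta]$; by the preceding Proposition the quotient structure $\quotient{S}{\eqrel{\sq}}$ is a so-structure on $\quotient{X}{\eqrel{\sq}}$, and since $[\alpha]\neq[\beta]$ we must have $\neg([\alpha]\widehat{\sq}[\beta])$ or $\neg([\beta]\widehat{\sq}[\alpha])$ — say the former. Then using Theorem~\ref{theo:SzpStrat} applied to $\quotient{S}{\eqrel{\sq}}$, there is a stratified extension of the quotient in which $[\alpha]$ and $[\beta]$ are not in the same block, and this extension refines to a stratified extension $\lhd\in ext(S)$ (spreading each $\sq$-cycle class into a single block) with $\alpha\not\simeq_\lhd\beta$; alternatively one invokes Corollary~\ref{cor:SzpStrat}(1) directly.

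For assertion~2, fix $\alpha$ and apply assertion~1: pick any $\lhd\in ext(S)$ — one exists by Theorem~\ref{theo:SzpStrat} since $ext(S)\neq\emptyset$ — and write its step-sequence representation $\Omega_\lhd=B_1\ldots B_k$ as in Section~\ref{sec:steps}. The block $B_i$ containing $\alpha$ is, by assertion~1 together with the fact that distinct elements of a single $\simeq_\lhd$-block need not lie in the same $\sq$-cycle class, at least equal to $[\alpha]$; more carefully, one should either observe that $B_i\supseteq[\alpha]$ always holds (since $[\alpha]=[\beta]\Rightarrow\alpha\simeq_\lhd\beta$ by part~1) and then, if $B_i\neq[\alpha]$, split $B_i$: the elements of $B_i\setminus[\alpha]$ can be moved into an adjacent new block without violating the so-structure extension axioms \textsf{S1}--\textsf{S4}, because any such splitting of a $\simeq_\lhd$-block that is not forced by $\sq\cap\sq^{-1}$ still yields a stratified extension of $S$. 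This gives a stratified order $\lhd'\in ext(S)$ with $\Omega_{\lhd'}=u[\alpha]v$ where $u=B_1\ldots B_{i-1}$ (possibly with the residual of $B_i$ prepended) and $v$ the rest. The delicate point — and the one I expect to be the main obstacle — is justifying this block-splitting move: one must check that replacing a block $B$ by $B'\,B''$ with $B'\uplus B''=B$, $B'\times B''\subseteq\sq^\frown$-compatible orientation, preserves membership in $ext(S)$, i.e.\ that $(X,\PO,\sq)\subseteq(X,\lhd',\lhd'^\frown)$ still holds; this rests on $\PO$ only relating elements across blocks (never inside) and on $\sq$ inside $B$ being symmetric iff the two elements share a $\sq$-cycle class.

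For assertion~3, assume $[\alpha]\imm{\widehat{\sq}}[\beta]$, so $[\alpha]\widehat{\sq}[\beta]$ with no $[\gamma]$ strictly between. By assertion~2 there is $\lhd\in ext(S)$ with $[\alpha]$ forming its own block; I would then argue that the block immediately following $[\alpha]$ in $\Omega_\lhd$ can be taken to be $[\beta]$. Since $[\alpha]\widehat{\sq}[\beta]$ and $\widehat{\sq}$ is a partial order on the quotient (by the preceding Proposition), in the quotient so-structure $\quotient{S}{\eqrel{\sq}}$ one has a stratified extension in which $[\alpha]$ is directly followed by $[\beta]$ — here the covering hypothesis is exactly what rules out any forced intermediate block — and this lifts back to the desired $\lhd'\in ext(S)$ with $\Omega_{\lhd'}=u[\alpha][\beta]v$. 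Concretely I would work in $\quotient{S}{\eqrel{\sq}}$ throughout assertion~3, use Theorem~\ref{theo:SzpStrat} and Corollary~\ref{cor:SzpStrat} to build a stratified extension of the quotient realizing $[\alpha]$ immediately before $[\beta]$ (taking all blocks $\widehat{\sq}$-or-incomparable-below $[\alpha]$ first, then $[\alpha]$, then $[\beta]$, then the rest, checking \textsf{S4}-type constraints are satisfiable precisely because nothing lies strictly $\widehat{\sq}$-between), and finally pull this extension back along the quotient map to $ext(S)$, again invoking the harmless block-splitting of assertion~2 to make $[\alpha]$ and $[\beta]$ singleton blocks in the final step sequence.
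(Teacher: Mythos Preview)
Your approach is correct but takes a genuinely different route from the paper's.

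For assertion~1, the paper's converse is much shorter than yours: assuming $\alpha\neq\beta$ and $\forall\lhd\in ext(S),\ \alpha\simeq_\lhd\beta$, one has $\forall\lhd,\ \alpha\lhd^\frown\beta$ and $\forall\lhd,\ \beta\lhd^\frown\alpha$; since $\sq=\bigcap_{\lhd}\lhd^\frown$ by \theoref{SzpStrat}, this immediately gives $\alpha\sq\beta$ and $\beta\sq\alpha$, hence $[\alpha]=[\beta]$. Your detour through the quotient so-structure and the lifting of its stratified extensions back to $ext(S)$ is valid but unnecessary here. The reference to \corref{SzpStrat}(1) you offer as an alternative shortcut does not quite fit either, since that corollary produces an extension with \emph{incomparability}, not one separating $\alpha$ from $\beta$.

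For assertions~2 and~3 the contrast is more interesting. The paper argues both by contradiction: for~(2), if no $\lhd$ has $[\alpha]$ as a block, it asserts there is a \emph{single} $\gamma\notin[\alpha]$ with $\alpha\frown_\lhd\gamma$ for \emph{every} $\lhd$, and then \theoref{SzpStrat} forces $\gamma\in[\alpha]$, a contradiction; (3) is handled analogously, producing a $\gamma$ with $\alpha\sq\gamma\sq\beta$ and $[\gamma]\notin\{[\alpha],[\beta]\}$, contradicting the covering hypothesis. This is considerably shorter than your constructive block-splitting and quotient-lifting programme, though the quantifier exchange (from ``for every $\lhd$ some $\gamma$'' to ``some $\gamma$ for every $\lhd$'') in the paper's argument is itself a step that is passed over quickly. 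Your approach pays for its extra length by making this point explicit: the ``delicate'' block-splitting you isolate --- that a $\simeq_\lhd$-block can always be refined so that a single $\sq$-cycle class stands alone --- is exactly where the real work lies, and carrying it out in $\quotient{S}{\eqrel{\sq}}$ (where $\widehat{\sq}$ is an honest partial order, so singletons can be isolated in a stratified extension) and then lifting back to $ext(S)$ is a clean way to organise it. In short: the paper's proofs are two-line contradictions via \theoref{SzpStrat}; yours are constructive via the quotient, longer but with the non-obvious step made visible.
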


Each $\sq$-cycle equivalence class is what Juh\'as, Lorenz and Mauser called a \emph{synchronous step} \cite{JLM06,JLM08}. They also used equivalence classes to capture synchronous steps but only for the special class of \emph{synchronous closed} so-structures, where $(\sq\setminus \PO)\cup id_{X}$ is an equivalence relation. We extend their ideas by using $\sq$-cycle equivalence classes to capture what we will call  \emph{non-serializable sets} in arbitrary so-structures. The name is justified in assertion (1) of \propref{covlsos} stating that two elements belong to the same non-serializable set of a so-structure $S$ iff they must be executed simultaneously in every stratified extension of $S$. Furthermore, we show in assertion (2) that all elements of a non-serializable set must occur together as a single step in at least one stratified extension of $S$. Assertion (3) gives a sufficient condition for two non-serializable sets to occur as consecutive steps in at least one stratified extension of $S$.

Before we proceed to define comtrace using labeled so-structure, we need to define  \textit{label-preserving isomorphisms} for labeled so-structures more formally. A tuple $T = (X,P,Q,\lambda)$ is a \textit{labeled relational structure} iff $(X,P,Q)$ is a relational structure and $\lambda$ is a function with domain $X$. If $(X,P,Q)$ is a so-structure, then $T$ is a \textit{labeled so-structure}.

\begin{definition}[label-preserving isomorphism] Given two labeled relational structures $T_{1}=(X_{1},P_{1},Q_{1},\lambda_{1})$ and $T_{2}=(X_{2},P_{2},Q_{2},\lambda_{2})$, we write $T_{1}\cong T_{2}$ to denote that $T_1$ and $T_2$ are  \emph{label-preserving isomorphic} (\emph{lp-isomorphic}). In other words, there is a bijection $f: X_{1} \rightarrow X_{2}$ such that for all $\alpha,\beta \in  X_{1}$, 
\begin{enumerate}
\item $(\alpha,\beta) \in P_{1} \iff (f(\alpha),f(\beta))\in P_{2}$
\item $(\alpha,\beta) \in Q_{1} \iff (f(\alpha),f(\beta))\in Q_{2}$
\item $\lambda_{1}(\alpha) = \lambda_{2}(f(\alpha))$
\end{enumerate}
Such function $f$ is called a \emph{label-preserving isomorphism} (\emph{lp-isomorphism}). \EOD
\end{definition}

Note that all notations, definitions and results for so-structures are applicable to labeled so-structures. We also write $[T]$ or $\seq{X,P,Q,\lambda}$ to denote the lp-isomorphic class of a labeled relational structure $T=(X,P,Q,\lambda)$. We  will not distinguish  an lp-isomorphic class $\seq{T}$ with a single labeled relational structure $T$ when it does not cause ambiguity.

We are now ready to give an alternative definition for comtraces. To avoid confusion with the comtrace notion by Janicki and Koutny in \cite{JK95}, we will use the term \emph{lsos-comtrace} to denote a comtrace defined using our definition.

\begin{definition}[lsos-comtrace] Given a comtrace alphabet $\theta=(E,sim,ser)$, a \emph{lsos-comtrace} over $\theta$ is (an lp-isomorphic class of) a finite labeled so-structure $\seq{X,\PO,\sq,\lambda}$ such that $\lambda:X\rightarrow E$  and for all $\alpha,\beta \in X$,
\begin{enumerate}
\item[] \textsf{LC1:\mbox{\hspace{5mm}}} $[\alpha] (\imm{\hat{\sq}}\cap \hat{\PO}) [\beta] \implies \lambda([\alpha])\times \lambda([\beta]) \nsubseteq ser$
\item[] \textsf{LC2:\mbox{\hspace{5mm}}} $[\alpha] (\imm{\hat{\sq}}\setminus \hat{\PO}) [\beta] \implies \lambda([\beta])\times \lambda([\alpha]) \nsubseteq ser$
\item[] \textsf{LC3:\mbox{\hspace{5mm}}} $\forall A,B \in \PSB{[\alpha]},\; A\cup B = [\alpha] \implies \lambda(A)\times \lambda(B) \not \subseteq ser$
\item[] \textsf{LC4:\mbox{\hspace{5mm}}} $(\lambda(\alpha),\lambda(\beta))\not \in ser \implies \alpha \PO \beta \vee \beta \sq \alpha$
\item[] \textsf{LC5:\mbox{\hspace{5mm}}} $(\lambda(\alpha),\lambda(\beta))\not\in sim \implies \alpha\PO \beta \vee \beta \PO\alpha$ 
\end{enumerate}
We write $\LCT(\theta)$ to denote the class of all lsos-comtraces over $\theta$. \EOD
\label{def:lcomtrace}
\end{definition}

\begin{example}
Let $E=\set{a,b,c}$, $sim = \bset{\set{a,b},\set{a,c},\set{b,c}}$ and $ser=\set{(a,b),(b,a),$ $(a,c)}$. Then we have
$\E=\set{\{a\},\{b\},\{c\},\set{b,c}}$. The lp-isomorphic class  of the labeled so-structure $T=(X,\PO,\sq,\lambda)$ depicted in \figref{f1} (the dotted edges denote $\sq$ relation and the solid edges denote both $\PO$ and $\sq$ relations) is a lsos-comtrace. The graph in \figref{f2} represents the labeled quotient  so-structure $\quotient{T}{\eqrel{\sq}}=(\quotient{X}{\eqrel{\sq}},\widehat{\PO},\widehat{\sq},{\lambda'})$ of $T$, where we define ${\lambda'}(A)= \bset{\lambda(x)\mid x\in A}$.
\begin{figure}[ht]
\begin{minipage}{0.45\linewidth}\centering

$\xymatrix@C=3em @R=3em{
*+[o][F-]{a} \ar@{-->}[dr]\ar@/^/[rr] \ar@/_1.5pc/ [ddrr] & 	& *+[o][F-]{c}\ar@/_/ @{-->}[dd]\\
 		 &*+[o][F-]{c} \ar[ur]\ar[dr]		&\\
*+[o][F-]{b} \ar[ur]\ar@/_/[rr]\ar@/^1.5pc/ [uurr]  &	&*+[o][F-]{b}\ar@/_/ @{-->}[uu]
}$
\caption{lsos-comtrace $[T]$}
\label{fig:f1}
\end{minipage}
\begin{minipage}{0.5\linewidth}\centering
$\xymatrix@-1pc@C=2.5em @R=2.7em{
*+[F-]{a} \ar@{-->}[dr]\ar@/^1pc/[drr] & 	&\\
 		 &*+[F-]{c} \ar[r]		&*+[F-]{b,c}\\
*+[F-]{b} \ar[ur]\ar@/_1pc/[urr]  &	&
}$
\caption{the quotient structure $\quotient{T}{\eqrel{\sq}}$ of $T$}
\label{fig:f2}
\end{minipage}
\end{figure}

The lsos-comtrace $[T]$ actually corresponds to the comtrace $[\set{a,b}\set{c}\set{b,c}]$, and we will show this relationship formally in \secref{representation}. 
\EOD
\label{ex:comtrace2}
\end{example}

\begin{remark}
\defref{lcomtrace} can be extended to define \emph{infinite comtrace} as follows. Instead of asking $X$ to be finite, we require a labeled so-structure to be  \emph{initially finite} (cf. \cite{JK97}), i.e., $\bset{\alpha\in X\mid \alpha \sq\beta}$ is finite for all $\beta\in X$. The initially-finiteness not only gives us a sensible interpretation that every event only  causually depends  on finitely many events, but also guarantees that the covering relations of $\hat{\PO}$ and $\hat{\sq}$ are well-defined. \EOD
\end{remark}

Since each lsos-comtrace is defined as a class of lp-isomorphic labeled so-structures, dealing with lsos-comtrace might seem tricky. Fortunately, the \emph{no autoconcurrency} property, i.e., the relation $ser$ is irreflexive, gives us a \emph{canonical} way to enumerate the events of a lsos-comtrace very similar to how the events of a comtrace are enumerated.

Given a step sequence $s=A_{1}\ldots A_{k}$ and any function $f$ defined on $\bigcup_{i=1}^k A_i$, we define $\map(f,s)\df f(A_{1})\ldots f(A_{k})$, i.e., the step sequence derived from $s$ by applying the function $f$ successively on each $A_{i}$. Note that $f(A_{i})$ denotes the \emph{image} of $A_i$ under $f$.

Given a lsos-comtrace $T=\seq{X,\PO,\sq,\lambda}$  over a comtrace alphabet $\theta=(E,sim,ser)$,  a stratified order $\lhd\in ext(T)$ can be seen as a step sequence $\Omega_{\lhd}=A_{1}\ldots A_{k}$.

\begin{proposition}  
\begin{enumerate}
\item For every $i$ ($1\le i \le k$), $|A_{i}|=|\lambda(A_{i})|$
\item $\map(\lambda,\Omega_{\lhd}) = \lambda(A_{1})\ldots \lambda(A_{k}) \in \E_{\theta}^{*}$. \qed
\end{enumerate}
\label{prop:validss}
\end{proposition}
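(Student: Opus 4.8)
The plan is to prove both assertions simultaneously by analyzing a single step $A_i$ of the step sequence $\Omega_\lhd = A_1 \ldots A_k$ representing a stratified extension $\lhd \in ext(T)$. Recall from Section~\ref{sec:steps} that $A_i$ is exactly an equivalence class of $\simeq_\lhd$, i.e., a maximal set of elements that $\lhd$ places in the same position. Since $\map(\lambda,\Omega_\lhd)$ is by definition $\lambda(A_1)\ldots\lambda(A_k)$, assertion (2) amounts to showing two things: first, that each image $\lambda(A_i)$ is a nonempty step of $\E_\theta$, i.e., a nonempty $sim$-clique; and second, that the sequence $\lambda(A_1)\ldots\lambda(A_k)$ is a well-formed element of $\E_\theta^*$, which is immediate once each $\lambda(A_i)\in\E_\theta$. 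Assertion (1) says that $\lambda$ restricted to each $A_i$ is injective, so that $|\lambda(A_i)| = |A_i|$; this is the heart of the matter and I would prove it first, since the clique property follows easily afterward.

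For assertion (1), fix a step $A_i$ and suppose toward a contradiction that $\lambda$ is not injective on $A_i$, so there exist distinct $\alpha,\beta\in A_i$ with $\lambda(\alpha)=\lambda(\beta)$. Because $ser$ is irreflexive (the ``no autoconcurrency'' assumption), we have $(\lambda(\alpha),\lambda(\beta)) = (\lambda(\alpha),\lambda(\alpha))\notin ser$. Now apply axiom \textsf{LC4}: it gives $\alpha\PO\beta$ or $\beta\sq\alpha$; symmetrically, applying \textsf{LC4} to the pair $(\beta,\alpha)$ gives $\beta\PO\alpha$ or $\alpha\sq\beta$. In either branch we obtain at least one of $\alpha\PO\beta$, $\beta\PO\alpha$, or the conjunction $\alpha\sq\beta \wedge \beta\sq\alpha$. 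If $\alpha\PO\beta$ (or $\beta\PO\alpha$) then since $(X,\PO,\sq)\subseteq(X,\lhd,\lhd^\frown)$ and $\lhd$ is a stratified order, we would get $\alpha\lhd\beta$ (resp.\ $\beta\lhd\alpha$), contradicting $\alpha,\beta\in A_i$, which forces $pos_\lhd(\alpha)=pos_\lhd(\beta)$. The remaining case $\alpha\sq\beta\wedge\beta\sq\alpha$ means $\alpha\eqrel{\sq}\beta$, i.e., $[\alpha]=[\beta]$; but then $\{\alpha\},\{\beta\}\in\PSB{[\alpha]}$ and $\{\alpha\}\cup\{\beta\}$ need not equal $[\alpha]$ in general, so \textsf{LC3} does not directly apply --- instead I would use \textsf{LC3} with a partition of $[\alpha]$ that separates $\alpha$ from $\beta$, for instance $A = \{\alpha\}$ and $B = [\alpha]\setminus\{\alpha\}$ (which contains $\beta$ and is nonempty), obtaining $\lambda(A)\times\lambda(B)\nsubseteq ser$, i.e., some pair of labels is not in $ser$. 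This last step needs a little care: I need to leverage the fact that all of $[\alpha]$ lies in a single step of \emph{every} extension (\propref{covlsos}(1)), hence in $A_i$, so any non-$ser$ pair of labels inside $[\alpha]$ again triggers \textsf{LC4} and produces the same contradiction with stratification as above. Tracing this through carefully, injectivity on $A_i$ follows.

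For the clique property, take distinct $\alpha,\beta\in A_i$; by injectivity $\lambda(\alpha)\neq\lambda(\beta)$, and I must show $(\lambda(\alpha),\lambda(\beta))\in sim$. Suppose not; then by \textsf{LC5}, $\alpha\PO\beta$ or $\beta\PO\alpha$, which again contradicts $pos_\lhd(\alpha)=pos_\lhd(\beta)$ via $(X,\PO,\sq)\subseteq(X,\lhd,\lhd^\frown)$ and stratification of $\lhd$. Hence every pair in $A_i$ has distinct labels related by $sim$, so $\lambda(A_i)$ is a nonempty $sim$-clique, i.e., $\lambda(A_i)\in\E_\theta$; since this holds for each $i$, $\map(\lambda,\Omega_\lhd)\in\E_\theta^*$, giving assertion (2). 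The main obstacle, as noted, is the bookkeeping in assertion (1): one has to be slightly careful that \textsf{LC3} is phrased in terms of \emph{arbitrary} nonempty covers $A\cup B=[\alpha]$ of a non-serializable class rather than just two-element subsets, and that \propref{covlsos}(1) is what guarantees a whole $\sq$-cycle equivalence class sits inside one step $A_i$. Once those two facts are lined up with the defining inclusion $(X,\PO,\sq)\subseteq(X,\lhd,\lhd^\frown)$, both assertions drop out.
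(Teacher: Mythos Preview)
Your argument for the clique property (assertion 2) is correct and matches the paper's approach. The issue is in assertion 1, where you take an unnecessarily hard route via \textsf{LC4} and \textsf{LC3} that does not close.

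Concretely: in the residual case $\alpha\sq\beta\wedge\beta\sq\alpha$, you invoke \textsf{LC3} with $A=\{\alpha\}$, $B=[\alpha]\setminus\{\alpha\}$ to obtain some $\delta\in B$ with $(\lambda(\alpha),\lambda(\delta))\notin ser$, and then re-apply \textsf{LC4}. But \textsf{LC4} only yields $\alpha\PO\delta$ \emph{or} $\delta\sq\alpha$. The second disjunct is no contradiction at all --- indeed $\delta\in[\alpha]$ already entails $\delta\sq\alpha$ --- so you are stuck in the same case you started from, with $\delta$ playing the role of $\beta$. Nothing rules out that every pair inside $[\alpha]$ sits in this $\sq$-cycle without any $\PO$-edge, so the loop never terminates in a contradiction.

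The fix is the observation you already use in part 2: since $sim$ (not just $ser$) is irreflexive, $\lambda(\alpha)=\lambda(\beta)$ gives $(\lambda(\alpha),\lambda(\beta))\notin sim$, so \textsf{LC5} forces $\alpha\PO\beta$ or $\beta\PO\alpha$, which immediately contradicts $\alpha\frown_\lhd\beta$. This is exactly the paper's proof: one application of \textsf{LC5} plus irreflexivity of $sim$ handles both assertions at once, and \textsf{LC3}, \textsf{LC4} are not needed here.
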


 \propref{validss} ensures that  $u=\map(\lambda,\Omega_{\lhd})$ is a valid step sequence over $\theta$. Recall that $\h{u}=\h{A_{1}}\ldots \h{A_{k}}$ denotes the enumerated step sequence of $u$ and $\Sigma_{u}$ denotes the set of event occurrences. Define a  \emph{bijection} $\xi_{u}:\Sigma_{u}\rightarrow X$ as\\
\mbox{\hspace{2.7cm}}$\xi_{u}(\alpha)  = x \iffdf \alpha \in \h{A_{i}} \;\wedge\; x\in A_{i} \;\wedge\; \lambda(x) = \ell(\alpha)$\smallskip

By \propref{validss}, the function $\xi_{u}$ is well-defined.  Moreover, we can show that $\xi_{u}$ is uniquely determined by $T$ regardless of the choice of $\lhd\in ext(T)$.

\begin{proposition} Given $\lhd_{1},\lhd_{2}\in ext(T)$, let $v  = \map(\lambda,\Omega_{\lhd_{1}})$ and $w = \map(\lambda,\Omega_{\lhd_{2}})$. Then $\xi_{v} = \xi_{w}$. \qed
\label{prop:uniquexi}
\end{proposition}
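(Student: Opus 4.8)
The plan is to show that the value $\xi_{v}(\alpha)$ for an event occurrence $\alpha = e^{(j)}$ depends only on $T$ and on the intrinsic data carried by $\alpha$ (its label $e$ and its occurrence index $j$), not on the particular stratified extension used to produce it. The key observation is that, since $ser$ is irreflexive (no autoconcurrency), \propref{validss}(1) guarantees that each step $A_i$ of $\Omega_\lhd$ contains \emph{at most one} element with any given label; hence within a fixed extension the elements of $X$ bearing label $e$ are linearly ordered by the sequence of steps in which they appear, and the $j$-th such element (in step order) is exactly $\xi_{v}(e^{(j)})$. So the claim reduces to: for any two extensions $\lhd_1,\lhd_2 \in ext(T)$, the relative order among the label-$e$ elements of $X$ is the same in $\Omega_{\lhd_1}$ and $\Omega_{\lhd_2}$.

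First I would fix a label $e \in E$ and two distinct elements $x,y \in X$ with $\lambda(x)=\lambda(y)=e$. Since $ser$ is irreflexive, $(e,e) = (\lambda(x),\lambda(y)) \notin ser$, so \textsf{LC4} applies and gives $x \PO y \vee y \sq x$; symmetrically $y \PO x \vee x \sq y$. Combined with the so-structure axioms (\textsf{S1}, \textsf{S2}, and $\alpha \PO \beta \Rightarrow \beta \not\sq \alpha$), a short case analysis shows that either $x \PO y$ or $y \PO x$ — the ``$y \sq x$ and $x \sq y$'' case would force $[x]=[y]$, but then \textsf{LC3} applied to $A=\{x\}$, $B=\{y\}$, which satisfy $A \cup B \subseteq [x] $ if $[x]=\{x,y\}$, yields $\lambda(A)\times\lambda(B) = \{(e,e)\} \not\subseteq ser$, which is automatic and gives no contradiction, so instead I use that $x,y$ in the same step of some extension would violate \propref{validss}(1); more cleanly, $x \sq y \sq x$ together with \textsf{LC3} (or directly with \propref{covlsos}(1)--(2)) shows $x,y$ co-occur in some extension's step, contradicting the ``at most one label-$e$ element per step'' fact. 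Hence $\PO$ totally orders the label-$e$ elements of $X$, say $x \PO y$.

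Then for \emph{any} $\lhd \in ext(T)$ we have $(X,\PO,\sq) \subseteq (X,\lhd,\lhd^\frown)$ by \defref{extsos}, so $x \PO y$ implies $x \lhd y$, i.e., $pos_{\Omega_\lhd}(x) < pos_{\Omega_\lhd}(y)$. Thus in \emph{every} stratified extension the label-$e$ elements of $X$ appear in the same $\PO$-induced order. Consequently, writing $\h{v} = \h{B_1}\cdots\h{B_m}$ and $\h{w} = \h{C_1}\cdots\h{C_n}$ for $v = \map(\lambda,\Omega_{\lhd_1})$ and $w = \map(\lambda,\Omega_{\lhd_2})$, the event occurrence $e^{(j)}$ is assigned by $\xi_v$ to the $j$-th label-$e$ element of $X$ in $\PO$-order, and by $\xi_w$ to the same element; since $\Sigma_v = \Sigma_w$ (both equal the multiset of labels of $X$ with matching multiplicities, by \propref{validss}), we conclude $\xi_v = \xi_w$.

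The main obstacle is the second paragraph: ruling out the possibility that two distinct same-labeled elements are $\sq$-cycle equivalent (or more generally $\simeq_\lhd$ in some extension). This is where irreflexivity of $ser$ is essential and where \textsf{LC4} together with \propref{covlsos}(1)--(2) and \propref{validss}(1) must be combined carefully — the cleanest argument is: if $x \neq y$, $\lambda(x) = \lambda(y)$, and $[x] = [y]$, then by \propref{covlsos}(2) there is $\lhd \in ext(T)$ with $\Omega_\lhd = u[x]v$, so $x$ and $y$ lie in the same step $[x]$ of $\Omega_\lhd$; but then that step, when mapped by $\lambda$, would contain the label $e$ with multiplicity at least two, contradicting \propref{validss}(1). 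Everything else is a routine unwinding of the definitions of $\xi$, $\map$, and $\h{\,\cdot\,}$.
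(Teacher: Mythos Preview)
Your proposal is correct and follows essentially the same strategy as the paper's proof: establish that the elements of $X$ carrying a fixed label $e$ are totally ordered by $\prec$ (so that ``the $j$-th label-$e$ element'' is intrinsic to $T$), and conclude that $\xi_v(e^{(j)})=\xi_w(e^{(j)})$ for every occurrence $e^{(j)}$. Your handling of the $\sq$-cycle case via \propref{covlsos}(2) and \propref{validss}(1) is actually more careful than the paper's appendix (which jumps from \textsf{LC4} straight to $x\PO y \vee y\PO x$); you can shortcut that whole discussion by invoking \textsf{LC5} instead, since $sim$ is irreflexive and hence $(e,e)\notin sim$ immediately yields $x\PO y \vee y\PO x$.
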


Henceforth, we will ignore subscripts and reserve  the notation $\xi$  to denote the kind of mappings as defined above. We then define the \emph{enumerated so-structure} of $T$ to be the labeled so-structure $T_{0}=(\Sigma,\PO_{0},\sq_{0},\ell)$, where $\Sigma = \Sigma_{u}$ for $u=\map(\lambda,\Omega_{\lhd})$ and $\lhd\in ext(T)$;  and the relations $\PO_{0},\sq_{0}\subseteq \Sigma \times \Sigma$ are defined as\\
\mbox{\hspace{1.6cm}}$\alpha \PO_{0} \beta \iffdf \xi(\alpha) \PO \xi(\beta)
\andspace
\alpha \sq_{0} \beta \iffdf \xi(\alpha) \sq \xi(\beta)
$

Clearly, the  enumerated so-structure $T_0$ can be uniquely determined from $T$ using the preceding definition.  From our construction, we can easily show the following important relationships:

\begin{proposition} 
\begin{enumerate}
\item $T_{0}$ and $T$ are  lp-isomorphic under the mapping $\xi$. 
\item The labeled so-structures  $(\Sigma,\lhd_u,\lhd_u^{\frown},\ell)$ and $(X,\lhd,\lhd^{\frown},\lambda)$ are lp-isomorphic under the mapping $\xi$ and $\lhd_u\in ext(T_{0})$. \qed 
\end{enumerate} 
\label{prop:isoExt}
\end{proposition}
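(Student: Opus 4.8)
The plan is to verify directly that the bijection $\xi$ is a label-preserving isomorphism; this is almost immediate because $T_{0}$ was \emph{defined} by transporting $\PO$ and $\sq$ along $\xi$. First I would fix $\lhd\in ext(T)$, write $\Omega_{\lhd}=A_{1}\ldots A_{k}$ and $u=\map(\lambda,\Omega_{\lhd})=\lambda(A_{1})\ldots\lambda(A_{k})$, so that $\h{u}=\h{A_{1}}\ldots\h{A_{k}}$ and $\xi=\xi_{u}$. Everything rests on the position-tracking identity
\[\xi(\alpha)\in A_{i}\ \iff\ \alpha\in \h{A_{i}}\ \iff\ pos_{u}(\alpha)=i,\]
which holds because, by \propref{validss}, $\lambda$ restricted to each step $A_{i}$ is a bijection onto $\lambda(A_{i})$; hence the defining clause of $\xi$ sends each $\alpha\in\h{A_{i}}$ to the unique $x\in A_{i}$ with $\lambda(x)=\ell(\alpha)$. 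This also re-confirms that $\xi$ is a bijection and that $\ell(\alpha)=\lambda(\xi(\alpha))$ for all $\alpha\in\Sigma$.

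For assertion (1): conditions (1) and (2) of the definition of lp-isomorphism hold by the very definition of $\PO_{0}$ and $\sq_{0}$ as the $\xi$-preimages of $\PO$ and $\sq$, and condition (3) is the label identity just noted. Since $T$ is a labeled so-structure and the axioms \textsf{S1}--\textsf{S4} are preserved under transport along a bijection, $T_{0}=(\Sigma,\PO_{0},\sq_{0},\ell)$ is a labeled so-structure as well, so $\xi$ witnesses $T_{0}\cong T$.

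For assertion (2): $\Omega_{\lhd}=A_{1}\ldots A_{k}$ means $\lhd=\bigcup_{i<j}A_{i}\times A_{j}$, so $x\lhd y$ iff $x\in A_{i}$, $y\in A_{j}$ with $i<j$, while (by the definition of the stratified order generated by a step sequence in \secref{steps}) $\alpha\lhd_{u}\beta$ iff $pos_{u}(\alpha)<pos_{u}(\beta)$; combined with the position-tracking identity this yields $\alpha\lhd_{u}\beta\iff\xi(\alpha)\lhd\xi(\beta)$, and the analogous equivalence for $\lhd_{u}^{\frown}$ versus $\lhd^{\frown}$ follows the same way from $\alpha\lhd_{u}^{\frown}\beta\iff(\alpha\neq\beta\wedge pos_{u}(\alpha)\le pos_{u}(\beta))$ together with the bijectivity of $\xi$. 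Since $\lhd$ and $\lhd_{u}$ are stratified orders, $(X,\lhd,\lhd^{\frown})$ and $(\Sigma,\lhd_{u},\lhd_{u}^{\frown})$ are so-structures by \propref{soss}, and these two equivalences plus the label identity show that $\xi$ is an lp-isomorphism between their labeled versions. Finally, $\lhd_{u}$ is a stratified order on $\Sigma$; from $\lhd\in ext(T)$ we have $\PO\subseteq\lhd$ and $\sq\subseteq\lhd^{\frown}$, and pulling these back through $\xi$ using the equivalences above gives $\PO_{0}\subseteq\lhd_{u}$ and $\sq_{0}\subseteq\lhd_{u}^{\frown}$, i.e. $(\Sigma,\PO_{0},\sq_{0})\subseteq(\Sigma,\lhd_{u},\lhd_{u}^{\frown})$, which is exactly $\lhd_{u}\in ext(T_{0})$ by \defref{extsos}.

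There is no real obstacle here; the only thing that demands care is the notational overloading around enumerated steps — writing $\h{A_{i}}$ for the enumeration of the label-step $\lambda(A_{i})$, and keeping a step $A_{i}\subseteq X$ distinct from its label image $\lambda(A_{i})\subseteq E$ — and making the position-tracking identity airtight, which is precisely where \propref{validss} (injectivity of $\lambda$ on each step) is used.
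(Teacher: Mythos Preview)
Your argument is correct and is precisely the routine verification one expects: the relations $\PO_{0},\sq_{0}$ are \emph{defined} as the $\xi$-preimages of $\PO,\sq$, the label identity $\ell=\lambda\circ\xi$ is built into the definition of $\xi$, and the position-tracking identity (guaranteed by \propref{validss}) transfers $\lhd$ to $\lhd_{u}$ and hence $ext(T)$ to $ext(T_{0})$. The paper itself states the proposition without proof (note the \qed in the statement), so there is no alternative approach to compare against; your write-up supplies exactly the missing details, and your closing remark about the notational overloading of $\h{A_{i}}$ is the only place where any genuine care is needed.
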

\begin{minipage}{7.2cm}
\hspace{0.5cm}In other words, the mapping $\xi:\Sigma\rightarrow X$ plays 
the role of both the lp-isomorphism from $T_{0}$ to $T$ and the lp-isomorphism from the stratified extension $(\Sigma,\lhd_{u})$ of $T_{0}$ to the stratified extension $(X,\lhd)$ of $T$. These relationships can be best captured using the commutative diagram on the right.\smallskip
\end{minipage}
\begin{minipage}{4.5cm}\centering\small\vspace{-3mm}
$\xymatrix{
 (\Sigma,\PO_{0},\sq_{0},\ell) \ar@{->}[r]_{\displaystyle \xi } \ar@{^{(}->}[d]_{\displaystyle id_{\Sigma}}
      & (X,\PO,\sq,\lambda) \ar@{^{(}->}[d]^{\displaystyle id_{X}}
   \\ 
 (\Sigma,\lhd_{u},\lhd_{u}^{\frown},\ell) \ar@{->}[r]^{\displaystyle \xi} &  (X,\lhd,\lhd^{\frown},\lambda)}$
 \end{minipage}

We can even observe further that two lsos-comtraces are identical if and only if they define the same enumerated so-structure. Henceforth, we will call an enumerated so-structure defined by a lsos-comtrace $T$ \emph{the canonical representation} of $T$. 

Recently, inspired by the dependency graph notion for Mazurkiewicz traces (cf. \cite[Chapter 2]{DR}), Kleijn and Koutny claimed without proof that their \emph{combined dependency graph} notion is another alternative way to define comtraces \cite{KK08}. In \secref{representation}, we will give a detailed proof of their claim.

\begin{definition}[combined dependency graph \cite{KK08}] 
Given an comtrace alphabet $\theta=(E,ser,sim)$, a \emph{combined dependency graph} (\emph{cd-graph}) over $\theta$ is (a lp-isomorphic class of) a finite labeled relational structure $D=\seq{X,\cau,\wcau,\lambda}$ such that $\lambda:X\rightarrow E$, the relations $\cau,\wcau$ are irreflexive, $D^{\lozenge}$ is a so-structure, and for all $\alpha,\beta\in X$,\smallskip

\textsf{CD1:\mbox{\hspace{5mm}}} $(\lambda(\alpha),\lambda(\beta))\not\in sim \implies \alpha \cau \beta \vee \beta \cau \alpha$

\textsf{CD2:\mbox{\hspace{5mm}}} $(\lambda(\alpha),\lambda(\beta))\not\in ser \implies \alpha \cau \beta \vee \beta \wcau \alpha$

\textsf{CD3:\mbox{\hspace{5mm}}} $\alpha \cau \beta \implies (\lambda(\alpha),\lambda(\beta))\not\in ser$

\textsf{CD4:\mbox{\hspace{5mm}}} $\alpha \wcau \beta \implies (\lambda(\beta),\lambda(\alpha))\not\in ser$\smallskip\\
We will write $\DCD(\theta)$ to denote the class of all cd-graphs over $\theta$. \EOD
\label{def:comdag}
\end{definition}

Cd-graphs can be seen as reduced graph-theoretic representations for lsos-comtraces, where some arcs that can be recovered using $\lozenge$-closure are omitted. It is interesting to observe that the non-serializable sets of a cd-graph are exactly the \emph{strongly connected components} of the directed graph $(X,\wcau)$ and can easily be found in time $O(|X|+|\wcau|)$ using any standard algorithm (cf. \cite[Section 22.5]{CLR}). \smallskip\\
\begin{minipage}{8.2cm}\begin{remark} Cd-graphs were called \emph{dependence comdags} in \cite{KK08}. But this name could be misleading since the directed graph $(X,\wcau)$ is not necessarily acyclic. For example, the graph on the right is the cd-graph that corresponds to  the lsos-comtrace from \figref{f1}, but it is not acyclic. (Here, we use the dotted edges to denote $\wcau$ and the solid edges to denote \emph{only} $\cau$.) Thus, we use the name ``combined dependency graph'' instead. \EOD\end{remark}
\end{minipage}
\begin{minipage}{4cm}\centering
$\xymatrix@C=3em @R=2.5em{
*+[o][F-]{a} \ar@{-->}[dr] \ar@/_1.5pc/ [ddrr] & 	& *+[o][F-]{c}\ar@/_/ @{-->}[dd]\\
 		 &*+[o][F-]{c} \ar[ur]\ar[dr]		&\\
*+[o][F-]{b} \ar[ur]\ar@/_/[rr]\ar@/^1.5pc/ [uurr]  &	&*+[o][F-]{b}\ar@/_/ @{-->}[uu]
}$
\end{minipage}

\section{Representation Theorems \label{sec:representation}}
This section contains the main technical contribution of this paper by showing that for a given comtrace alphabet $\theta$, $\E^*/\!\!\equiv_{\theta}$, $\LCT(\theta)$ and $\DCD(\theta)$ are three equivalent ways of talking about the same class of objects. We will next prove the first representation theorem which establishes the representation mappings between $\E^*/\!\!\equiv_{\theta}$ and $\LCT(\theta)$.

\subsection{Representation Theorem for Comtraces and lsos-Comtraces}
\begin{proposition}
Let $S_0=(X,\PO_0,\sq_0)$ and $S_1=(X,\PO_1,\sq_1)$ be stratified order structures such that $ext(S_0)\subseteq ext(S_1)$. Then $S_1\subseteq S_0$.
\label{prop:stratsubset}
\end{proposition}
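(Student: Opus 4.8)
The plan is to derive this directly from the Szpilrajn-type representation for so-structures, \theoref{SzpStrat}. Applying that theorem to $S_0$ and to $S_1$ separately, we obtain
\[
\PO_0 \;=\; \bigcap_{\lhd\in ext(S_0)}\lhd, \qquad \sq_0 \;=\; \bigcap_{\lhd\in ext(S_0)}\lhd^{\frown},
\]
and likewise $\PO_1 = \bigcap_{\lhd\in ext(S_1)}\lhd$ and $\sq_1 = \bigcap_{\lhd\in ext(S_1)}\lhd^{\frown}$, all relations living on the common carrier $X$.

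Next I would invoke the elementary fact that intersection is antitone in the index family: if $\mathcal{I}\subseteq\mathcal{J}$ are families of binary relations on $X$, then $\bigcap_{R\in\mathcal{J}}R\subseteq\bigcap_{R\in\mathcal{I}}R$. Since the hypothesis gives $ext(S_0)\subseteq ext(S_1)$, intersecting over the larger family $ext(S_1)$ yields a relation contained in the one obtained by intersecting over $ext(S_0)$. Hence $\PO_1\subseteq\PO_0$, and applying the same observation to the family $\set{\lhd^{\frown}\mid \lhd\in ext(S)}$ also gives $\sq_1\subseteq\sq_0$. By the definition of $\subseteq$ for relational structures (same carrier, componentwise inclusion), these two inclusions are exactly the assertion $S_1\subseteq S_0$.

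There is essentially no hard step here; the only point needing a little care is getting the direction of the containment right — it is the \emph{smaller} extension set $ext(S_0)$ that produces the \emph{larger} so-structure $S_0$, so one must intersect the $\lhd$'s (and the $\lhd^{\frown}$'s) over all of $ext(S_1)$ in order to land inside $\PO_0$ and $\sq_0$. Nonemptiness of $ext(S_0)$ and $ext(S_1)$, which is all one needs to make the intersections meaningful, is guaranteed implicitly by \theoref{SzpStrat}. I expect the author's proof to be this same short argument.
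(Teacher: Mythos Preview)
Your proof is correct and matches the paper's approach exactly: the paper simply notes that the proposition follows from \theoref{SzpStrat}, and your argument spells out precisely that derivation via the antitonicity of intersection in the index family.
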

\begin{proof} Follows from \theoref{SzpStrat}. \qed
\end{proof}

For the next two lemmata, we let $T$ be a lsos-comtrace over a comtrace alphabet $\theta=(E,sim,ser)$. Let $T_{0}=(\Sigma_{u},\PO_{0},\sq_{0},\ell)$ be the canonical representation of $T$. Let $\lhd_{0}\in ext(T_0)$ and $u=\map(\ell,\Omega_{\lhd_{0}})$. Since $u$ is a valid step sequence in $\E^*$ (by \propref{validss}), we can construct $S_{[u]} = (\Sigma_{u},\PO_{[u]},\sq_{[u]})$ from \defref{s2sos}. Our goal is to show that the stratified order $S_{[u]}$ defined by the comtrace $[u]$ is exactly $(\Sigma_{u},\PO_{0},\sq_{0})$.

\begin{lemma} $S_{[u]} \subseteq (\Sigma_{u},\PO_{0},\sq_{0})$.
\label{lem:l1}
\end{lemma}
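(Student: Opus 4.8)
The plan is to unfold the definition $S_{[u]} = (\Sigma_u, \PO_u, \sq_u)^\lozenge$ and use \propref{so-cl}(5): since $(\Sigma_u,\PO_0,\sq_0)$ is a so-structure (it comes from a lsos-comtrace, hence $T_0$ is a labeled so-structure), it suffices to show the ``pre-closure'' inclusion $(\Sigma_u,\PO_u,\sq_u) \subseteq (\Sigma_u,\PO_0,\sq_0)$, i.e. $\PO_u \subseteq \PO_0$ and $\sq_u \subseteq \sq_0$. Then $S_{[u]} = (\Sigma_u,\PO_u,\sq_u)^\lozenge \subseteq (\Sigma_u,\PO_0,\sq_0)$ follows immediately from assertion (5) of \propref{so-cl}.

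So the core task is the two inclusions $\PO_u \subseteq \PO_0$ and $\sq_u \subseteq \sq_0$. First I would take $\alpha \PO_u \beta$; by \defref{s2inv}(1) this means $\alpha \lhd_u \beta$ and $(\ell(\alpha),\ell(\beta)) \notin ser$. Since $\lhd_0 \in ext(T_0)$ was chosen with $u = \map(\ell,\Omega_{\lhd_0})$, the stratified order $\lhd_u$ is exactly $\lhd_0$ (the step sequence $u$ reads off the same ordering of $\Sigma_u$ that $\lhd_0$ induces), so $\alpha \lhd_0 \beta$, hence $\beta \not\sq_0 \alpha$ by the so-structure axioms (since $\alpha \PO_0^\frown$-or-rather $\alpha \lhd_0 \beta$ in a stratified extension forces $\neg(\beta \sq_0 \alpha)$, using that $\lhd_0$ extends $\sq_0$ into $\lhd_0^\frown$ and $\alpha \lhd_0 \beta$ precludes $\beta \lhd_0^\frown \alpha$). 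Now I invoke \textsf{LC4} in contrapositive form: $(\ell(\alpha),\ell(\beta)) \notin ser$ gives $\alpha \PO_0 \beta$ or $\beta \sq_0 \alpha$; the latter is ruled out, so $\alpha \PO_0 \beta$. The case $\alpha \sq_u \beta$ is dual: by \defref{s2inv}(2), $\alpha \lhd_u^\frown \beta$ and $(\ell(\beta),\ell(\alpha)) \notin ser$; again $\lhd_u = \lhd_0$, so $\alpha \lhd_0^\frown \beta$, which precludes $\beta \PO_0 \alpha$ (else $\beta \lhd_0 \alpha$, contradicting $\alpha \lhd_0^\frown \beta$). Applying \textsf{LC4} to the pair $(\ell(\beta),\ell(\alpha)) \notin ser$ yields $\beta \PO_0 \alpha$ or $\alpha \sq_0 \beta$; the former is impossible, so $\alpha \sq_0 \beta$.

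The main obstacle I expect is the bookkeeping around the identification $\lhd_u = \lhd_0$ — i.e. making precise that applying $\ell$ to $\Omega_{\lhd_0}$ and then re-enumerating via $\Sigma_u$ recovers exactly the stratified order $\lhd_0$ on the same underlying set (up to the canonical bijection $\xi$, which here is the identity since $T_0$ is already enumerated). This is guaranteed by the construction of the canonical representation and \propref{isoExt}(2), but one must be careful that the event-occurrence names match up; this is where \propref{validss} (each step has the right cardinality, so no event-occurrence collisions arise) does its work. Once that identification is in hand, the rest is a mechanical application of \textsf{LC4} together with the elementary fact that a stratified extension $\lhd_0$ of a so-structure satisfies $\alpha \lhd_0 \beta \Rightarrow \beta \not\sq_0 \alpha$ and $\alpha \lhd_0^\frown \beta \Rightarrow \beta \not\PO_0 \alpha$ (both from $(\Sigma_u,\PO_0,\sq_0) \subseteq (\Sigma_u,\lhd_0,\lhd_0^\frown)$ in \defref{extsos} and the partial-order/preorder structure noted after \defref{sos}).
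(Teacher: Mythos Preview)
Your proposal is correct and follows essentially the same route as the paper: reduce via \propref{so-cl}(5) to the pre-closure inclusions $\PO_u\subseteq\PO_0$ and $\sq_u\subseteq\sq_0$, observe $\lhd_u=\lhd_0$, and then use \textsf{LC4} together with the fact that $\lhd_0\in ext(T_0)$ to eliminate the unwanted disjunct. The only cosmetic difference is that the paper invokes \theoref{SzpStrat} to rule out $\beta\sq_0\alpha$ (resp.\ $\beta\PO_0\alpha$), whereas you use the more direct observation from \defref{extsos} that $\sq_0\subseteq\lhd_0^\frown$ and $\PO_0\subseteq\lhd_0$; both arguments are equivalent here.
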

\begin{proof} By \propref{so-cl}, to show $S_{[u]} = (\Sigma_{u},\PO_{u},\sq_{u})^{\lozenge} \subseteq (\Sigma_{u},\PO_{0},\sq_{0})$, it suffices to show that $(\Sigma_{u},\PO_{u},\sq_{u}) \subseteq (\Sigma_{u},\PO_{0},\sq_{0})$. Since $T_0$ is the canonical representation of $T$, it is important to observe that $\lhd_0=\lhd_u$.

($\PO_{u}\subseteq \PO_{0}$): Assume $\alpha\PO_{u}\beta$. Then from \defref{s2inv}, $\alpha\lhd_u\beta \wedge (\ell(\alpha),\ell(\beta))\notin ser$. Since $(\ell(\alpha),\ell(\beta))\notin ser$, it follows from \defref{lcomtrace} that  $\alpha \PO_{0} \beta$ or $\beta\sq_{0} \alpha$.  Suppose for a contradiction that $\beta\sq_{0} \alpha$, then by \theoref{SzpStrat},  $\forall \lhd \in ext(T_{0}),\; \beta \lhd^{\frown}\alpha$. But since we assume that $\lhd_{0}\in ext(T_0)$, it follows that $\lhd_{u}\in ext(T_{0})$ and $\alpha \lhd_{u} \beta$, a contradiction. Hence, we have shown $\alpha \PO_{0} \beta$.

($\sq_{u}\subseteq \sq_{0}$):  Can be shown in a similar way.
\qed
\end{proof}

\begin{lemma} $S_{[u]} \supseteq (\Sigma_{u},\PO_{0},\sq_{0})$.
\label{lem:l2}
\end{lemma}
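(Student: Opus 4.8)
The plan is to establish the reverse inclusion by exploiting the Szpilrajn-type characterizations: by \propref{stratsubset} it suffices to show $ext(S_{[u]}) \subseteq ext(T_0)$, i.e., every stratified extension of the so-structure $S_{[u]}$ defined by the comtrace $[u]$ is already a stratified extension of the canonical representation $T_0$ of the lsos-comtrace $T$. Unwinding \theoref{com2sos}, $ext(S_{[u]}) = \{\lhd_w \mid w \in [u]\}$, so the task reduces to: for every step sequence $w$ with $w \equiv_\theta u$, the stratified order $\lhd_w$ extends $(\Sigma_u, \PO_0, \sq_0)$, i.e., $\PO_0 \subseteq \lhd_w$ and $\sq_0 \subseteq \lhd_w^\frown$.

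First I would reduce the congruence $\equiv_\theta$ to single applications of $\eqa_\theta$ and $\eqa_\theta^{-1}$ and argue by induction on the length of a derivation $u \eqb_\theta w$; since $S_{[u]}$ and hence $ext(S_{[u]})$ depends only on the congruence class, and extension is transitive, it is enough to handle one ``commutation step'' $s \eqa_\theta t$ (splitting a step $A = B \cup C$ with $B \times C \subseteq ser$ into $BC$) and show that if $\lhd_s$ extends $(\Sigma_u,\PO_0,\sq_0)$ then so does $\lhd_t$, and symmetrically. Concretely, when passing from $wAz$ to $wBCz$, the only pairs whose relative order changes are pairs $(\alpha,\beta)$ with $\alpha$ an occurrence in $B$ and $\beta$ an occurrence in $C$: in $\lhd_s$ they satisfy $\alpha \simeq \beta$, in $\lhd_t$ they satisfy $\alpha \lhd_t \beta$. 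So I must verify: (i) we never have $\beta \PO_0 \alpha$ for such a pair, and (ii) we never have $\beta \sq_0 \alpha$ strictly (i.e. $\sq_0$ without its $\PO_0^\frown$ part forcing $\beta$ not-later-than $\alpha$ in a way violating $\lhd_t^\frown$). For (i), $\beta \PO_0 \alpha$ together with $\PO_0 \subseteq \sq_0$ and the lsos axioms would force, via \textsf{LC1}/the covering structure on $\quotient{X}{\eqrel{\sq}}$, that $\lambda(\ell(\beta)) \times \lambda(\ell(\alpha))$ or $\lambda(\ell(\alpha))\times\lambda(\ell(\beta))$ is not contained in $ser$; but $(\ell(\alpha),\ell(\beta)) \in ser$ by construction since $\alpha,\beta$ came from the serializable split $B \times C \subseteq ser$ — the contradiction hinges on \propref{covlsos} and the axioms \textsf{LC1}--\textsf{LC2}, tracing the $\PO_0$/$\sq_0$ edge down to a covering edge in the quotient. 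For (ii), a symmetric argument using \textsf{LC2} and the fact that $\beta \sq_0 \alpha$ would force $(\ell(\alpha),\ell(\beta)) \notin ser$.

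I expect the main obstacle to be step (i)--(ii): the axioms \textsf{LC1}, \textsf{LC2} are stated in terms of the \emph{covering} relation $\imm{\hat\sq}$ on $\sq$-cycle equivalence classes, not the full relation, so one cannot directly read off a serializability violation from an arbitrary $\PO_0$ or $\sq_0$ edge. The fix is to factor a long $\PO_0$ (resp. $\sq_0$) chain from $[\beta]$ to $[\alpha]$ into covering steps in $\quotient{X}{\eqrel{\sq}}$, apply \textsf{LC1}/\textsf{LC2} to each covering step to pull out a pair of events that is not $ser$-included, and then combine this with the structure of the step sequence $t$ (where all of $B$ precedes all of $C$) together with \propref{covlsos}(2)--(3) to locate a stratified extension of $T_0$ that realizes the chain as consecutive steps, contradicting $(\ell(\alpha),\ell(\beta)) \in ser$ via \textsf{LC3} (non-serializability of a single class) when the chain collapses to one class, or via the covering-edge analysis otherwise. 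Once (i) and (ii) are in place, $\lhd_t \in ext(T_0)$ follows, the induction closes, $ext(S_{[u]}) \subseteq ext(T_0)$, and \propref{stratsubset} gives $(\Sigma_u,\PO_0,\sq_0) \subseteq S_{[u]}$ as desired.
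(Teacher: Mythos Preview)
Your overall strategy is the paper's: reduce via \propref{stratsubset} to showing $ext(S_{[u]}) \subseteq ext(T_0)$, rewrite the left side via \theoref{com2sos} as $\{\lhd_w \mid w \in [u]\}$, and induct along a derivation of $w$ from $u$ one $\eqa_\theta$/$\eqa_\theta^{-1}$ step at a time. The paper does exactly this (building $[u]$ in stages $D^n(u)$), and, as you anticipate, the heart of each inductive step is a covering-chain argument in the quotient $\Sigma_u/{\equiv_{\sq_0}}$.

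Your execution plan for that chain argument has a gap, however. You propose to ``apply \textsf{LC1}/\textsf{LC2} to each covering step'' and reserve \textsf{LC3} for the degenerate case $[\alpha]=[\beta]$. But applying \textsf{LC1} or \textsf{LC2} at an arbitrary covering edge $[\gamma_i]\imm{\hat\sq_0}[\gamma_{i+1}]$ only yields a non-$ser$-inclusion between $\lambda([\gamma_i])$ and $\lambda([\gamma_{i+1}])$, and that contradicts nothing unless one of these classes lies in $\h{B}$ and the other in $\h{C}$. The paper organizes the axioms differently: first the induction hypothesis $\lhd_v\in ext(T_0)$ confines every $[\gamma_i]$ on the chain to $\h{A}$; then \textsf{LC3} is used as the structural workhorse, forcing each $[\gamma_i]$ to lie \emph{entirely} in $\h{B}$ or \emph{entirely} in $\h{C}$ (this is its main role, not just the collapsed-chain case); only then can one locate a single \emph{crossing} covering edge with $[\gamma_i]\subseteq\h{C}$, $[\gamma_{i+1}]\subseteq\h{B}$ (resp.\ the reverse in the merge direction) and invoke \textsf{LC2} (resp.\ \textsf{LC1}) once at that edge to contradict $B\times C\subseteq ser$. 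Your detour through \propref{covlsos}(2)--(3) to manufacture another stratified extension is unnecessary --- the extension you need is $\lhd_v$ itself, supplied by the induction hypothesis.
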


In this proof, we will include subscripts for equivalence classes to avoid confusing the elements from quotient set $\quotient{\Sigma_{u}}{\equiv_{\sq_{0}}}$ with the elements from the quotient comtrace monoid $\quotient{\E^*}{\equiv_{\theta}}$. In other words, we write $[\alpha]_{\equiv_{\sq_{0}}}$ to denote an element of the quotient set $\quotient{\Sigma_{u}}{\equiv_{\sq_{0}}}$, and write $[u]_{\theta}$ to denote the comtrace generated by $u$.

\begin{proof}[of \lemref{l2}] Let $S'=(\Sigma_{u},\PO_{0},\sq_{0})$. To show $S_{[u]} \supseteq S'$, by \propref{stratsubset}, it suffices to show $ext(S_{[u]} )\subseteq ext(S')$. From \theoref{com2sos}, we know that $ext(S_{[u]_{\theta}}) = \set{\lhd_w\mid w\in [u]_{\theta}}$. Thus we only need to show that for all $w\in [u]_{\theta}$, $\lhd_w \in ext(S')$.

We observe that from $u$, by \defref{commonoid}, we can generate all the step sequences in the comtrace $[u]_{\theta}$ in stages using the following recursive definition:
\begin{align*}
D^0(u)&\df \set{u}\\
D^n(u)&\df \set{w\mid w\in D^{n-1}(u)\;\vee\; \exists v\in D^{n-1}(u),\ (\ v\eqa_{\theta} w\; \vee\; v\eqa_{\theta}^{-1} w)} 
\end{align*}

Since the set $[u]_{\theta}$ is finite, $[u]_{\theta}=D^n(u)$ for some stage $n\ge 0$. For the rest of the proof, we will prove by induction on $n$ that for all $n\in \mathbb{N}$, if $w\in D^n(u)$ then $\lhd_w\in ext(S)$. \\
\textbf{Base case:} When $n=0$, $D^0(u)=\set{u}$. Since $\lhd_0 \in ext(T)$, it follows from \propref{isoExt} that $\lhd_u \in ext(S')$. \\
\textbf{Inductive case:} When $n>0$, let $w$ be an element of $D^n(u)$. Then either $w\in D^{n-1}(u)$ or $w\in (D^n(u)\setminus D^{n-1}(u))$. For the former case, by inductive hypothesis, $\lhd_w\in ext(S')$. For the latter case, there must be some element $v\in D^{n-1}(u)$ such that $v\eqa_{\theta} w$ or  $v\eqa_{\theta}^{-1} w$. By induction hypothesis, we already known $\lhd_v\in ext(S')$. We want to show that $\lhd_w \in ext(S')$. There are two cases to consider:\smallskip\\
\textit{\textbf{Case (i):}} \\
When $v\eqa_{\theta} w$, by \defref{commonoid}, there are some $y,z\in E_{\theta}^*$ and steps $A,B,C\in \E$ such that $v=yAz$ and $w=yBCz$ where $A$, $B$, $C$ satisfy $B\cap C =\emptyset$ and $B\cup C = A$ and $B\times C\subseteq ser$. Let $\h{v}=\h{y}\h{A}\h{z}$ and $\h{w}=\h{y}\h{B}\,\h{C}\h{z}$ be enumerated step sequences of $v$ and $w$ respectively. 

Suppose for a contradiction that $\lhd_w \not\in ext(S')$. By \defref{extsos}, there are $\alpha\in \h{C}$  and $\beta \in \h{B}$ such that $\alpha \sq_{0} \beta$. We now consider the quotient set $\quotient{\h{A}}{\equiv_{\sq_{0}}}$. By  \propref{covlsos} (1), $\quotient{\h{A}}{\equiv_{\sq_{0}}} \subseteq \quotient{\Sigma_{u}}{\equiv_{\sq_{0}}}$. 
Since $\alpha \sq_{0} \beta$, it follows that $[\alpha]_{\equiv_{\sq_{0}}}\hat{\sq}_{0} [\beta]_{\equiv_{\sq_{0}}}$. Thus, from the fact that $\hat{\sq}_{0}$ is partial order, there must exists a chain
\begin{align}
[\alpha]_{\equiv_{\sq_{0}}}=\; [\gamma_1]_{\equiv_{\sq_{0}}} \;\imm{\hat{\sq}_{0}}\; [\gamma_2]_{\equiv_{\sq_{0}}}\;\imm{\hat{\sq}_{0}}\; \ldots\;\imm{\hat{\sq}_{0}}\;[\gamma_k]_{\equiv_{\sq_{0}}}\;= [\beta]_{\equiv_{\sq_{0}}} 
\label{eq:chain1}
\end{align}

Then by \theoref{SzpStrat} and the fact that $\lhd_v\in ext(S')$, we know that $\gamma_i \in \h{A}$ for all $i$. In other words, since the chain \eref{chain1} implies that every $\gamma_i$ must always occur between $\alpha$ and $\beta$ in all stratified extensions of $S'$ and $\alpha,\beta \in \h{A}$, we also have $\gamma_i \in \h{A}$. Hence, by \propref{covlsos} (1), we have $[\gamma_i]_{\equiv_{\sq_{0}}}\subseteq \h{A}$ for all $i$, $1\le i \le k$. Also from \textsf{LC3} of \defref{lcomtrace} and that $B\times C \subseteq ser$, we know that for each $\gamma_i$, either $[\gamma_i]_{\equiv_{\sq_{0}}}\subseteq \h{B}$ or $[\gamma_i]_{\equiv_{\sq_{0}}}\subseteq \h{C}$. Now we note  that the first element on the chain $[\gamma_1]_{\equiv_{\sq_{0}}}=[\alpha]_{\equiv_{\sq_{0}}}\subseteq \h{C}$ and the last element on the chain $[\gamma_k]_{\equiv_{\sq_{0}}}=[\beta]_{\equiv_{\sq_{0}}}\subseteq \h{B}$. Thus, there exist two consecutive elements $[\gamma_i]_{\equiv_{\sq_{0}}}$ and $[\gamma_{i+1}]_{\equiv_{\sq_{0}}}$ on the chain such that $[\gamma_i]_{\equiv_{\sq_{0}}}\subseteq \h{C}$ and $[\gamma_{i+1}]_{\equiv_{\sq_{0}}}\subseteq \h{B}$. But then it follows that
\begin{itemize}
 \item[(a)] $[\gamma_{i+1}]_{\equiv_{\sq_{0}}}\times [\gamma_i]_{\equiv_{\sq_{0}}}\subseteq ser$ and $[\gamma_i]_{\equiv_{\sq_{0}}} \imm{\hat{\sq}_{0}} [\gamma_{i+1}]_{\equiv_{\sq_{0}}}$
 \item[(b)] $\neg \bigl([\gamma_i]_{\equiv_{\sq_{0}}} \hat{\PO}_{0} [\gamma_{i+1}]_{\equiv_{\sq_{0}}}\bigr)$ since $\lhd_v\in ext(S')$ and $\gamma_i\frown_{\lhd_v}\gamma_{i+1}$
\end{itemize}
These contradict \textsf{LC2} of \defref{lcomtrace} since $T_0$ is a lsos-comtrace.\smallskip\\
\textit{\textbf{Case (ii):}} \\
When $v\eqa_{\theta}^{-1} w$, by \defref{commonoid}, there are some $y,z\in E_{\theta}^*$ and steps $A,B,C\in \E$ such that $v=yBCz$ and $w=yAz$  where $A$, $B$, $C$ satisfy $B\cap C =\emptyset$ and $B\cup C = A$ and $B\times C\subseteq ser$. Let $\h{v}=\h{y}\h{B}\,\h{C}\h{z}$ and $\h{w}=\h{y}\h{A}\h{z}$ be enumerated step sequences of $v$ and $w$ respectively. 

Suppose for a contradiction that $\lhd_w \not\in ext(S')$. By \defref{extsos}, there are $\alpha\in \h{B}$  and $\beta \in \h{C}$ such that $\alpha \PO_{0} \beta$. By  \propref{covlsos} (1), $\quotient{\h{A}}{\equiv_{\sq_{0}}} \subseteq \quotient{\Sigma_{u}}{\equiv_{\sq_{0}}}$. Thus, using a dual argument to the proof of Case (i), we can build a chain
\begin{align}
[\alpha]_{\equiv_{\sq_{0}}}=\; [\gamma_1]_{\equiv_{\sq_{0}}} \;\imm{\hat{\sq}_{0}}\; [\gamma_2]_{\equiv_{\sq_{0}}}\;\imm{\hat{\sq}_{0}}\; \ldots\;\imm{\hat{\sq}_{0}}\;[\gamma_k]_{\equiv_{\sq_{0}}}\;= [\beta]_{\equiv_{\sq_{0}}} 
\end{align}
We then argue that there are two consecutive elements on the chain such that $[\gamma_i]_{\equiv_{\sq_{0}}}\subseteq \h{B}$ and $[\gamma_{i+1}]_{\equiv_{\sq_{0}}}\subseteq \h{C}$, which implies 
\begin{enumerate}
 \item[(a)] $[\gamma_{i}]_{\equiv_{\sq_{0}}}\times [\gamma_{i+1}]_{\equiv_{\sq_{0}}}\subseteq ser$ and $[\gamma_i]_{\equiv_{\sq_{0}}} \imm{\hat{\sq}_{0}} [\gamma_{i+1}]_{\equiv_{\sq_{0}}}$
 \item[(b)] $[\gamma_i]_{\equiv_{\sq_{0}}} \hat{\PO}_{0} [\gamma_{i+1}]_{\equiv_{\sq_{0}}}$ since $\lhd_v\in ext(S')$ and $\gamma_i \lhd_v \gamma_{i+1}$
\end{enumerate}
These contradict  \textsf{LC1} of \defref{lcomtrace}. \qed

\end{proof}

We also need to show that the labeled so-structure defined from each comtrace is indeed a lsos-comtrace. In other words, we need to show the following lemma.

\begin{lemma}  Let $\theta=(E,sim,ser)$ be a comtrace alphabet. Given a step sequence  $u \in \E_{\theta}^{*}$, the lp-isomorphic class $\seq{\Sigma_{[u]},\PO_{[u]},\sq_{[u]},\ell}$ is a lsos-comtrace over $\theta$. \qed
\label{lem:l3}
\end{lemma}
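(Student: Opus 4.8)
The plan is to verify that the labeled so-structure $S_{[u]}=(\Sigma_{[u]},\PO_{[u]},\sq_{[u]},\ell)$ satisfies the five conditions \textsf{LC1}--\textsf{LC5} of \defref{lcomtrace}, using \theoref{com2sos} and \corref{com2sos} as the main tools. By \theoref{com2sos} we know $S_{[u]}$ is a so-structure with $ext(S_{[u]})=\{\lhd_w\mid w\in[u]\}$, and by \corref{com2sos}(2) we have the explicit description $\PO_{[u]}=\bigcap_{w\in[u]}\lhd_w$ and $\sq_{[u]}=\bigcap_{w\in[u]}\lhd_w^{\frown}$. The labeling $\ell$ is well-defined on $\Sigma_{[u]}=\Sigma_u$ and maps into $E$, so it remains only to check the implications. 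Throughout, I would freely pass between $\alpha\PO_{[u]}\beta$ (resp.\ $\sq_{[u]}$) and the $\sq$-cycle quotient statements $[\alpha]\hat{\PO}_{[u]}[\beta]$ (resp.\ $[\alpha]\hat{\sq}_{[u]}[\beta]$) via the proposition relating $S/\!\equiv_{\sq}$ to $S$, and use \propref{covlsos} to realize covering relations and non-serializable classes as actual steps in some stratified extension $\lhd_w$, i.e.\ some $w\in[u]$.

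For \textsf{LC4}: if $(\ell(\alpha),\ell(\beta))\notin ser$, then in \emph{every} $w\in[u]$, looking at the enumerated step sequence, either $pos_w(\alpha)<pos_w(\beta)$, or $pos_w(\alpha)=pos_w(\beta)$, or $pos_w(\beta)<pos_w(\alpha)$; in the last case $(\ell(\alpha),\ell(\beta))\notin ser$ together with \defref{s2inv} gives $\beta\PO_w\alpha$ hence $\beta\sq_w\alpha$, and in the first two cases \defref{s2inv}(2) gives $\alpha\sq_w\beta$ (here one uses that $(\ell(\alpha),\ell(\beta))\notin ser$ reading the contrapositive appropriately — actually \defref{s2inv} gives $\alpha\sq_w\beta$ whenever $\alpha\lhd_w^\frown\beta$ and $(\ell(\beta),\ell(\alpha))\notin ser$, so I need to be a little careful about which direction of $ser$ fails and split accordingly). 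The clean way is: either $\alpha\PO_w\beta$ for all $w$ — in which case $\alpha\PO_{[u]}\beta$ — or there is some $w$ with $\beta\lhd_w^\frown\alpha$, and then $\beta\sq_w\alpha$ follows, and one shows this forces $\beta\sq_{[u]}\alpha$ by checking it holds in all extensions (using that $\alpha\not\PO_{[u]}\beta$ means in no extension does $\alpha$ strictly precede-or-tie-dominate... ) — this is exactly the kind of argument already used in the proof of \lemref{l1}, and I would reuse that pattern. For \textsf{LC5}: if $(\ell(\alpha),\ell(\beta))\notin sim$ then $\alpha,\beta$ can never lie in a common step of any $w\in[u]$, so in every $\lhd_w$ they are comparable; by \corref{SzpStrat}(2) this yields $\alpha\PO_{[u]}\beta\vee\beta\PO_{[u]}\alpha$.

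For \textsf{LC1}--\textsf{LC3} I would argue via realizing classes as steps. For \textsf{LC3}: if $A\cup B=[\alpha]$ with $A,B$ nonempty, then by \propref{covlsos}(2) there is $w\in[u]$ with $\Omega_{\lhd_w}=s\,[\alpha]\,t$, i.e.\ $\ell([\alpha])$ is a genuine step $A'\in\E_\theta$; if we had $\lambda(A)\times\lambda(B)\subseteq ser$ then splitting that step into $\ell(A)$ followed by $\ell(B)$ would produce $w'\eqa_\theta^{-1}$-related... wait, $\eqa_\theta$-related, i.e.\ $w'\in[u]$, and in $w'$ the elements of $B$ strictly precede those of $A$ in position, contradicting that $[\alpha]$ is a single $\sq_{[u]}$-cycle class (since then $\alpha\sq_{[u]}\beta$ must hold for $\alpha\in B,\beta\in A$, forcing $\alpha\lhd_{w'}^\frown\beta$, impossible). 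For \textsf{LC1}: if $[\alpha]\,\imm{\hat{\sq}_{[u]}}\,[\beta]$ and also $[\alpha]\hat{\PO}_{[u]}[\beta]$, by \propref{covlsos}(3) there is $w\in[u]$ with $\Omega_{\lhd_w}=s\,[\alpha]\,[\beta]\,t$; pick $\alpha_0\in[\alpha],\beta_0\in[\beta]$ witnessing $\alpha_0\PO_{[u]}\beta_0$, so $\alpha_0\PO_w\beta_0$, so $(\ell(\alpha_0),\ell(\beta_0))\notin ser$ by \defref{s2inv}(1), giving $\lambda([\alpha])\times\lambda([\beta])\nsubseteq ser$. For \textsf{LC2}: if $[\alpha]\,\imm{\hat{\sq}_{[u]}}\,[\beta]$ but $\neg([\alpha]\hat{\PO}_{[u]}[\beta])$, again take $w$ with consecutive steps $[\alpha][\beta]$; pick witnesses $\alpha_0\in[\alpha],\beta_0\in[\beta]$ with $\alpha_0\sq_{[u]}\beta_0$, so $\alpha_0\lhd_w^\frown\beta_0$ and $\alpha_0\sq_w\beta_0$, which by \defref{s2inv}(2) gives $(\ell(\beta_0),\ell(\alpha_0))\notin ser$, hence $\lambda([\beta])\times\lambda([\alpha])\nsubseteq ser$. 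The main obstacle I anticipate is \textsf{LC1}/\textsf{LC2}: one must be careful that the \emph{witness} pair realizing the causality/weak-causality between the two $\sq$-cycle classes actually lies in a stratified extension where those classes are \emph{consecutive steps}, which is precisely why \propref{covlsos}(3) is invoked — but \propref{covlsos}(3) only says the two classes appear consecutively, and I must double-check \defref{s2inv} can then be applied to the specific witness occurrences (they are in adjacent steps of $\Omega_{\lhd_w}$, so $\lhd_w$ and $\lhd_w^\frown$ do relate them in the required direction, which closes the argument).
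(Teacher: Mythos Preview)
Your overall strategy---verify \textsf{LC1}--\textsf{LC5} using \theoref{com2sos}, \corref{com2sos} and \propref{covlsos}---matches the paper's, and your arguments for \textsf{LC3} and \textsf{LC5} are essentially the paper's arguments (modulo a harmless typo in \textsf{LC3} about which of $A,B$ comes first). Your \textsf{LC4} sketch is more laboured than necessary; the paper simply observes that \textsf{LC4} is immediate from \defref{s2inv} together with the fact that $(\Sigma_u,\PO_u,\sq_u)\subseteq S_{[u]}$.

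There is, however, a genuine gap in your treatment of \textsf{LC1} and \textsf{LC2}. From $\alpha_0\PO_{[u]}\beta_0$ you write ``so $\alpha_0\PO_w\beta_0$, so $(\ell(\alpha_0),\ell(\beta_0))\notin ser$ by \defref{s2inv}(1)''. But $\PO_w$ in \defref{s2inv} is \emph{defined} as the conjunction $\alpha\lhd_w\beta\wedge(\ell(\alpha),\ell(\beta))\notin ser$; it is a subrelation of $\PO_{[u]}$, not the other way round. From $\alpha_0\PO_{[u]}\beta_0$ and $\lhd_w\in ext(S_{[u]})$ you only get $\alpha_0\lhd_w\beta_0$. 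You cannot recover $(\ell(\alpha_0),\ell(\beta_0))\notin ser$ from that---it is precisely the conclusion you want. The causality $\alpha_0\PO_{[u]}\beta_0$ may well have been produced by the $\lozenge$-closure via other event occurrences whose labels witness the failure of $ser$, while the particular pair $(\ell(\alpha_0),\ell(\beta_0))$ itself sits in $ser$. Your closing paragraph anticipates a difficulty here but misdiagnoses it: having the classes appear as adjacent steps gives you $\lhd_w$ in the right direction, but that is only half of what \defref{s2inv} requires.

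The fix is exactly the contradiction pattern you already used for \textsf{LC3}, run in the opposite direction. For \textsf{LC1}: assume $\lambda([\alpha])\times\lambda([\beta])\subseteq ser$. By \propref{covlsos}(3) there is $w\in[u]$ with $\Omega_{\lhd_w}=s\,[\alpha]\,[\beta]\,t$. Since every label pair lies in $ser$, the comtrace relation $\eqa_\theta^{-1}$ lets you \emph{merge} the two consecutive steps, yielding $w'\in[u]$ with $\Omega_{\lhd_{w'}}=s\,([\alpha]\cup[\beta])\,t$. But then $\alpha_0\frown_{\lhd_{w'}}\beta_0$, contradicting $\alpha_0\PO_{[u]}\beta_0$ (equivalently contradicting $[\alpha]\hat{\PO}[\beta]$). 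For \textsf{LC2} the same merge produces $w'$ in which the classes coincide as a single step, and then a further split of that step (now with $\lambda([\beta])\times\lambda([\alpha])\subseteq ser$) gives $w''\in[u]$ with $[\beta]$ preceding $[\alpha]$, contradicting $[\alpha]\hat{\sq}[\beta]$. This is how the paper proceeds.
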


The proof of this lemma is straightforward by checking that $\seq{\Sigma_{[u]},\PO_{[u]},\sq_{[u]},\ell}$ satisfies all conditions \textsf{LC1}--\textsf{LC5}.

\begin{definition}[representation mappings $\cl$ and $\lc$] Let $\theta$ be a comtrace alphabet. 
\begin{enumerate}
 \item The mapping  $\cl:\quotient{\E_\theta^*}{\equiv_{\theta}} \rightarrow \LCT(\theta)$ is defined as\smallskip\\
 \mbox{\hspace{4cm}}$\cl(\mathbf{t}) \df \seq{\Sigma_{\mathbf{t}},\PO_{\mathbf{t}},\sq_{\mathbf{t}},\ell},$\smallskip\\
where the function $\ell:\Sigma_s\rightarrow E$ is defined in \secref{steps} and $S_{\mathbf{t}}=(\Sigma_{\mathbf{t}},\PO_{\mathbf{t}},\sq_{\mathbf{t}})$ is the so-structure defined by the comtrace $\mathbf{t}$ from \defref{s2sos}. 
 \item The mapping  $\lc:\LCT(\theta)\rightarrow \quotient{\E_\theta^*}{\equiv_{\theta}}$ is defined as\smallskip\\
\mbox{\hspace{1.5cm}} $\lc\bigl((X,\PO,\sq,\lambda)\bigr) \df \Bset{map(\lambda,\Omega_\lhd)\mid \lhd \in ext\bigl((X,\PO,\sq)\bigr)}.$
 \EOD
\end{enumerate}
\label{def:repmaps}
\end{definition}

Intuitively, the mapping $\cl$ is used to convert a comtrace to lsos-comtrace while the mapping $\lc$ is used to transform a lsos-comtrace into a comtrace. The fact that $\cl$ and $\lc$ are valid representation mappings for $\quotient{\E_\theta^*}{\equiv_{\theta}}$ and  $\LCT(\theta)$ will be  shown in the following theorem.

\begin{theorem}[The 1$^{st}$ Representation Theorem]  Let $\theta$ be a comtrace alphabet.
\begin{enumerate}
\item For every $\mathbf{t}\in \quotient{\E_\theta^*}{\equiv_{\theta}}$, $\lc\circ \cl(\mathbf{t}) = \mathbf{t}$.
\item For every $T\in \LCT(\theta)$, $\cl\circ \lc(T) = T$.
\end{enumerate}
\label{theo:rep}
\end{theorem}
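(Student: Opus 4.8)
The plan is to prove the two directions using the machinery already assembled. For part (1), starting from a comtrace $\mathbf{t}=[u]$, one first applies $\cl$ to obtain the labeled so-structure $S_{\mathbf{t}}=(\Sigma_{\mathbf{t}},\PO_{\mathbf{t}},\sq_{\mathbf{t}},\ell)$; by \lemref{l3} this is indeed a lsos-comtrace, so $\cl$ is well-defined, and \theoref{com2sos} tells us $ext(S_{\mathbf{t}})=\set{\lhd_w\mid w\in \mathbf{t}}$. Applying $\lc$ then gives $\Bset{\map(\ell,\Omega_{\lhd})\mid \lhd\in ext(S_{\mathbf{t}})}=\Bset{\map(\ell,\Omega_{\lhd_w})\mid w\in\mathbf{t}}$. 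The key observation is that for each $w\in\mathbf{t}$, $\Omega_{\lhd_w}$ is just the enumerated step sequence $\h{w}$, so $\map(\ell,\Omega_{\lhd_w})=w$; hence the set collapses to $\mathbf{t}$ itself. I would spell out this last identity carefully using the bijection between step sequences and stratified orders recalled in \secref{steps}, and note that $\lc$ returns the $\equiv_\theta$-class, which by construction is exactly $\mathbf{t}$.

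For part (2), let $T=\seq{X,\PO,\sq,\lambda}\in\LCT(\theta)$ and let $T_0=(\Sigma_u,\PO_0,\sq_0,\ell)$ be its canonical (enumerated) representation, where $u=\map(\lambda,\Omega_{\lhd_0})$ for a chosen $\lhd_0\in ext(T)$. Then $\lc(T)=\Bset{\map(\lambda,\Omega_{\lhd})\mid \lhd\in ext(T)}$, and by \propref{isoExt} this equals $[u]_\theta$ (the extensions of $T$ and of $T_0$ correspond under $\xi$, and $\map(\lambda,\cdot)$ on $T$ matches $\map(\ell,\cdot)$ on $T_0$; one still has to check that $\Bset{\map(\ell,\Omega_{\lhd})\mid \lhd\in ext(T_0)}$ is precisely the congruence class $[u]_\theta$, which follows from \theoref{com2sos} applied to $S_{[u]}$ together with \lemref{l1} and \lemref{l2}). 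Now applying $\cl$ gives $\cl([u]_\theta)=\seq{\Sigma_{[u]},\PO_{[u]},\sq_{[u]},\ell}=\seq{S_{[u]}}$. By \lemref{l1} and \lemref{l2}, $S_{[u]}=(\Sigma_u,\PO_0,\sq_0)=T_0$ as a so-structure (with the same labeling $\ell$), so $\cl\circ\lc(T)=[T_0]$. Finally, \propref{isoExt}(1) says $T_0$ and $T$ are lp-isomorphic via $\xi$, hence $[T_0]=[T]=T$ as lp-isomorphic classes, completing the argument.

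The main obstacle, and the step I would be most careful about, is showing that $\lc(T)$ is genuinely a single $\equiv_\theta$-class equal to $[u]_\theta$, rather than merely some set of step sequences: this requires knowing that the stratified extensions of $T_0$ are exactly $\set{\lhd_w\mid w\in[u]_\theta}$, which is where \lemref{l1}, \lemref{l2} and \theoref{com2sos} (via the identification $S_{[u]}=T_0$) do the real work. Everything else — that $\cl$ lands in $\LCT(\theta)$ (\lemref{l3}), that $\map(\ell,\Omega_{\lhd_w})=w$, that the two $\map$ computations agree under $\xi$ (\propref{isoExt}), and that lp-isomorphic so-structures represent the same lsos-comtrace — is either already established or routine. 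I would therefore organize the proof so that part (1) is dispatched quickly from \lemref{l3} and \theoref{com2sos}, and part (2) is reduced to the chain of equalities $\cl\circ\lc(T)=\seq{S_{[u]}}=\seq{T_0}=T$, citing \lemref{l1}, \lemref{l2}, and \propref{isoExt} at the appropriate points.
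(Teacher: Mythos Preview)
Your proposal is correct and follows essentially the same approach as the paper: part (1) is dispatched via \lemref{l3} (well-definedness) and \theoref{com2sos} together with the identity $\map(\ell,\Omega_{\lhd_w})=w$, and part (2) passes to the canonical representation $T_0$, invokes Lemmas~\ref{lem:l1} and~\ref{lem:l2} to identify $S_{[u]}$ with $T_0$, and then uses \theoref{com2sos} to conclude that $\lc(T)=[u]$, whence $\cl\circ\lc(T)=[T_0]=T$. The paper's proof is organized the same way, only slightly terser.
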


\begin{proof}\textbf{1. } The fact that $ran(\cl)\subseteq \LCT(\theta)$ follows from \lemref{l3}. Now for a given $\mathbf{t}\in \quotient{\E_\theta^*}{\equiv_{\theta}}$,  we have $\cl(\mathbf{t})=(\Sigma_{\mathbf{t}},\PO_{\mathbf{t}},\sq_{\mathbf{t}},\ell)$. Thus, it follows that
\begin{align*}
\lc(\cl(t))	& = \bset{map(\ell,\Omega_\lhd)\mid \lhd \in ext(S_\mathbf{t})} 	& \\
		& = \bset{map(\ell,\Omega_\lhd)\mid \lhd \in \set{\lhd_s\mid s\in \mathbf{t}}}	&\ttcomment{\text{by \theoref{com2sos}}}\\
		& = \bset{map(\ell,\Omega_{\lhd_{s}})\mid s\in \mathbf{t}}	  = \mathbf{t}	&
\end{align*} 

\textbf{2. }  Assume $T_{0}=(\Sigma,\PO_{0},\sq_{0},\ell)$ is the canonical representation of $T$. Observe that since $T_0\cong T$, we have $\bset{map(\ell,\Omega_\lhd)\mid \lhd \in ext(T_0)} = \bset{map(\lambda,\Omega_\lhd)\mid \lhd \in ext(T)}.$

Let $\Delta = \bset{map(\ell,\Omega_\lhd)\mid \lhd \in ext(T_0)}$. We will next show that $\Delta\in \quotient{\E_\theta^*}{\equiv_{\theta}}$ and  $\cl\bigl(\Delta\bigr) = \seq{T_0}$. Fix an arbitrary $u\in\Delta$, from Lemmas \ref{lem:l1} and \ref{lem:l2},  $S_{[u]} = (\Sigma,\PO_{0},\sq_{0})$. From \theoref{com2sos}, $\Delta =  \bset{map(\ell,\Omega_\lhd)\mid \lhd \in ext(S_{[u]})} = [u]$. And the rest follows. \qed
\end{proof}

The theorem says that the mappings $\cl$ and $\lc$ are inverses of each other and hence are both \emph{bijective}.

\subsection{Representation Theorem for lsos-Comtraces and Combined Dependency Graphs}
Using \theoref{rep}, we are going to show that the \emph{combined dependency graph} notion proposed in \cite{KK08} is another correct alternative definition for comtraces. First we need to define several  representation mappings  that are needed for our proofs.

\begin{definition}[representation mappings $\cd$, $\dl$ and $\ld$] Let $\theta$ be a comtrace alphabet. 
\begin{enumerate}
 \item The mapping  $\cd:\quotient{\E_\theta^*}{\equiv_{\theta}} \rightarrow \DCD(\theta)$ is defined as\smallskip\\
 \mbox{\hspace{3.5cm}} $\cd(\mathbf{t}) \df (\Sigma_{\mathbf{t}},\PO_{u},\sq_{u},\ell),$ \smallskip\\
where $u$ is any step sequence in $\mathbf{t}$ and $\PO_{u}$ and $\sq_{u}$ are defined as in \defref{s2inv}. 
 \item The mapping  $\dl:\DCD(\theta)\rightarrow \LCT(\theta)$ is defined as  $\dl(D) \df D^{\lozenge}$.
 \item The mapping  $\ld:\LCT(\theta)\rightarrow \DCD(\theta)$ is defined as\smallskip\\
 \mbox{\hspace{3.2cm}}   $\ld(T)\df \cd \circ \lc (T) $. \EOD
\end{enumerate}
\label{def:drepmaps}
\end{definition}

Before proceeding futher, we want to make sure that:
\begin{lemma}
\begin{enumerate}
\item $\dl:\DCD(\theta)\rightarrow \LCT(\theta)$ is a well-defined function.
\item $\cd:\quotient{\E_\theta^*}{\equiv_{\theta}} \rightarrow \DCD(\theta)$ is a well-defined function.
\end{enumerate}
\end{lemma}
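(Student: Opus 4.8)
The plan is to verify each of the two claims in turn, in both cases checking (i) that the output lands in the target class and (ii) that the definition does not depend on the choices involved. For part~1, given a cd-graph $D = \seq{X,\cau,\wcau,\lambda} \in \DCD(\theta)$, I would show that $\dl(D) = D^{\lozenge}$ is a lsos-comtrace, i.e. that $D^{\lozenge}=(X,\PO,\sq,\lambda)$ satisfies \textsf{LC1}--\textsf{LC5}. The hypothesis that $D^{\lozenge}$ is a so-structure is built into \defref{comdag}, so the content is purely the five labelling conditions. For \textsf{LC5} and the ``$\alpha\cau\beta\vee\beta\sq\alpha$'' half of \textsf{LC4}, note $\cau \subseteq \PO$ and $\wcau\subseteq\sq$ (by \propref{so-cl}(1), since $\cau,\wcau$ are irreflexive and $D\subseteq D^\lozenge$), so \textsf{CD1} gives \textsf{LC5} and \textsf{CD2} gives \textsf{LC4} directly. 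The substantive directions are \textsf{LC1}, \textsf{LC2}, \textsf{LC3}: here I must take an edge that is \emph{new} in $D^{\lozenge}$ — a covering edge of $\hat\sq$ (possibly also in $\hat\PO$), or a pair inside a non-serializable set — and trace it back through the $\lozenge$-closure to a composition of $\cau$/$\wcau$ edges, then use \textsf{CD3}/\textsf{CD4} together with the serializability structure of $\theta$ (in particular that $ser\subseteq sim$ and $ser$ is irreflexive). Since lp-isomorphism respects $\lozenge$-closure, $\dl$ is well-defined on lp-isomorphism classes.

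For part~2, I would show $\cd(\mathbf{t}) = (\Sigma_{\mathbf{t}},\PO_u,\sq_u,\ell)$ is a cd-graph and is independent of the chosen $u\in\mathbf{t}$. Independence of $u$ is the cleaner half: by the remark following \defref{s2inv}, $(\Sigma_u,\PO_u,\sq_u)^{\lozenge} = S_{\mathbf{t}}$ is the same for every $u\in\mathbf{t}$ (\corref{com2sos}), and moreover $\Sigma_u$ is the common event-occurrence set $\Sigma_{\mathbf{t}}$; to get that $\PO_u,\sq_u$ themselves (not merely their closures) are constant one observes that $\alpha\PO_u\beta$ iff $\alpha\lhd_u\beta$ and $(\ell(\alpha),\ell(\beta))\notin ser$, and since every $w\in\mathbf{t}$ gives $\lhd_w\in ext(S_{\mathbf{t}})$ with the same underlying enumeration, the relation $\bigcap_{w}\lhd_w \cap (\text{non-}ser\text{ pairs})$ coincides with $\PO_u$ — actually cleaner is to note $\PO_u$ depends only on which pairs are $\lhd_u$-ordered among non-$ser$-labelled pairs, and one checks directly from \defref{commonoid} that applying a single $\eqa_\theta$-step changes neither $\PO_u$ nor $\sq_u$ (swapping $A$ for $BC$ with $B\times C\subseteq ser$ only reorders $ser$-labelled pairs, which are invisible to \defref{s2inv}). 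That $\cd(\mathbf{t})\in\DCD(\theta)$ requires: $\PO_u,\sq_u$ irreflexive (immediate, as $\lhd_u$ is irreflexive and $\lhd_u^\frown$ excludes the diagonal, combined with irreflexivity of $ser$); $(\Sigma_u,\PO_u,\sq_u)^\lozenge = S_{\mathbf{t}}$ is a so-structure (\theoref{com2sos}); and \textsf{CD1}--\textsf{CD4}, of which \textsf{CD3},\textsf{CD4} are literally \defref{s2inv}, while \textsf{CD1} needs a short argument that two occurrences with non-$sim$ labels are $\lhd_u$-comparable (they cannot sit in the same step since steps are $sim$-cliques, so $\alpha\lhd_u\beta$ or $\beta\lhd_u\alpha$, and non-$sim$ implies non-$ser$ as $ser\subseteq sim$, giving $\cau$) and \textsf{CD2} follows similarly from $\alpha\lhd_u^\frown\beta$ or $\beta\lhd_u^\frown\alpha$.

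I expect the main obstacle to be \textsf{LC1}--\textsf{LC3} in part~1: one has to argue that a $\hat\sq$-covering pair of \emph{non-serializable sets} in $D^\lozenge$ can only arise from a chain of $\cau/\wcau$ edges whose endpoint labels are forced to be non-serializable, and in particular for \textsf{LC2} that the covering pair is \emph{not} in $\hat\PO$ exactly when the witnessing chain is built purely from $\wcau$-edges — mirroring the structure of Case~(i)/Case~(ii) of \lemref{l2}. Handling the quotient by $\eqrel{\sq}$ correctly (the covering relation lives on $\quotient{X}{\eqrel{\sq}}$, and non-serializable sets of $D$ are the strongly connected components of $(X,\wcau)$, as noted after \defref{comdag}) is where the care is needed. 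Everything else is routine unwinding of \defref{s2inv}, \defref{SO-CL}, \defref{comdag} and \defref{lcomtrace}.
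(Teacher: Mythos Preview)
Your proposal is correct and, for part~1, follows essentially the same route as the paper: verify \textsf{LC4}/\textsf{LC5} directly from \textsf{CD1}/\textsf{CD2} via $\cau\subseteq\PO$, $\wcau\subseteq\sq$, and handle \textsf{LC1}--\textsf{LC3} by unwinding the $\lozenge$-closure into a path of $\cau/\wcau$-edges and invoking \textsf{CD3}/\textsf{CD4}. The paper makes this precise via a \emph{shortest-path} argument on $(X,\cau\cup\wcau)$ with an induction on path length, using the covering hypothesis $[\alpha]\,\imm{\hat\sq}\,[\beta]$ to rule out intermediate $\sq$-classes; your sketch does not name this device, but the idea you describe is the same one.

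For part~2 your approach differs from the paper's: the paper simply cites \cite[Lemma~4.7]{JK95} for the invariance $u\equiv t \iff \cd([u])=\cd([t])$ and says nothing further, whereas you give a self-contained argument (a single $\eqa_\theta$-step only reorders pairs with labels in $ser$, hence invisible to \defref{s2inv}) and also explicitly verify \textsf{CD1}--\textsf{CD4} and the so-structure condition. Your argument is exactly what the cited external lemma proves, so this is not a genuinely different method---just more detail than the paper chose to include here. One small imprecision: your claim that for \textsf{LC2} the witnessing chain is ``purely from $\wcau$-edges'' is correct at the level of representatives (since $\neg[\alpha]\hat\PO[\beta]$ forces every $(\cau\cup\wcau)$-path between the classes to avoid $\cau$), but you should say this explicitly rather than leave it as an ``exactly when''.
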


\begin{proof} \textbf{1. } Given a cd-graph $D_1= \seq{X,\cau_1,\wcau_1,\lambda}\in \DCD(\theta)$, let $T= \seq{X,\PO,\sq,\lambda} = D_1^{\lozenge}$. We know that $T$ is uniquely defined, since by \defref{comdag}, $(X,\PO,\sq)$ is a so-structure, and so-structures are fixed points of $\lozenge$-closure (by \propref{so-cl} (4)). We will next show that $T$ is a lsos-comtrace by verifying the conditions \textsf{LC1}--\textsf{LC5} of \defref{lcomtrace}. Conditions \textsf{LC4} and \textsf{LC5} are exactly \textsf{CD1} and \textsf{CD2}.

\textsf{LC1}: Suppose for contradiction that there exist two distinct non-serializable sets $[\alpha],[\beta]\subset X$ such that $[\alpha](\imm{\hat{\sq}}\cap \hat{\PO})[\beta]$ and $\lambda([\alpha])\times \lambda([\beta]) \subseteq ser$. Clearly, this implies that $\alpha \PO \beta$, and thus by the $\lozenge$-closure definition, $\beta$ is reachable from $\alpha$ on the directed graph $G=(X,\bcau)$, where $\bcau = \cau\cup \wcau$. Now we consider a shortest path $P$\smallskip\\
\mbox{\hspace{3.3cm}}$\alpha = \delta_1 \bcau\delta_2 \bcau\ldots\bcau \delta_{k-1}\bcau \delta_k = \beta$ \smallskip\\
on $G$ that connects $\alpha$ to $\beta$. We will prove by induction on $k\ge 2$ that there exist two consecutive $\delta_i$ and $\delta_{i+1}$ on $P$ such that $\delta_i\in [\alpha]$ and $\delta_{i+1}\in [\beta]$ and $(\lambda(\delta_{i}),\lambda(\delta_{i+1}))\not \in ser$, which contradicts with  $\lambda([\alpha])\times \lambda([\beta]) \subseteq ser$. \\
\textbf{Base case:} when $k=2$, then $\alpha \bcau \beta$. Since $[\alpha](\imm{\hat{\sq}}\cap \hat{\PO})[\beta]$, we have $\alpha \cau \beta$, which by \textsf{CD3} implies $(\lambda(\alpha),\lambda(\beta))\not \in ser$.\\ \textbf{Inductive case:} when $k>2$, we consider  $\delta_1$ and  $\delta_2$. If $\delta_1\in [\alpha]$ and $\delta_2\in [\beta]$, then by $[\alpha](\imm{\hat{\sq}}\cap \hat{\PO})[\beta]$,  we have $\delta_1 \cau \delta_2$, which immediately yields $(\lambda(\delta_{1}),\lambda(\delta_{2}))\not \in ser$. Otherwise, we have $\delta_2\not\in [\alpha]\cup [\beta]$ or $\bset{\delta_1,\delta_2}\subseteq [\alpha]$. For the first case, we get $[\alpha]\hat{\sq}[\delta_2]\hat{\sq}[\beta]$, which contradicts that $[\alpha]\imm{\hat{\sq}}[\beta]$. For the latter case, we can apply induction hypothesis on the path $\delta_2 \bcau\ldots\bcau \delta_{k-1}\bcau \delta_k$.

\textsf{LC2} and \textsf{LC3} can also be shown similarly using ``shortest path'' argument as above. These proofs are easier since we only need to consider paths with edges in $\wcau$. 
 
\textbf{2. } By the proof of \cite[Lemma 4.7]{JK95}, for any two step sequences $t$ and $u$ in $\E_\theta^*$, we have $u \eqb t$ iff $\cd([u]) = \cd([t])$. Thus the mapping $\cd$ is well-defined. \qed
\end{proof}

\begin{lemma} The mapping  $\dl:\DCD(\theta)\rightarrow \LCT(\theta)$ is injective.
\label{lem:dlinj}
\end{lemma}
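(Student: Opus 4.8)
The plan is to show that $\dl$ is injective by exhibiting a left inverse, namely the map $\ld = \cd \circ \lc$. Concretely, I would prove that for every cd-graph $D = \seq{X,\cau,\wcau,\lambda} \in \DCD(\theta)$ we have $\ld \circ \dl(D) = D$, i.e. $\cd\bigl(\lc(D^{\lozenge})\bigr) = D$. Injectivity of $\dl$ is then immediate: if $\dl(D_1) = \dl(D_2)$, applying $\ld$ to both sides gives $D_1 = D_2$.

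To carry this out, fix $D = \seq{X,\cau,\wcau,\lambda}$ and let $T = D^{\lozenge} = \seq{X,\PO,\sq,\lambda}$, which is a lsos-comtrace by part 1 of the previous lemma. By \defref{drepmaps}, $\lc(T) = \bset{\map(\lambda,\Omega_\lhd) \mid \lhd \in ext(T)}$; pick any $u$ in this comtrace, so $u = \map(\lambda,\Omega_{\lhd_0})$ for some $\lhd_0 \in ext(T)$, and $\cd(\lc(T)) = (\Sigma_{[u]}, \PO_u, \sq_u, \ell)$. On the other hand, by \theoref{rep}(2) we know $\cl \circ \lc(T) = T$, so the canonical (enumerated) representation $T_0 = (\Sigma_u, \PO_0, \sq_0, \ell)$ of $T$ is lp-isomorphic to $T$ via the bijection $\xi$ of \propref{isoExt}. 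Thus it suffices to show that the cd-graph $(\Sigma_u, \PO_u, \sq_u, \ell)$ equals the cd-graph $(\Sigma_u, \PO_0, \sq_0, \ell)$ transported back through $\xi$ — equivalently, after identifying $X$ with $\Sigma_u$ via $\xi$, that $\cau = \PO_u$ and $\wcau = \sq_u$.

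The key step is therefore a ``canonicity of cd-graphs'' statement: two cd-graphs in $\DCD(\theta)$ with the same $\lozenge$-closure and lp-isomorphic underlying labeled so-structures must coincide. I would establish this by relating $\cau, \wcau$ to the stratified extensions of $D^{\lozenge}$. Since $D^{\lozenge}$ is a so-structure with $ext(D^{\lozenge}) = ext(T) = \set{\lhd_w \mid w \in [u]_\theta}$ (using \theoref{rep} and \theoref{com2sos}), and since $\PO_u, \sq_u$ are by \defref{s2inv} defined purely from the step sequence $u$ and the relation $ser$, I would show $\PO_u \subseteq \cau$ and $\sq_u \subseteq \wcau$ directly from \textsf{CD2}/\textsf{CD3}/\textsf{CD4}: if $\alpha \lhd_u \beta$ with $(\ell(\alpha),\ell(\beta)) \notin ser$ then \textsf{CD2} forces $\alpha \cau \beta$ or $\beta \wcau \alpha$, and the latter is excluded because $\beta \wcau \alpha$ would give $\beta \sq \alpha$ in $D^{\lozenge}$, contradicting $\alpha \lhd_u \beta$ since $\lhd_u \in ext(D^{\lozenge})$; the weak-causality inclusion is dual. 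For the reverse inclusions $\cau \subseteq \PO_u$ and $\wcau \subseteq \sq_u$, I would use \textsf{CD3} and \textsf{CD4} together with the fact that $\cau \subseteq \PO$ and $\wcau \subseteq \sq$ (so any $\cau$- or $\wcau$-edge is reflected in every stratified extension, in particular in $\lhd_u$), giving $\alpha \lhd_u \beta$ (resp. $\alpha \lhd_u^\frown \beta$) which combined with the $ser$-condition from \textsf{CD3} (resp. \textsf{CD4}) yields $\alpha \PO_u \beta$ (resp. $\alpha \sq_u \beta$).

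The main obstacle I anticipate is the reverse inclusions, specifically ruling out the ``wrong direction'' case: from $\alpha \wcau \beta$ I get $\beta \not\sqsubset \alpha$-type information only after knowing $\alpha \lhd_u^\frown \beta$, which requires that $\alpha \sq \beta$ in $D^{\lozenge}$ (true since $\wcau \subseteq \sq$) forces $\alpha \lhd_u^\frown \beta$ — this is fine because $\lhd_u \in ext(D^{\lozenge})$ means $\sq\, \subseteq\, \lhd_u^\frown$. So the genuinely delicate point is confirming $\lhd_u \in ext(D^{\lozenge})$, which is exactly where \theoref{rep} and the identification $[u] = \lc(T)$ via \theoref{com2sos} are needed; once that is in hand the four inclusions are routine case analysis on \textsf{CD1}--\textsf{CD4} and \defref{s2inv}. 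Finally I would note that the careful bookkeeping of the isomorphism $\xi$ — checking that transporting $\PO_u, \sq_u$ through $\xi$ really recovers $\cau, \wcau$ on $X$ and not merely an isomorphic copy — is what makes the conclusion $\ld \circ \dl(D) = D$ (rather than just up to lp-isomorphism, which is all that is needed anyway since cd-graphs are lp-isomorphism classes).
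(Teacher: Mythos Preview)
Your approach is correct but takes a different route from the paper. The paper argues directly: given $D_1,D_2\in\DCD(\theta)$ with $D_1^{\lozenge}=D_2^{\lozenge}=(X,\PO,\sq,\lambda)$, it shows $\cau_1\subseteq\cau_2$ (and dually $\wcau_1\subseteq\wcau_2$) using only the cd-graph axioms and the so-structure fact that $\alpha\PO\beta\Rightarrow\neg(\beta\sq\alpha)$. Concretely, $\alpha\cau_1\beta$ gives $(\lambda(\alpha),\lambda(\beta))\notin ser$ by \textsf{CD3} for $D_1$; then \textsf{CD2} for $D_2$ forces $\alpha\cau_2\beta$ or $\beta\wcau_2\alpha$, and the latter yields $\beta\sq\alpha$ while $\alpha\cau_1\beta$ yields $\alpha\PO\beta$, a contradiction. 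Symmetry finishes it in a few lines, with no appeal to stratified extensions, \theoref{rep}, or the canonical representation.

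Your argument instead establishes $\ld\circ\dl=\mathit{id}$ by showing that $\cau,\wcau$ are recovered from \emph{any} stratified extension $\lhd$ of $D^{\lozenge}$ via the formulas of \defref{s2inv}. This is logically sound and not circular (everything you invoke---well-definedness of $\dl$ and $\cd$, \theoref{rep}(2), \propref{isoExt}---is available at this point), and it has the virtue of proving \theoref{deprep}(1) directly rather than deducing it afterwards from injectivity. The price is the extra bookkeeping with $\xi$ and the canonical representation, which the paper avoids entirely. Note, incidentally, that the heart of both arguments is the same case split: use \textsf{CD3}/\textsf{CD4} to get a $ser$-condition, then \textsf{CD2} to get a dichotomy, then rule out the wrong branch via an order contradiction---the paper uses $\PO$ versus $\sq$ in the common closure, you use $\lhd$ versus $\lhd^{\frown}$ in a chosen extension.
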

\begin{proof} Assume that $D_1,D_2 \in \DCD(\theta)$, such that $\dl(D_1)=\dl(D_2) = T= [X,\PO,\sq,\lambda]$. Since $\lozenge$-closure operator does not change the labeling function, we can assume that $D_i= [X,\cau_i,\wcau_i,\lambda]$ and $(X,\cau_i,\wcau_i)^{\lozenge} = (X,\PO,\sq)$. We will next show that $(X,\cau_1,\wcau_1)\subseteq(X,\cau_2,\wcau_2)$.

($\cau_1\;\subseteq\;\cau_2$): Let $\alpha,\beta\in X$ such that $\alpha \cau_1 \beta$. Suppose for a contradiction that $\neg (\alpha \cau_2 \beta)$. Since $\alpha \cau_1 \beta$, by \textsf{CD3},  $(\lambda(\alpha),\lambda(\beta))\not\in ser$. Thus, by \textsf{CD2}, $\beta \wcau_2 \alpha$. But since $(X,\cau_i,\wcau_i)^{\lozenge} = (X,\PO,\sq)$, it follows that $(X,\cau_i,\wcau_i) \subseteq (X,\PO,\sq)$ (by \propref{so-cl}). Thus, $\alpha\PO\beta$ and $\beta \sq \alpha$, a contradiction.

($\wcau_1\;\subseteq\;\wcau_2$): Can be proved similarly.

By reversing the role of $D_1$ and $D_2$, we have $(X,\cau_1,\wcau_1)\supseteq(X,\cau_2,\wcau_2)$. Thus, we conclude $D_1=D_2$.
\qed
\end{proof}

We are now ready to show the following representation theorem which ensures that $\ld$ and $\dl$ are valid representation mappings for $\LCT(\theta)$ and $\DCD(\theta)$.

\begin{theorem}[The 2$^{nd}$ Representation Theorem] Let $\theta$ be a comtrace alphabet.
\begin{enumerate}
\item For every $D\in \DCD(\theta)$, $\ld \circ \dl(D) = D$.
\item For every $T\in \LCT(\theta)$, $\dl \circ \ld(T) = T$.
\end{enumerate}
\label{theo:deprep}
\end{theorem}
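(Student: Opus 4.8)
The plan is to prove the two identities by leveraging the First Representation Theorem (\theoref{rep}), which already establishes that $\cl$ and $\lc$ are mutually inverse bijections between $\quotient{\E_\theta^*}{\equiv_\theta}$ and $\LCT(\theta)$, together with the well-definedness facts and the injectivity of $\dl$ just proved. Recall $\ld = \cd \circ \lc$ and $\dl(D) = D^\lozenge$, and note also that $\cl = \dl \circ \cd$ essentially by \defref{s2sos}, since $S_{\mathbf t} = (\Sigma_u, \PO_u, \sq_u)^\lozenge = \dl(\cd(\mathbf t))$ when $u \in \mathbf t$.

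For part \textbf{2}, the cleanest route is: $\dl \circ \ld(T) = \dl(\cd(\lc(T))) = \cl(\lc(T)) = T$, where the middle equality is exactly the identity $\cl = \dl \circ \cd$ observed above, and the last equality is assertion \textbf{2} of \theoref{rep}. So part \textbf{2} reduces entirely to unwinding definitions plus invoking the First Representation Theorem.

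For part \textbf{1}, I would argue as follows. Fix $D \in \DCD(\theta)$ and set $T = \dl(D) = D^\lozenge \in \LCT(\theta)$ (well-defined and a genuine lsos-comtrace by the preceding lemma). We want $\ld(T) = D$, i.e. $\cd(\lc(D^\lozenge)) = D$. Let $\mathbf t = \lc(T) = \lc(D^\lozenge) \in \quotient{\E_\theta^*}{\equiv_\theta}$. Then $\cd(\mathbf t) = \ld(T) \in \DCD(\theta)$. Applying $\dl$ to both sides: $\dl(\ld(T)) = \dl(\cd(\mathbf t)) = \cl(\mathbf t) = \cl(\lc(T)) = T = \dl(D)$, again using $\cl = \dl \circ \cd$ and \theoref{rep}\,\textbf{2}. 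Since $\dl$ is injective (\lemref{dlinj}), we conclude $\ld(T) = D$, which is part \textbf{1}.

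The only genuine content beyond bookkeeping is the identity $\cl = \dl \circ \cd$, which I expect to be the main (minor) obstacle to state carefully: one must check that for $\mathbf t \in \quotient{\E_\theta^*}{\equiv_\theta}$ and any $u \in \mathbf t$, the cd-graph $\cd(\mathbf t) = (\Sigma_{\mathbf t}, \PO_u, \sq_u, \ell)$ indeed has $\lozenge$-closure equal to $\cl(\mathbf t) = (\Sigma_{\mathbf t}, \PO_{\mathbf t}, \sq_{\mathbf t}, \ell)$, which is immediate from \defref{s2sos} since $S_{\mathbf t} = (\Sigma_u, \PO_u, \sq_u)^\lozenge$ and the labeling is untouched by $\lozenge$-closure. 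One also must double-check that $\cd$ is independent of the representative $u$ (already done in the well-definedness lemma) so the composition makes sense. Everything else is a direct chase through \defref{repmaps}, \defref{drepmaps}, \theoref{rep}, and \lemref{dlinj}.
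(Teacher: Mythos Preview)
Your proposal is correct and follows essentially the same route as the paper: both arguments hinge on the identity $\cl = \dl \circ \cd$ (which the paper also states explicitly), invoke \theoref{rep}(2) to get $\dl(\ld(T)) = T$, and then use the injectivity of $\dl$ (\lemref{dlinj}) to finish part~1. The only cosmetic difference is that the paper phrases both parts by contradiction whereas you argue directly, but the logical content is identical.
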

\begin{proof}\textbf{1. } Let $D \in \DCD(\theta)$ and let $T = \dl(D)$. Suppose for a contradiction that $E =\ld \circ \dl(D)$ and $E\not = D$. From how $\cl$ is defined, $\cl = \dl\circ \cd$. Thus, it follows that $\dl(E)= T = \dl(D)$. But this contradicts the injectivity of $\dl$ from \lemref{dlinj}.

\textbf{2. } Let $T\in \LCT(\theta)$ and let $D = \ld(T)$. Suppose for a contradiction that $Q = \dl \circ \ld(T)$ and $Q\not= T$. Since $\ld = \cd\circ \lc$, if we let $\mathbf{t}= \lc(T)$, then  $Q = \dl\circ \cd (\mathbf{t})\not= T$. Thus, we have shown that $\mathbf{t}=\lc(T)$ and $\cl(\mathbf{t})=\dl\circ \cd (\mathbf{t})\not= T$, contradicting \theoref{rep} (2).
\qed 
\end{proof}

This theorem shows that lsos-comtraces and cd-graphs are equivalent representations for comtraces.  The main advantage of cd-graph definition is its simplicity while the lsos-comtrace definition is stronger and more convenient to prove properties about labeled so-structures that represent comtraces. 

We do not need to prove another representation theorem for cd-graphs and comtraces since their representation mappings are simply the composition of the representation mappings from Theorems \ref{theo:rep} and \ref{theo:deprep}.

\section{Composition Operators}
Recall for a comtrace monoid $(\E^*/\!\!\equiv_{\theta},\circledast,[\epsilon])$, the comtrace operator $\_\circledast\_$ is defined as $[r]\circledast[t] = [r\ast t]$. We will construct analogous composition operators for lsos-comtraces and cd-graphs. We will then show that lsos-comtraces (cd-graphs) over a comtrace alphabet $\theta$ together with its composition operator form a monoid isomorphic to the comtrace monoid $(\E^*/\!\!\equiv_{\theta},\circledast,[\epsilon])$.

Given two sets $X_{1}$ and $X_{2}$, we  write $X_{1}\uplus X_{2}$ to denote the \emph{disjoint union} of $X_{1}$ and $X_{2}$. Such disjoint union can be easily obtained by renaming the elements in $X_{1}$ and $X_{2}$ so that $X_{1}\cap X_{2} =\emptyset$. We define the lsos-comtrace composition operator as follows.

\begin{definition}[composition of lsos-comtraces] Let $T_{1}$ and $T_{2}$ be lsos-comtraces over an alphabet $\theta=(E,sim,ser)$, where $T_{i}=\seq{X_{i},\prec_{i},\sqsubset_{i},\lambda_{i}}$. The \emph{composition} $T_{1}\odot T_{2}$ of $T_{1}$ and $T_{2}$ is defined as (a lp-isomorphic class of) a labeled so-structure $\seq{X,\prec,\sqsubset,\lambda}$ such that $X=X_{1}\uplus X_{2}$, $\lambda = \lambda_{1}\cup \lambda_{2}$, and $(X,\prec,\sqsubset) = \left(X,\PO_{\langle1,2\rangle},\sq_{\langle1,2\rangle}\right)^{\lozenge}$, where\smallskip\\
\mbox{\hspace{1.5cm}}$\PO_{\langle1,2\rangle}\;=\; \PO_{1}\cup \PO_{2}\cup\; \bset{(\alpha,\beta)\in X_{1}\times X_{2}\mid (\lambda(\alpha),\lambda(\beta))\not\in ser}$\\
\mbox{\hspace{1.5cm}}$\sq_{\langle1,2\rangle}\; =\; \sq_{1}\cup \sq_{2}\cup\; \bset{(\alpha,\beta)\in X_{1}\times X_{2}\mid (\lambda(\beta),\lambda(\alpha))\not\in ser}$ \EOD
\end{definition}

Observe that the operator is well-defined since  we can easily check that:
\begin{proposition}
For every $T_1,T_2\in \LCT(\theta)$, $T_1\odot T_2 \in \LCT(\theta)$. \qed 
\end{proposition}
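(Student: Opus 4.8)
The plan is to verify that $T = T_1 \odot T_2 = \seq{X,\PO,\sq,\lambda}$ satisfies all five conditions \textsf{LC1}--\textsf{LC5} of \defref{lcomtrace}. The most efficient route is to connect this composition to the comtrace-level composition via the representation mappings of \theoref{rep}. Specifically, I would first show that $\lc(T_1 \odot T_2) = \lc(T_1)\circledast \lc(T_2)$: pick step sequences $r\in\lc(T_1)$ and $t\in\lc(T_2)$ realized as $\Omega_{\lhd_1}$ and $\Omega_{\lhd_2}$ for stratified extensions $\lhd_i\in ext(T_i)$, observe that the concatenation $\lhd_1 \ast \lhd_2$ (placing all of $X_1$ before all of $X_2$) is a stratified order on $X$ whose induced causality/weak-causality contains $\PO_{\LG{1,2}}$ and $\sq_{\LG{1,2}}$, hence is a stratified extension of $(X,\PO_{\LG{1,2}},\sq_{\LG{1,2}})^\lozenge$ by \propref{so-cl}(5); conversely every extension of the $\lozenge$-closure arises this way because the added arcs force $X_1$-elements not to appear later than $X_2$-elements in the way dictated by $ser$, so that $\map(\lambda,\Omega_\lhd)$ is exactly a step sequence of the form $r' A z$-type rewriting of $r\ast t$. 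Then $\cl$ applied to $\lc(T_1)\circledast\lc(T_2) = [r\ast t]$ gives back a lsos-comtrace by \lemref{l3}, and by \theoref{rep}(2) this is $T_1\odot T_2$ itself provided we already know $T_1\odot T_2$ is \emph{some} lsos-comtrace with that image.

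To avoid circularity in the argument above, I would instead give a direct check. The key structural fact is that, by construction, $(X,\PO,\sq) = (X,\PO_{\LG{1,2}},\sq_{\LG{1,2}})^\lozenge$ is a so-structure as long as the relation $(\PO_{\LG{1,2}}\cup\sq_{\LG{1,2}})^* \circ \PO_{\LG{1,2}} \circ (\PO_{\LG{1,2}}\cup\sq_{\LG{1,2}})^*$ is irreflexive (\propref{so-cl}(3)); irreflexivity holds because any such path, projected onto the two blocks, can only cross from $X_1$ to $X_2$, never back, and within each block $T_i$ is already a so-structure. Next, for \textsf{LC4} and \textsf{LC5}: if $(\lambda(\alpha),\lambda(\beta))\notin ser$ with $\alpha\in X_1,\beta\in X_2$ then $(\alpha,\beta)\in\PO_{\LG{1,2}}\subseteq\PO$; if $\alpha\in X_2,\beta\in X_1$ then $(\beta,\alpha)\in\sq_{\LG{1,2}}\subseteq\sq$ after noting $(\lambda(\beta),\lambda(\alpha))$... — here one must be careful: the defining set for $\sq_{\LG{1,2}}$ uses the \emph{reversed} pair, so \textsf{LC4} for a cross pair $\alpha\in X_2,\beta\in X_1$ needs $(\lambda(\alpha),\lambda(\beta))\notin ser$ to give $\beta\sq\alpha$, which is exactly the defining condition of $\sq_{\LG{1,2}}$ read with $(\alpha,\beta)\leftarrow(\beta,\alpha)$. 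For same-block pairs, \textsf{LC4}, \textsf{LC5} follow from the corresponding properties of $T_i$ since $\PO_i\subseteq\PO$, $\sq_i\subseteq\sq$. Analogously \textsf{LC5} uses that $sim\supseteq ser$, so $(\lambda(\alpha),\lambda(\beta))\notin sim$ implies both $(\lambda(\alpha),\lambda(\beta))\notin ser$ and $(\lambda(\beta),\lambda(\alpha))\notin ser$, forcing $\alpha\PO\beta$ or $\beta\PO\alpha$ across blocks.

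The harder conditions are \textsf{LC1}, \textsf{LC2}, \textsf{LC3}, which talk about the quotient structure $\quotient{(X,\PO,\sq)}{\eqrel{\sq}}$ and its covering relation $\imm{\hat\sq}$. The main obstacle is understanding the $\sq$-cycle equivalence classes of the composite: I would show that every $\sq_\LG{1,2}$-cycle, and hence every $\sq$-cycle (since $\lozenge$-closure adds no new cycles when the result is a so-structure, as $\sq$ is then a strict preorder whose symmetric part is determined), lies entirely within $X_1$ or entirely within $X_2$ — because a cycle through both blocks would need an arc from $X_2$ back to $X_1$, which $\sq_{\LG{1,2}}$ never provides. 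Consequently the non-serializable sets of $T_1\odot T_2$ are exactly those of $T_1$ together with those of $T_2$, and \textsf{LC3} is immediate from \textsf{LC3} for $T_1$ and $T_2$. For \textsf{LC1} and \textsf{LC2}, given $[\alpha]\imm{\hat\sq}[\beta]$ I would split on whether both classes lie in the same block (then invoke the corresponding condition for $T_i$, checking that the covering relation restricted to a block refines the block's own covering relation — which needs a small lemma that a $\hat\sq$-cover in the composite with both endpoints in $X_i$ is a $\hat\sq$-cover in $T_i$, true because $\sq$ restricted to $X_i$ equals $\sq_i$) or lie in different blocks (then necessarily $[\alpha]\subseteq X_1$, $[\beta]\subseteq X_2$, and by minimality of the cover together with the $\lozenge$-closure there must be a direct $\PO_{\LG{1,2}}$- or $\sq_{\LG{1,2}}$-arc from some element of $[\alpha]$ to some element of $[\beta]$, whose defining condition gives exactly the required non-inclusion in $ser$ — essentially a shortest-path argument like the one used in the well-definedness proof of $\dl$). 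Finally I would remark that the composite is finite and $\lambda:X\to E$, completing the verification.
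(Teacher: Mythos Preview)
The paper gives no proof for this proposition: it is stated with a bare \qed\ after the sentence ``we can easily check that''. Your direct verification of \textsf{LC1}--\textsf{LC5} is a correct way to supply the omitted argument, and the ingredients you identify are the right ones: (i) irreflexivity in \propref{so-cl}(3) holds because every cross-arc in $\PO_{\langle1,2\rangle}\cup\sq_{\langle1,2\rangle}$ goes from $X_1$ to $X_2$; (ii) consequently every $\sq$-cycle lies in a single block, so the non-serializable sets of $T_1\odot T_2$ are exactly those of $T_1$ and $T_2$, and $\sq$ (resp.\ $\PO$) restricted to $X_i$ coincides with $\sq_i$ (resp.\ $\PO_i$) because $T_i=T_i^{\lozenge}$; (iii) for a cross-block cover $[\alpha]\imm{\hat\sq}[\beta]$ the shortest-path argument forces all intermediate vertices into $[\alpha]\cup[\beta]$, and since no $\PO_{\langle1,2\rangle}$-edge can lie inside a $\sq$-class the crossing edge is of the required type for \textsf{LC1} or \textsf{LC2}. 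Your check of \textsf{LC4}--\textsf{LC5} for cross pairs is also correct.

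On your first route: the circularity you worry about is avoidable. The proof of \propref{hom1}(2) in the paper establishes $\cl(\mathbf{r}\circledast\mathbf{t})=T_1\odot T_2$ using only \defref{s2inv}, \defref{s2sos} and \propref{so-cl}(5), none of which presupposes that $T_1\odot T_2\in\LCT(\theta)$; combining that identity with \lemref{l3} immediately yields the proposition. So either ordering works, and the paper's ``easily check'' most plausibly refers to one of these two short arguments.
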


We will next show that this composition operator $\_\odot\_$ properly corresponds to the operator $\_\circledast\_$  of the comtrace monoid over $\theta$.

\begin{proposition}  Let $\theta$ be a comtrace alphabet. Then  
\begin{enumerate}
\item For every $R,T \in \LCT(\theta)$, $\lc(R \odot T) = \lc(R) \circledast \lc(T).$

\item For every $\mathbf{r},\mathbf{t} \in \quotient{\E_\theta^*}{\equiv_{\theta}}$,
$\cl(\mathbf{r}\circledast \mathbf{t})= \cl(\mathbf{r})\odot\cl(\mathbf{t}).$
\end{enumerate}
\label{prop:hom1}
\end{proposition}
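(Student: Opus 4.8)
The plan is to prove part (1) directly and then deduce part (2) from part (1) together with the first representation theorem. For part (1), fix lsos-comtraces $R=\seq{X_1,\PO_1,\sq_1,\lambda_1}$ and $T=\seq{X_2,\PO_2,\sq_2,\lambda_2}$ and set $R\odot T = \seq{X,\PO,\sq,\lambda}$ with $(X,\PO,\sq)=(X,\PO_{\LG{1,2}},\sq_{\LG{1,2}})^\lozenge$. By \defref{repmaps}, $\lc(R\odot T) = \bset{\map(\lambda,\Omega_\lhd)\mid \lhd\in ext(X,\PO,\sq)}$, while $\lc(R)\circledast\lc(T) = [u_R]\circledast[u_T] = [u_R\ast u_T]$ for any $u_R\in\lc(R)$, $u_T\in\lc(T)$. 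So it suffices to show the equality of comtraces $\bigl[\map(\lambda,\Omega_\lhd)\bigr] = [u_R\ast u_T]$ for some (hence every) $\lhd\in ext(X,\PO,\sq)$, and that every step sequence arising on the left lies in $[u_R\ast u_T]$. The cleanest route is to prove the set identity $ext(R\odot T) = \bset{\text{stratified orders whose step sequence is } \Omega_{\lhd_1}\Omega_{\lhd_2}' \text{ for } \lhd_1\in ext(R),\ \lhd_2\in ext(T)}$ up to the obvious renaming, and \emph{then} more: that applying $\map(\lambda,-)$ to all of $ext(R\odot T)$ yields exactly the comtrace $[u_R\ast u_T]$.

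The key steps, in order: (i) Show $ext(R\odot T)\supseteq\{\lhd_1\oplus\lhd_2\}$, i.e. any ``concatenation'' of a stratified extension of $R$ with one of $T$ is a stratified extension of $R\odot T$; this uses \propref{so-cl}(5) (so $(X,\PO,\sq)\subseteq$ any so-structure containing $(X,\PO_{\LG{1,2}},\sq_{\LG{1,2}})$) plus the observation that for $\lhd_1\oplus\lhd_2$ we have $\PO_{\LG{1,2}}\subseteq\lhd$ and $\sq_{\LG{1,2}}\subseteq\lhd^\frown$ — the cross-terms are handled because in $\lhd_1\oplus\lhd_2$ every element of $X_1$ precedes every element of $X_2$, so in particular the conditional cross-edges (whether in $\PO_{\LG{1,2}}$ or $\sq_{\LG{1,2}}$) are respected. (ii) Show the converse: every $\lhd\in ext(R\odot T)$ has this concatenated form. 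Here I use that $\PO_{\LG{1,2}}$ forces, via \textsf{LC5}-type reasoning or directly, that no element of $X_2$ can precede or be simultaneous-across with an element of $X_1$ in a stratified extension — more precisely, if $\alpha\in X_1,\beta\in X_2$ and $(\lambda(\alpha),\lambda(\beta))\in ser$ yet $(\lambda(\beta),\lambda(\alpha))\notin ser$ then $\alpha\sq\beta$ and $\beta\not\sq\alpha$, while if both are in $ser$ then $\alpha\sq\beta$ and $\beta\sq\alpha$ fails only if... — the point is that $\alpha\lhd^\frown\beta$ always and $\beta\lhd\alpha$ never, so $X_1$ entirely precedes-or-ties $X_2$; combined with the fact that $R\odot T$ restricted to $X_i$ recovers $(X_i,\PO_i,\sq_i)$ (again \propref{so-cl}), the restriction of $\lhd$ to $X_i$ lies in $ext(X_i,\PO_i,\sq_i)$. (iii) Translate via $\map(\lambda,-)$: combining (i),(ii) with \theoref{com2sos} (which says $ext(S_{[u_R]})=\{\lhd_w: w\in[u_R]\}$ etc.) shows $\{\map(\lambda,\Omega_\lhd):\lhd\in ext(R\odot T)\} = \{s_R\ast s_T : s_R\in[u_R], s_T\in[u_T]\}$, and this last set is exactly $[u_R\ast u_T]$ by \defref{commonoid} (congruence steps act locally on a factor, so they can't mix the $s_R$ and $s_T$ halves when the interface $X_1\to X_2$ already forbids reordering across it). This gives part (1).

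Part (2) then follows formally: given $\mathbf r,\mathbf t\in\quotient{\E_\theta^*}{\equiv_\theta}$, apply part (1) with $R=\cl(\mathbf r)$, $T=\cl(\mathbf t)$ to get $\lc(\cl(\mathbf r)\odot\cl(\mathbf t)) = \lc(\cl(\mathbf r))\circledast\lc(\cl(\mathbf t)) = \mathbf r\circledast\mathbf t$ by \theoref{rep}(1); now apply $\cl$ to both sides and use \theoref{rep}(2) (i.e. $\cl\circ\lc$ is the identity on $\LCT(\theta)$) to conclude $\cl(\mathbf r)\odot\cl(\mathbf t) = \cl(\mathbf r\circledast\mathbf t)$. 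The main obstacle I expect is step (ii)–(iii): carefully verifying that a stratified extension of $R\odot T$ cannot interleave $X_1$ and $X_2$ — one must rule out simultaneity of an $X_1$-element with an $X_2$-element as well as $X_2$-before-$X_1$, handling the two cases $(\lambda(\alpha),\lambda(\beta))\in ser$ and $\notin ser$ separately — and, dually, that the comtrace congruence on $u_R\ast u_T$ never straddles the interface, which is where the asymmetric conditions defining $\PO_{\LG{1,2}}$ and $\sq_{\LG{1,2}}$ must be shown to exactly match the $ser$-conditions in \defref{commonoid}; the $\lozenge$-closure bookkeeping via \propref{so-cl}(5) is the tool that keeps this manageable.
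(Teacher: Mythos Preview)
Your step~(ii) is false, and this is a genuine gap. Stratified extensions of $R\odot T$ can interleave $X_1$ and $X_2$: if $\alpha\in X_1$, $\beta\in X_2$ with $(\lambda(\alpha),\lambda(\beta))\in ser$ and $(\lambda(\beta),\lambda(\alpha))\in ser$, then \emph{neither} cross-edge is added to $\PO_{\LG{1,2}}$ or $\sq_{\LG{1,2}}$, so after $\lozenge$-closure $\alpha$ and $\beta$ may remain incomparable in $\sq$, allowing an extension with $\beta\lhd\alpha$ or $\beta\frown_\lhd\alpha$. (Concretely: $R=[\{a\}]$, $T=[\{b\}]$ with $(a,b),(b,a)\in ser$; then $ext(R\odot T)$ contains $\{b\}\{a\}$ and $\{a,b\}$.) Your assertion in~(iii) that ``congruence steps \dots\ can't mix the $s_R$ and $s_T$ halves'' is wrong for the same reason: the comtrace $[u_R\ast u_T]$ generally contains step sequences not of the form $s_R\ast s_T$.

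The fix is that you do not need~(ii) or~(iii) at all. Your step~(i) already suffices, once combined with \theoref{rep}: since $R\odot T\in\LCT(\theta)$, \theoref{rep}(2) guarantees $\lc(R\odot T)$ is a comtrace, hence equals $[\map(\lambda,\Omega_\lhd)]$ for \emph{any single} $\lhd\in ext(R\odot T)$. Taking $\lhd=\lhd_1\oplus\lhd_2$ from~(i) gives $\lc(R\odot T)=[\map(\lambda_1,\Omega_{\lhd_1})\ast\map(\lambda_2,\Omega_{\lhd_2})]=\lc(R)\circledast\lc(T)$ immediately. This is exactly the paper's (three-line) proof of part~(1). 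Your derivation of part~(2) from part~(1) via \theoref{rep} is correct and in fact cleaner than the paper's route, which proves part~(2) by a direct two-sided inclusion argument on the $\lozenge$-closures using \propref{so-cl}(5).
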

\begin{proof}\textbf{1. }Assume $R=\seq{X_1,\PO_1,\sq_1,\lambda_1}$, $T=\seq{X_2,\PO_2,\sq_2,\lambda_2}$ and $Q=\seq{X_1\uplus X_1,\PO,\sq,\lambda}$. We can pick $\lhd_1\in ext(R)$ and $\lhd_2\in ext(T)$. Then observe that a stratified order $\lhd$ satisfying $\Omega_{\lhd} = \Omega_{\lhd_1}\ast \Omega_{\lhd_2}$ is an extension of $Q$. Thus, by \theoref{rep}, we have 
$\lc(R) \circledast \lc(T) = [\map(\lambda_1,\lhd_1)]\circledast [\map(\lambda_2,\lhd_2)] =  [\map(\lambda,\lhd)] = \lc(Q)$ as desired.

\textbf{2. } Without loss of generality, we can assume that $\mathbf{r}= [r]$, $\mathbf{t}=[t]$ and $\mathbf{q}=[q] = \mathbf{r}\circledast \mathbf{t}$, where $q = r \ast t$. By appropriate reindexing, we can also assume that $\Sigma_{\mathbf{q}} = \Sigma_{\mathbf{r}}\uplus \Sigma_{\mathbf{t}}$. Under these assumptions, let $\cl(\mathbf{r})=T_{1} = \seq{\Sigma_{\mathbf{r}},\PO_{\mathbf{r}},\sq_{\mathbf{r}},l_{1}}$, $\cl(\mathbf{t})=T_{2} = \seq{\Sigma_{\mathbf{t}},\PO_{\mathbf{t}},\sq_{\mathbf{t}},l_{2}}$ and $\cl(\mathbf{q}) = T = \seq{\Sigma_{\mathbf{q}},\PO_{\mathbf{q}},\sq_{\mathbf{q}},l}$, where $l=l_{1}\cup l_{2}$ is simply the standard labeling functions. It will now suffice to show that $T_{1}\odot T_{2} = T$.

($\subseteq$): Let $T_{1}\odot T_{2} = (\Sigma_{\mathbf{r}}\uplus \Sigma_{\mathbf{t}},\PO_{\langle \mathbf{r},\mathbf{t}\rangle},\sq_{\langle \mathbf{r},\mathbf{t}\rangle},l)^{\lozenge}$. By Definitions \ref{def:s2inv} and \ref{def:s2sos}, we have\\
\mbox{\hspace{1.5cm}}$\PO_{\langle \mathbf{r},\mathbf{t}\rangle}\;=\; \PO_{\mathbf{r}}\cup \PO_{\mathbf{t}}\cup\; \bset{(\alpha,\beta)\in \Sigma_{\mathbf{r}}\times \Sigma_{\mathbf{t}}\mid (\lambda(\alpha),\lambda(\beta))\not\in ser} \subseteq\, \PO_{\mathbf{q}}$\\
\mbox{\hspace{1.5cm}}$\sq_{\langle \mathbf{r},\mathbf{t}\rangle}\;=\; \sq_{\mathbf{r}}\cup \sq_{\mathbf{t}}\cup\; \bset{(\alpha,\beta)\in \Sigma_{\mathbf{r}}\times \Sigma_{\mathbf{t}}\mid (\lambda(\beta),\lambda(\alpha))\not\in ser}\subseteq\, \sq_{\mathbf{q}}$\\
Thus, by \propref{so-cl} (5),  we have $(\Sigma_{\mathbf{r}}\uplus \Sigma_{\mathbf{t}},\PO_{\langle \mathbf{r},\mathbf{t}\rangle},\sq_{\langle \mathbf{r},\mathbf{t}\rangle},l)^{\lozenge}\subseteq  (\Sigma_{\mathbf{q}},\PO_{\mathbf{q}},\sq_{\mathbf{q}},l)$ as desired. Furthermore, by  \propref{so-cl} (5),  $(\Sigma_{\mathbf{r}}\uplus \Sigma_{\mathbf{t}},\PO_{\langle \mathbf{r},\mathbf{t}\rangle},\sq_{\langle \mathbf{r},\mathbf{t}\rangle})^{\lozenge}$ is a so-structure.

($\supseteq$): By Definitions \ref{def:s2inv} and \ref{def:s2sos}, we have $\PO_{q}\subseteq\, \PO_{\langle \mathbf{r},\mathbf{t}\rangle}$ and $\sq_{q}\subseteq\, \sq_{\langle \mathbf{r},\mathbf{t}\rangle}$. Since we already know $(\Sigma_{\mathbf{r}}\uplus \Sigma_{\mathbf{t}},\PO_{\langle \mathbf{r},\mathbf{t}\rangle},\sq_{\langle \mathbf{r},\mathbf{t}\rangle})^{\lozenge}$ is a so-structure, it follows from \propref{so-cl} (5) that \\
\mbox{\hspace{1.3cm}}$(\Sigma_{\mathbf{q}},\PO_{\mathbf{q}},\sq_{\mathbf{q}},l) = (\Sigma_{\mathbf{q}},\PO_{q},\sq_{q},l)^{\lozenge}\subseteq (\Sigma_{\mathbf{r}}\uplus \Sigma_{\mathbf{t}},\PO_{\langle \mathbf{r},\mathbf{t}\rangle},\sq_{\langle \mathbf{r},\mathbf{t}\rangle},l)^{\lozenge} = T_{1}\odot T_{2}.$\qed
\end{proof}

Let $\mathbb{I}$ denote the lp-isomorphic class  $[\emptyset,\emptyset,\emptyset,\emptyset]$. Then we observe that $\cl([\epsilon]) = \mathbb{I}$ and $\lc(\mathbb{I}) = [\epsilon]$. By \propref{hom1} and \theoref{rep}, the structure $(\LCT(\theta),\odot, \mathbb{I})$ is isomorphic to the monoid $(\quotient{\E_\theta^*}{\equiv_{\theta}},\circledast,[\epsilon])$ under the isomorphisms $\cl:\quotient{\E_\theta^*}{\equiv_{\theta}} \rightarrow \LCT(\theta)$ and $\lc:\LCT(\theta)\rightarrow \quotient{\E_\theta^*}{\equiv_{\theta}}$. Thus, the triple $(\LCT(\theta),\odot , \mathbb{I})$ is also a monoid. We can summarize these facts in the following theorem:
\begin{theorem}
The mappings $\cl$ and $\lc$ are monoid isomorphisms between two monoids $(\quotient{\E_\theta^*}{\equiv_{\theta}},\circledast,[\epsilon])$ and  $(\LCT(\theta),\odot, \mathbb{I})$. \qed
\end{theorem}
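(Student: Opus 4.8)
The plan is to assemble the statement from pieces already proved rather than to do any new work. Recall that a monoid isomorphism is a bijection between the carrier sets that preserves the identity and the binary operation. I would first establish that $\cl$ and $\lc$ are mutually inverse bijections: this is precisely the content of \theoref{rep}, which states $\lc\circ\cl(\mathbf{t})=\mathbf{t}$ for all $\mathbf{t}\in\quotient{\E_\theta^*}{\equiv_\theta}$ and $\cl\circ\lc(T)=T$ for all $T\in\LCT(\theta)$. Hence both maps are bijective and each is the inverse of the other.

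Next I would verify that the identities correspond: $\cl([\epsilon])=\mathbb{I}$ and $\lc(\mathbb{I})=[\epsilon]$. The first holds because $[\epsilon]$ has empty event-occurrence set, so $\Sigma_{[\epsilon]}=\emptyset$ and the so-structure defined by it is $(\emptyset,\emptyset,\emptyset)$ with the empty labeling; the second follows from the first by \theoref{rep} (1) (or directly, since the only stratified extension of the empty so-structure is the empty step sequence and $\map(\emptyset,\epsilon)=\epsilon$).

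Then I would show the homomorphism property. For $\lc$ this is \propref{hom1} (1): $\lc(R\odot T)=\lc(R)\circledast\lc(T)$ for all $R,T\in\LCT(\theta)$. Combined with the fact that $\lc$ is a bijection sending $\mathbb{I}$ to $[\epsilon]$, this already makes $\lc$ a monoid isomorphism from $(\LCT(\theta),\odot,\mathbb{I})$ onto $(\quotient{\E_\theta^*}{\equiv_\theta},\circledast,[\epsilon])$. Since $\cl$ is the two-sided inverse of a monoid isomorphism, $\cl$ is automatically a monoid isomorphism in the other direction; alternatively one may quote \propref{hom1} (2), $\cl(\mathbf{r}\circledast\mathbf{t})=\cl(\mathbf{r})\odot\cl(\mathbf{t})$, to see the preservation of the operation directly. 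Putting these observations together yields the claim.

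There is essentially no obstacle here, since every ingredient has been proved: the only mild point to be careful about is the implicit identification of lsos-comtraces up to lp-isomorphism, so that the disjoint-union construction underlying $\odot$ is well-defined on lp-isomorphism classes — but this was already addressed when $\odot$ was introduced and when \propref{hom1} was proved, so it can be invoked without further comment.

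\begin{proof}
By \theoref{rep}, $\cl$ and $\lc$ are mutually inverse bijections between $\quotient{\E_\theta^*}{\equiv_\theta}$ and $\LCT(\theta)$. Moreover $\cl([\epsilon])=\mathbb{I}$ and $\lc(\mathbb{I})=[\epsilon]$, since the comtrace $[\epsilon]$ has empty event-occurrence set and thus is sent to the labeled so-structure $[\emptyset,\emptyset,\emptyset,\emptyset]=\mathbb{I}$. Finally, \propref{hom1} (1) gives $\lc(R\odot T)=\lc(R)\circledast\lc(T)$ for all $R,T\in\LCT(\theta)$, and \propref{hom1} (2) gives $\cl(\mathbf{r}\circledast\mathbf{t})=\cl(\mathbf{r})\odot\cl(\mathbf{t})$ for all $\mathbf{r},\mathbf{t}\in\quotient{\E_\theta^*}{\equiv_\theta}$. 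Hence $\cl$ and $\lc$ are bijections preserving the identity and the binary operation, i.e., monoid isomorphisms between $(\quotient{\E_\theta^*}{\equiv_\theta},\circledast,[\epsilon])$ and $(\LCT(\theta),\odot,\mathbb{I})$. \qed
\end{proof}
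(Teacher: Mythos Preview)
Your proposal is correct and follows essentially the same approach as the paper: the paper likewise derives the theorem directly from \theoref{rep} (bijectivity), \propref{hom1} (preservation of the operation), and the observation that $\cl([\epsilon])=\mathbb{I}$ and $\lc(\mathbb{I})=[\epsilon]$. There is nothing to add.
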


Similarly, we can also define a composition operator for cd-graphs.
\begin{definition}[composition of cd-graphs] Let $D_{1}$ and $D_{2}$ be cd-graphs over an alphabet $\theta=(E,sim,ser)$, where $D_{i}=\seq{X_{i},\cau_{i},\wcau_{i},\lambda_{i}}$. The \emph{composition} $D_{1}\circledcirc D_{2}$ of $D_{1}$ and $D_{2}$ is defined as (a lp-isomorphic class of) a labeled so-structure $\seq{X,\cau,\wcau,\lambda}$ such that $X=X_{1}\uplus X_{2}$, $\lambda = \lambda_{1}\cup \lambda_{2}$, and \smallskip\\
\mbox{\hspace{1.5cm}}$\cau\;=\; \cau_{1}\cup \cau_{2}\cup\; \set{(\alpha,\beta)\in X_{1}\times X_{2}\mid (\lambda(\alpha),\lambda(\beta))\not\in ser}$\\
\mbox{\hspace{1.5cm}}$\wcau\;\, =\,\; \wcau_{1}\cup \wcau_{2}\cup\; \set{(\alpha,\beta)\in X_{1}\times X_{2}\mid (\lambda(\beta),\lambda(\alpha))\not\in ser}$ \EOD
\end{definition}

From this definition, it is straightforward to show the following propositions, which we will state without proofs.

\begin{proposition}
For every $D_1,D_2\in \DCD(\theta)$, $D_1\circledcirc D_2 \in \DCD(\theta)$. \qed 
\end{proposition}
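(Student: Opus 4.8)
The plan is to verify, in turn, the three conditions that \defref{comdag} places on $D_{1}\circledcirc D_{2}=\seq{X,\cau,\wcau,\lambda}$: that $\cau$ and $\wcau$ are irreflexive, that $(X,\cau,\wcau)^{\lozenge}$ is a so-structure, and that \textsf{CD1}--\textsf{CD4} hold. Irreflexivity is dispatched first and costs nothing: $\cau_{1},\cau_{2},\wcau_{1},\wcau_{2}$ are irreflexive because $D_{1},D_{2}$ are cd-graphs, and every pair added by the composition lies in $X_{1}\times X_{2}$ with $X_{1}\cap X_{2}=\emptyset$, so cannot be a self-loop. Likewise \textsf{CD1}--\textsf{CD4} I would prove by a routine case split on whether the two events lie both in $X_{1}$, both in $X_{2}$, or one in each. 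In the first two cases the statement is exactly the corresponding clause for $D_{1}$ or $D_{2}$; in the mixed case it is read off the definitions of $\cau$ and $\wcau$, using $ser\subseteq sim$ for \textsf{CD1} and the symmetry of $sim$ --- e.g.\ \textsf{CD2} with $\alpha\in X_{1},\beta\in X_{2},(\lambda(\alpha),\lambda(\beta))\notin ser$ yields $\alpha\cau\beta$ at once, whereas the opposite placement yields $\beta\wcau\alpha$; \textsf{CD3} and \textsf{CD4} are immediate.

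The one substantive point is that $(X,\cau,\wcau)^{\lozenge}$ is a so-structure, and here the plan is to avoid checking the irreflexivity criterion of \propref{so-cl}(3) by hand and instead reduce to the already-available closure of $\LCT(\theta)$ under $\_\odot\_$. Set $T_{i}=\dl(D_{i})=D_{i}^{\lozenge}$; by the lemma that $\dl$ maps $\DCD(\theta)$ into $\LCT(\theta)$, each $T_{i}$ is an lsos-comtrace, say $T_{i}=\seq{X_{i},\PO_{i},\sq_{i},\lambda_{i}}$, and since $\lozenge$-closure leaves the labeling unchanged, the $\lambda_{i}$ here agree with those of $D_{i}$. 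Since $\wcau_{i}$ is irreflexive, \propref{so-cl}(1) gives $D_{i}\subseteq D_{i}^{\lozenge}$, hence $\cau_{i}\subseteq\PO_{i}$ and $\wcau_{i}\subseteq\sq_{i}$. The key observation is that the cross edges added to $\cau$ (resp.\ $\wcau$) in the cd-graph composition are literally the sets added to $\PO_{\LG{1,2}}$ (resp.\ $\sq_{\LG{1,2}}$) in the lsos-comtrace composition; combined with the two inclusions above this gives $(X,\cau,\wcau)\subseteq(X,\PO_{\LG{1,2}},\sq_{\LG{1,2}})$. As $\sq_{\LG{1,2}}$ is irreflexive, a further use of \propref{so-cl}(1) yields $(X,\PO_{\LG{1,2}},\sq_{\LG{1,2}})\subseteq(X,\PO_{\LG{1,2}},\sq_{\LG{1,2}})^{\lozenge}=T_{1}\odot T_{2}$, and $T_{1}\odot T_{2}\in\LCT(\theta)$ by the preceding closure proposition, so its underlying relational structure is a so-structure. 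Hence $(X,\cau,\wcau)$ is contained in a so-structure, and \propref{so-cl}(5) then gives both that $(X,\cau,\wcau)^{\lozenge}$ is a so-structure and that it equals $\dl(D_{1}\circledcirc D_{2})\subseteq T_{1}\odot T_{2}$, a by-product worth recording.

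The main obstacle is not computational; it is simply making the bookkeeping observation that the two composition operators add exactly the same cross edges, so that the so-structure property passes through $\lozenge$-closure and \propref{so-cl} can be invoked. If a self-contained argument were wanted instead, one can verify directly that $(\cau\cup\wcau)^{*}\circ\cau\circ(\cau\cup\wcau)^{*}$ is irreflexive: every composition edge runs from $X_{1}$ to $X_{2}$, so any $(\cau\cup\wcau)$-cycle stays inside a single $X_{i}$, and irreflexivity there follows from $D_{i}^{\lozenge}$ being a so-structure via \propref{so-cl}(3).
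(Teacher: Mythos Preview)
The paper states this proposition without proof, remarking only that it is ``straightforward to show.'' Your argument is correct and complete: the verification of irreflexivity and of \textsf{CD1}--\textsf{CD4} by case split is routine, and your reduction of the so-structure condition to the already-established closure of $\LCT(\theta)$ under $\odot$ (via \propref{so-cl}(5) and the inclusion $(X,\cau,\wcau)\subseteq T_{1}\odot T_{2}$) is a clean way to avoid redoing the irreflexivity check by hand. The direct alternative you sketch at the end --- that every $(\cau\cup\wcau)$-cycle is confined to a single $X_{i}$ because all cross edges run from $X_{1}$ to $X_{2}$ --- is presumably what the authors had in mind as the ``straightforward'' verification, and it is marginally more self-contained since it does not appeal to the (also unproved) closure of $\LCT(\theta)$ under $\odot$.
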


\begin{proposition}  Let $\theta$ be a comtrace alphabet. Then  
\begin{enumerate}
\item For every $R,T \in \LCT(\theta)$, $\ld(R \odot T) = \ld(R) \circledcirc \ld(T).$
\item For every $D,E \in \DCD(\theta)$, $\dl(D \circledcirc E) = \dl(D) \odot \dl(E).$ \qed
\end{enumerate}
\label{prop:hom2}
\end{proposition}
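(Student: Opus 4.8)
The plan is to prove assertion (2) first, by unfolding the $\lozenge$-closure operator, and then to read off assertion (1) from (2) together with the 2$^{nd}$ Representation Theorem.

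For (2), fix cd-graphs $D=\seq{X_1,\cau_1,\wcau_1,\lambda_1}$ and $E=\seq{X_2,\cau_2,\wcau_2,\lambda_2}$ and choose representatives with $X_1\cap X_2=\emptyset$; note $D\circledcirc E\in\DCD(\theta)$ and $\dl(D)\odot\dl(E)\in\LCT(\theta)$ by the two closure propositions above, so both sides are defined. Write $\dl(D)=\seq{X_1,\PO_1,\sq_1,\lambda_1}=(X_1,\cau_1,\wcau_1)^\lozenge$ and likewise $\dl(E)$. Spelling out $\circledcirc$ and $\odot$, the equality $\dl(D\circledcirc E)=\dl(D)\odot\dl(E)$ reduces to
\[
(X,\cau,\wcau)^\lozenge=(X,\PO_{\langle1,2\rangle},\sq_{\langle1,2\rangle})^\lozenge ,
\]
where $X=X_1\uplus X_2$, $(\cau,\wcau)$ are the relations of $D\circledcirc E$, and $(\PO_{\langle1,2\rangle},\sq_{\langle1,2\rangle})$ are the relations appearing inside $\dl(D)\odot\dl(E)$. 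Both sides are so-structures, being respectively a value of $\dl$ and the value of $\odot$.

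I would prove the two inclusions separately. For ``$\subseteq$'': the relations of a cd-graph are irreflexive, so $\cau_i\subseteq\PO_i$ and $\wcau_i\subseteq\sq_i$ by \propref{so-cl}(1), and the $X_1\times X_2$-components of $(\cau,\wcau)$ coincide verbatim with those of $(\PO_{\langle1,2\rangle},\sq_{\langle1,2\rangle})$; hence $(X,\cau,\wcau)\subseteq(X,\PO_{\langle1,2\rangle},\sq_{\langle1,2\rangle})\subseteq(X,\PO_{\langle1,2\rangle},\sq_{\langle1,2\rangle})^\lozenge$ (the last step by \propref{so-cl}(1), since $\sq_{\langle1,2\rangle}$ is irreflexive), and \propref{so-cl}(5) applied to the so-structure on the right yields the inclusion of $\lozenge$-closures. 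For ``$\supseteq$'', by \propref{so-cl}(5) it is enough to check $(X,\PO_{\langle1,2\rangle},\sq_{\langle1,2\rangle})\subseteq(X,\cau,\wcau)^\lozenge$: the $X_1\times X_2$-components land in $\cau$, hence in $(X,\cau,\wcau)^\lozenge$, while for $\PO_1$ (and symmetrically $\PO_2,\sq_1,\sq_2$) one takes $\alpha\PO_1\beta$, unfolds it via \defref{SO-CL} into a $(\cau_1\cup\wcau_1)$-path through $X_1$ containing a $\cau_1$-step, and observes that since $\cau_1\subseteq\cau$ and $\wcau_1\subseteq\wcau$ the same path is a $\lozenge$-closure witness in $X$, so $(\alpha,\beta)$ lies in the causality relation of $(X,\cau,\wcau)^\lozenge$.

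Granting (2), assertion (1) is then immediate. For $R,T\in\LCT(\theta)$, both $\ld(R\odot T)$ and $\ld(R)\circledcirc\ld(T)$ are cd-graphs; applying $\dl$ and using $\dl\circ\ld=\mathrm{id}$ (\theoref{deprep}(2)) together with (2), one gets $\dl(\ld(R\odot T))=R\odot T$ and $\dl(\ld(R)\circledcirc\ld(T))=\dl(\ld(R))\odot\dl(\ld(T))=R\odot T$, so by injectivity of $\dl$ (\lemref{dlinj}) the two cd-graphs coincide. I expect the only real friction to be the bookkeeping in (2): fixing disjoint-union representatives so that the two compositions are literally relational structures on one carrier $X$, and the elementary but fiddly verification that the $\lozenge$-closure of a relational structure supported on $X_i$ sits inside that of the full structure on $X$. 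This is pure path-chasing through \defref{SO-CL} and \propref{so-cl}, which is presumably why the paper states the proposition without proof.
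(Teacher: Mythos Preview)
Your argument is correct. The paper actually omits the proof of this proposition entirely, declaring it ``straightforward'' after the analogous \propref{hom1}, so there is no paper proof to compare against. Your route---establishing (2) by a direct $\lozenge$-closure computation and then reading (1) off via \theoref{deprep} and \lemref{dlinj}---is clean and self-contained. A natural alternative, closer in spirit to how the paper handles \propref{hom1}, would be to prove (1) first by unfolding $\ld=\cd\circ\lc$ and invoking \propref{hom1}(1) together with the fact that $\cd$ respects concatenation (implicit in Definitions~\ref{def:s2inv} and~\ref{def:drepmaps}), and then derive (2) from (1) via \theoref{deprep}; either order works, and your choice avoids having to argue separately that $\cd$ is a homomorphism.
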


Putting the two preceding propositions  and \theoref{deprep} together, we conclude:
\begin{theorem}
The mappings $\ld$ and $\dl$ are monoid isomorphisms between two monoids $(\LCT(\theta),\odot, \mathbb{I})$ and $(\DCD(\theta),\circledcirc, \mathbb{I})$. \qed
\end{theorem}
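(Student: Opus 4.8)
The plan is to obtain this theorem as a short corollary of the two facts already in place: the bijectivity of $\ld$ and $\dl$ from \theoref{deprep}, and their compatibility with the composition operators from \propref{hom2}. So the only genuinely new content is checking that the identity elements match up and observing that a bijective, operation-preserving map automatically transports a monoid structure.

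First I would verify $\dl(\mathbb{I}) = \mathbb{I}$: since $\mathbb{I} = [\emptyset,\emptyset,\emptyset,\emptyset]$ and $\lozenge$-closure changes neither the carrier set nor the (empty) relations of the empty structure, $\dl(\mathbb{I}) = \mathbb{I}^{\lozenge} = \mathbb{I}$. Then I would check $\ld(\mathbb{I}) = \mathbb{I}$: recalling $\lc(\mathbb{I}) = [\epsilon]$ and unfolding \defref{drepmaps}, $\ld(\mathbb{I}) = \cd\bigl(\lc(\mathbb{I})\bigr) = \cd([\epsilon]) = (\Sigma_{[\epsilon]},\PO_{\epsilon},\sq_{\epsilon},\ell) = (\emptyset,\emptyset,\emptyset,\emptyset) = \mathbb{I}$, since $\epsilon$ has no event occurrences.

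Next, using \propref{hom2}(2), namely $\dl(D\circledcirc E) = \dl(D)\odot\dl(E)$, together with $\dl(\mathbb{I}) = \mathbb{I}$, the bijectivity of $\dl$, and the already-established fact that $(\LCT(\theta),\odot,\mathbb{I})$ is a monoid, I would conclude that $(\DCD(\theta),\circledcirc,\mathbb{I})$ also satisfies the monoid axioms. Concretely: for associativity, apply $\dl$ to both $(D_1\circledcirc D_2)\circledcirc D_3$ and $D_1\circledcirc(D_2\circledcirc D_3)$, push $\dl$ through by \propref{hom2}(2) so that both become $\bigl(\dl(D_1)\odot\dl(D_2)\bigr)\odot\dl(D_3) = \dl(D_1)\odot\bigl(\dl(D_2)\odot\dl(D_3)\bigr)$, and then apply the inverse $\ld$; for the unit laws, $\dl(D\circledcirc\mathbb{I}) = \dl(D)\odot\mathbb{I} = \dl(D)$ and symmetrically on the left, so $D\circledcirc\mathbb{I} = \mathbb{I}\circledcirc D = D$. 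Hence $\dl$ is a monoid homomorphism; being bijective by \theoref{deprep}, it is a monoid isomorphism, and its inverse $\ld$ is one too --- this last point can also be read off directly from \propref{hom2}(1) together with $\ld(\mathbb{I}) = \mathbb{I}$.

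I do not expect any real obstacle: the statement is pure assembly, since all the substantive work --- the bijectivity and the homomorphism identities --- has already been carried out in \theoref{deprep} and \propref{hom2}. The only point requiring a moment's care is the verification that $\mathbb{I}$ is fixed by both $\dl$ and $\ld$, which is immediate once the definitions are unfolded.
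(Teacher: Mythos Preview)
Your proposal is correct and follows exactly the approach the paper intends: the theorem is stated with a bare \qed\ after the sentence ``Putting the two preceding propositions and \theoref{deprep} together, we conclude,'' so the paper treats it as immediate assembly from \theoref{deprep} and \propref{hom2}. Your write-up simply spells out the routine details (identity preservation and transport of the monoid axioms along a bijective homomorphism) that the paper leaves implicit.
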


\section{Conclusion}

The simple yet useful construction we used extensively in this paper is to build a quotient so-structure modulo the $\sq$-cycle equivalence relation. Intuitively, each $\sq$-cycle equivalence class consists of all the events that must be executed simultaneously with one another and hence can be seen as a single ``composite event''.  The resulting quotient so-structure  is technically  easier to handle since both relations of the quotient so-structure are acyclic. From this construction, we were able to give a labeled so-structure definition for comtraces similar to the labeled poset definition for traces. This quotient construction also explicitly  reveals the following connection: a step on a step sequence $s$ is not serializable with respect to  the relation $ser$ of a comtrace alphabet if and only if it corresponds to a $\sq$-cycle equivalence class of the lsos-comtrace representing the comtrace $[s]$ (cf. \propref{covlsos}). 

We have also formally shown that the quotient monoid of comtraces, the monoid of lsos-comtraces and the monoid of cd-graphs \emph{over the same comtrace alphabet} are indeed isomorphic by establishing monoid isomorphisms between them. These three models are  formal linguistic,  order-theoretic, and  graph-theoretic respectively, which allows us to apply a variety of tools and techniques. 

An immediate future task is to develop a framework similar to the one in this paper for \emph{generalized comtraces}, proposed and developed in \cite{JL08,Le,JL09}. Generalized comtraces extend comtraces with the ability to model events that can be executed \emph{earlier than or later than but never simultaneously}. Another direction is to define and analyze infinite comtraces (and generalized comtraces) in a spirit similar  to the works on infinite traces, e.g., \cite{Gas90,Die91}. It is also promising to use infinite lsos-comtraces and cd-graphs  to develop logics for comtraces similarly to what have been done for traces (cf. \cite{TW02,DHK07}). 

\subsubsection*{Acknowledgments.} I am grateful to Prof. Ryszard Janicki for introducing me  comtrace theory. I also thank the Mathematics Institute of Warsaw University and the Theoretical Computer Science Group of Jagiellonian University for their supports during my visits. It was during these visits that the ideas from this paper emerge. This work is financially  supported by the Ontario Graduate Scholarship and the Natural Sciences and Engineering Research Council of Canada. The anonymous referees are thanked for their valuable comments that help improving the readability of this paper.

\appendix
\section{Proof of \propref{covlsos}}
\begin{proof}\textbf{1.} ($\Rightarrow$):  Since $[\alpha]=[\beta]$,  we know that $\alpha=\beta$ or $(\alpha\sq \beta\wedge \beta \sq \alpha)$. The former case is trivial. For the latter case, by \theoref{SzpStrat},  we have
$\forall \lhd \in ext(S).\; \alpha \lhd^{\frown} \beta$ and $\forall \lhd \in ext(S).\; \beta \lhd^{\frown} \alpha$. 
But this implies that $\forall \lhd \in ext(S).\; \alpha \frown_{\lhd} \beta$.

($\Leftarrow$): The case when $\alpha=\beta$ is trivial. Assume that  $\alpha\not=\beta$ and $\forall \lhd \in ext(S).\; \alpha \frown_{\lhd} \beta$. Thus, by \theoref{SzpStrat}, $\alpha \sq \beta$ and $\beta \sq \alpha$. But this means $\alpha$ and $\beta$ belong to the same equivalence class.

\textbf{2. } Suppose for a contradiction that all $\lhd \in ext(S)$ cannot be written in the form of $\Omega_{\lhd} = u[\alpha]v$. This implies that there exists some $\gamma\in X\setminus [\alpha]$ such that for all $\lhd \in ext(S)$, $\alpha \frown_\lhd \gamma$. But by \theoref{SzpStrat}, this yields $\alpha \sq \gamma$ and $\gamma \sq \alpha$, contradicting with $\gamma\not\in [\alpha]$.

\textbf{3. } Assume $[\alpha] \imm{\hat{\sq}} [\beta]$. Suppose for a contradiction that there does not exist $\lhd \in ext(S)$ such that $\Omega_{\lhd} = u[\alpha][\beta]v$ for some step sequences $u$ and $v$. Then, by  \theoref{SzpStrat},  there must exist some $\gamma \in X\setminus ([\alpha]\cup[\beta])$, such that $\alpha \sq \gamma \sq \beta$. Since $\gamma \not\in [\alpha]\cup[\beta]$, this yields $[\alpha]\sq [\gamma] \sq [\beta]$, which  contradicts that $[\alpha] \imm{\hat{\sq}} [\beta]$.
\qed
\end{proof}

\section{Proof of \propref{validss}}
\begin{proof}\textbf{1. } Assume $\alpha,\beta \in A_i$ and $\alpha\not=\beta$. Thus, $\alpha\frown_{\lhd}\beta$.  Thus, by \corref{SzpStrat} (2), we have $\alpha \frown_\PO \beta$. Hence, by \textsf{LC5} of \defref{lcomtrace},  $(\lambda(\alpha),\lambda(\beta))\in sim$. Since $sim$ is irreflexive, this also shows that any two distinct $\alpha$ and $\beta$ in $A_i$ have different labels. Thus, $|A_{i}|=|\lambda(A_{i})|$ for all $i$.

\textbf{2. } From the proof of \textbf{1.}, we know that  $\alpha,\beta \in A_i$ and $\alpha\not=\beta$ implies $(\lambda(\alpha),\lambda(\beta))\in sim$. Thus, $\lambda(A_{i})\in \E_{\theta}$ for all $i$.
\qed
\end{proof}

\section{Proof of \propref{uniquexi}}
\begin{proof} Observe that from \propref{validss}, we have $\Sigma_{v}=\Sigma_{w}$. It remains to show that  $\xi_{v}(\alpha) = \xi_{w}(\alpha)$ for all $\alpha\in \Sigma_{v}$. 

Suppose for a contradiction that $\xi_{v}(\alpha) \not= \xi_{w}(\alpha)$ for some $\alpha\in \Sigma_{v}$. From the definition above, there are two distinct elements $x,y\in X$, such that $\xi_{v}(\alpha)= x$ and $\xi_{w}(\alpha)= y$ and $\lambda(x)= \lambda(y)=\ell(\alpha)$. Since $ser$ is irreflexive, $(\lambda(x),\lambda(y))\not \in ser \cup ser^{-1}$. Thus, by \textsf{LC4} of \defref{lcomtrace}, $x\PO y$ or $y \PO x$. Without loss of generality, we assume $x\PO y$ and that $\alpha = a^{(i)}$ for some event $a\in E$. 

Again by \textsf{LC4} of \defref{lcomtrace}, we know that elements  having the same label are totally ordered by $\PO$. Thus, if $k$ is the number of elements in $X$ labeled by $a$, then we have $\xi_{w}(a^{(1)})\PO \xi_{w}(a^{(2)})\PO \ldots  \PO \xi_{w}(a^{(k)})$ and  $\xi_{v}(a^{(1)})\PO \xi_{v}(a^{(2)})\PO \ldots  \PO \xi_{v}(a^{(k)})$. But then  $\xi_{v}(a^{(i)})= x$ implies that $|\set{z\in X\mid z\PO y \;\wedge\;  \lambda(z) = a}| \ge i$, while $\xi_{w}(a^{(i)})= y$ implies that $|\set{z\in X\mid z\PO y \;\wedge\;  \lambda(z) = a}| < i$, which is absurd. \qed
\end{proof}

\section{Proof of \lemref{l3}}
\begin{proof} Let $T=\bigl(\Sigma_{[u]},\PO_{[u]},\sq_{[u]},l \bigr)$. From \theoref{com2sos}, $T$ is a labeled so-structure. It only remains to show that $T$ satisfies conditions \textsf{LC1}--\textsf{LC5} of \defref{lcomtrace}. 

\textsf{LC1}: Assume  $[\alpha] (\imm{\hat{\sq}}\cap \hat{\PO}) [\beta]$ and suppose for a contradiction that $\lambda([\alpha])\times \lambda([\beta]) \subseteq ser$. Then from \propref{covlsos} (3), there exists $\lhd \in ext(T)$ such that $\Omega_{\lhd} = v[\alpha][\beta]w$. From \theoref{com2sos}, since we have $\lhd \in  \bset{\lhd_{s}\mid s \in [u]} = ext(S_{[u]})$, it follows that $\map(l,v[\alpha][\beta]w) \in [u]$. But  $\lambda([\alpha])\times \lambda([\beta]) \subseteq ser$ implies $\map(\ell,u[\alpha]\cup [\beta]v) \in [u]$. Hence,  $u[\alpha]\cup [\beta]v$ is also a stratified extension of $T$, which contradicts that $[\alpha] \hat{\PO} [\beta]$. Using a similar argument, we can show \textsf{LC2} using \propref{covlsos} (1,4) and \textsf{LC3} using \propref{covlsos} (1,2).  

\textsf{LC4}: Follows from Definitions \ref{def:s2inv} and \ref{def:s2sos} and the $\lozenge$-closure definition.

\textsf{LC5}: Since $\alpha \frown_{\PO}\beta$,  it follows from \corref{SzpStrat} that there exists $\lhd \in ext(T)$ where $\alpha \frown_\lhd \beta$. Since $\bset{\lhd_{s}\mid s \in [u]} = ext(S_{[u]})$, there exists a sequence $s\in [u]$ such that $s = \map(f,\Omega_\lhd)$. This implies $\alpha$ and $\beta$ belong to the same step in  $\h{s}$. Thus, we have $(\lambda(\alpha),\lambda(\beta))\in sim$. \qed
\end{proof}


\begin{thebibliography}{19}


\bibitem{CGP}
E. Clarke, O. Grumberg and D. Peled,  {\em Model Checking}, MIT Press, Cambridge, 1999.

\bibitem{CLR}
T. H. Cormen, C. E. Leiserson and R. L. Rivest, \emph{Introduction
to Algorithms}, Second Edition, MIT Press, 2001.


\bibitem{DP02}
B. A. Davey and H. A. Priestley, \newblock {\em Introduction to Lattices and Order}, \newblock Cambridge University Press 2002.



\bibitem{Die91}
V. Diekert, On the Concatenation of Infinite Traces, Proc. of STACS, \emph{LNCS} 480 (1991), 105--117.


\bibitem{DR}
V. Diekert and G. Rozenberg (eds.),
\newblock {\em The Book of Traces}.
\newblock World Scientific 1995.

\bibitem{DM}
V. Diekert and Y M\'etivier,
\newblock {\em Partial Commutation and Traces}, {\em Handbook of Formal Languages, Vol. 3: Beyond Words},  pp 457 - 533, Springer 1997. 

\bibitem{DHK07}
V. Diekert, M. Horsch, M. Kufleitner, On First-Order Fragments for Mazurkiewicz Traces, \emph{Fundam. Inform.} 80(1-3): 1-29, 2007.


\bibitem{EH}
J. Esparza and K. Heljanko,
\emph{Unfoldings -- A Partial-Order Approach to Model Checking},
Springer 2008. 

\bibitem{FM06}
A. Farzan and  P. Madhusudan,   Causal Dataflow Analysis for Concurrent Programs,  Proc. of  CAV, {\em LNCS} 4144 (2006): 315--328.

\bibitem{FM07}
A. Farzan and P. Madhusudan,   Causal Atomicity,  Proc. of TACAS 2007, {\em LNCS} 4424 (2007),  102--116.


\bibitem{Fis}
P. C. Fishburn, {\em Interval Orders and Interval Graphs},
J. Wiley 1985, New York.

\bibitem{GP}
H. Gaifman and V. Pratt, Partial Order Models of Concurrency and the Computation of Function,
{\em Proc. of LICS'87}, pp. 72--85.

\bibitem{Gas90}
P. Gastin, Infinite Traces, Proc. of Semantics of Systems of Concurrent Processes, \emph{LNCS}  469 (1990), 277--308.


\bibitem{GGH09}
T. Gazagnaire, B. Genest, L. H\'elou\"et, P. S. Thiagarajan, S. Yang, Causal Message Sequence Charts, \emph{Theor. Comput. Sci.} 410(41): 4094--4110, 2009.





\bibitem{J0}
R. Janicki,
\newblock Relational Structures Model of Concurrency.
\newblock {\em Acta Informatica}, 45(4): 279--320, 2008.

\bibitem{JK0}
R. Janicki and M. Koutny,
Invariants and Paradigms of Concurrency Theory, Proc. of \emph{PARLE} '91,
{\em LNCS} 506, Springer 1991, pp. 59--74.

\bibitem{J4}
R. Janicki and M. Koutny,
\newblock Structure of Concurrency,
\newblock {\em Theoretical Computer Science}, 112(1): 5--52, 1993.

\bibitem{JK95}
R. Janicki and M. Koutny,
 Semantics of Inhibitor Nets,
 {\em Information and Computation}, 123(1): 1--16, 1995.

\bibitem{JK97}
R. Janicki and M. Koutny, Fundamentals of Modelling Concurrency Using Discrete
Relational Structures, {\em Acta Informatica}, 34: 367--388, 1997.

\bibitem{JK99}
R. Janicki and M. Koutny, On Causality Semantics of Nets with Priorities,
{\em Fundamenta Informaticae} 34: 222--255, 1999.

\bibitem{JL08}
R. Janicki and D. T. M. L\^e, Modelling Concurrency with Quotient Monoids, Proc of PETRI NETS 2008, {\em LNCS} 5062, Springer 2008, pp. 251--269.

\bibitem{JL09}
R. Janicki and D. T. M. L\^e, Modelling Concurrency with Comtraces and Generalized Comtraces, submitted in 2009. Available at: \href{http://arxiv.org/abs/0907.1722}{http://arxiv.org/abs/0907.1722}


\bibitem{JLM08}
G. Juh\'{a}s, R. Lorenz, S. Mauser, Causal Semantics of Algebraic Petri Nets distinguishing Concurrency and Synchronicity, {\em Fundamenta Informatica} 86(3): 255-298, 2008.

\bibitem{JLM06}
G. Juh\'{a}s, R. Lorenz, S. Mauser, Synchronous + Concurrent + Sequential = Earlier Than + Not Later Than, Proc. of ACSD'06, Turku, Finland 2006, pp. 261-272,
IEEE Press.



\bibitem{KK}
H. C. M. Kleijn and M. Koutny, Process Semantics of General Inhibitor Nets,
{\em Information and Computation}, 190:18--69, 2004.

\bibitem{KK08}
J. Kleijn and M. Koutny, Formal Languages and Concurrent Behaviour,
{\em Studies in Computational Intelligence}, 113:125-182, 2008.

\bibitem{Le}
D. T. M. L\^e, Studies in Comtrace Monoids, Master Thesis, Dept. of Computing and Software, McMaster University, Canada, August 2008.


\bibitem{Ma1}
A.~Mazurkiewicz,
\newblock Concurrent Program Schemes and Their Interpretation, TR DAIMI PB-78, Comp. Science Depart., Aarhus University, 1977.




\bibitem{Pra}
V. Pratt,
\newblock Modeling concurrency with partial orders,
{\em International Journal of Parallel Programming}, 15(1):33--71, 1986.


\bibitem{Szp}
E.~Szpilrajn, Sur l'extension de l'ordre partiel, {\em Fund. Mathematicae} 16, 386--389, 1930.


\bibitem{TW02}
 P. S. Thiagarajan and I. Walukiewicz, An expressively complete linear time temporal logic for Mazurkiewicz traces, \emph{Inf. Comput.} 179(2): 230--249, 2002.




\end{thebibliography}
\end{document}